\definecolor{dgreen}{rgb}{0.0, 0.5, 0.0}
\definecolor{amber}{rgb}{1.0, 0.75, 0.0}
\tikzset{faded/.style={gray,very thin}}
\tikzset{vertex/.style={draw,circle,minimum size=10pt,inner sep=0pt}}
\tikzset{novertex/.style={circle,minimum size=10pt,inner sep=0pt}}
\tikzset{blackvertex/.style={draw,circle,minimum size=10pt,inner sep=0pt, fill=black}}
\tikzset{redvertex/.style={draw,circle,minimum size=10pt,inner sep=0pt, fill=red}}
\tikzset{redvertexfaded/.style={draw,circle,faded,minimum size=10pt,inner sep=0pt, fill=red!50}}
\tikzset{greenvertex/.style={draw,circle,minimum size=10pt,inner sep=0pt, fill=green}}
\tikzset{greenvertexfaded/.style={draw,circle,faded,minimum size=10pt,inner sep=0pt, fill=green!50}}
\tikzset{bluevertex/.style={draw,circle,minimum size=10pt,inner sep=0pt, fill=blue}}
\tikzset{bluevertexfaded/.style={draw,circle,faded,minimum size=10pt,inner sep=0pt, fill=blue!50}}
\tikzset{yellowvertex/.style={draw,circle,minimum size=10pt,inner sep=0pt, fill=yellow}}
\tikzset{yellowvertexfaded/.style={draw,circle,faded,minimum size=10pt,inner sep=0pt, fill=yellow!50}}
\tikzset{boxvertex/.style={draw, rectangle, minimum width=1cm, minimum height=.5cm}}
\tikzset{crossedboxvertex/.style={draw, rectangle, minimum width=1cm, minimum height=.5cm,pattern=north east lines}}
\tikzset{edge/.style = {->,> = latex'}}
\protected\def\VerticalText#1{\leavevmode\bgroup\vbox\bgroup\xvvv#1\relax}
\def\xvvv{\afterassignment\xxvvv\let\tmp= }
\def\xxvvv{%
\ifx\tmp\@sptoken\egroup\ \vbox\bgroup\let\next\xvvv
\else\ifx\tmp\relax\egroup\egroup\let\next\relax
\else
\hbox to 1.1em{\hfill\tmp\hfill}
\let\next\xvvv\fi\fi
\next}
\tikzset{
    side by side/.style 2 args={
    line width=2pt,
    #1,
    postaction={
        clip,postaction={draw,#2}
        }
    }
}
\newcommand{\TOBs}{{\sc tob}s\xspace}
\newcommand{\TSSs}{{\sc toss}s\xspace}
\newcommand{\TOB}{{\sc tob}\xspace}
\newcommand{\TSS}{{\sc toss}\xspace}
\newcommand{\TISS}{{\sc tiss}\xspace}
\newcommand{\TIB}{{\sc tib}\xspace}
\newcommand{\TIBs}{{\sc tib}s\xspace}
\newcommand{\MT}{{\textsc{mt}}}
\newcommand{\EA}{{\textsc{ea}}}
\newcommand{\ST}{{\textsc{st}}}
\newcommand{\FT}{{\textsc{ft}}}
\newcommand{\LD}{{\textsc{ld}}}
\newcommand{\MW}{{\textsc{mw}}}
\newcommand{\head}{{\rm h}}
\newcommand{\tail}{{\rm t}}
\newcommand{\len}{\ell}
\newcommand{\wait}{\text{wait}}
\newcommand{\weight}{\text{tt}}
\newcommand{\dur}{\text{dur}}
\newcommand{\myparagraph}[1]{\noindent\textbf{#1}}
\newcommand{\mc}{\mathcal}
\newcommand{\ba}{\Bar}
\newcommand{\cir}{\circlearrowleft}
\newcommand{\sub}[1]{_{_#1}}
\DeclareMathOperator*{\argmin}{arg\,min}
\newtheorem{theorem}{Theorem}[section]
\newtheorem{lemma}{Lemma}[section]
\newtheorem{proposition}{Proposition}[section]
\newtheorem{corollary}{Corollary}[section]
\theoremstyle{remark}
\newtheorem{remark}{Remark}[section]
\theoremstyle{definition}
\newtheorem{definition}{Definition}[section]
\newtheorem{problem}{Problem}[section]
\newtheorem{claim}{Claim}[section]
\begin{document}
\title{On Computing Optimal Temporal Branchings\\ and Spanning Subgraphs\thanks{Daniela Bubboloni is partially supported by GNSAGA of INdAM (Italy). Daniela Bubboloni, Costanza Catalano and Andrea Marino are partially supported by Italian PNRR CN4 Centro Nazionale per la Mobilità Sostenibile, NextGeneration EU - CUP,  B13C22001000001. Ana Silva is partially supported by: FUNCAP MLC-0191-00056.01.00/22 and PNE-0112-00061.01.00/16, CNPq 303803/2020-7 (Brazil).}}
%
%

\author[1]{Daniela Bubboloni}
\author[1]{Costanza Catalano}
\author[2]{Andrea Marino}
\author[3]{Ana Silva}
\small
\affil[1]{\small Department of Mathematics and Computer Sciences, 
University of Florence, Florence, Italy. 
daniela.bubboloni@unifi.it, costanza.catalano@unifi.it}
\affil[2]{Department of Statistics, Computer Sciences, Applications, University of Florence, Florence, Italy.
andrea.marino@unifi.it}
\affil[3]{Departamento de Matematica, Universidade Federal Do Ceara Fortaleza, Brazil. anasilva@mat.ufc.br}
\maketitle  
\normalsize
\begin{abstract}
\noindent 
In this work we extend the concept of out/in-branchings spanning the vertices of a digraph (also called directed spanning trees) to temporal graphs, which are digraphs where arcs are available only at prescribed times. While the literature has focused on minimum weight/earliest arrival time Temporal Out-Branchings (\TOB), we solve the problem for other optimization criteria. In particular, we define five different types of \TOBs based on the optimization of the travel duration (\FT-\TOB), of the departure time (\LD-\TOB), of the number of transfers (\MT-\TOB), of the total waiting time (\MW-\TOB), and of the travelling time (\ST-\TOB).
For $\textsc{d}\in \{\text{\LD,\MT,\ST}\}$, we provide necessary and sufficient conditions for the existence of a spanning $\textsc{d}$-\TOB; when it does not exist, we characterize the maximum vertex set that a $\textsc{d}$-\TOB can span. Moreover, we provide a log linear algorithm for computing such branchings.
For $\textsc{d}\in \{\text{\FT,\MW}\}$, we prove that deciding the existence of a spanning $\textsc{d}$-\TOB is \NP-complete; we also show that the same results hold for optimal temporal in-branchings.
Finally, we investigate the related problem of computing a spanning temporal subgraph with the minimum number of arcs and optimizing a chosen criterion $\textsc{d}$.
This problem turns out to be \NP-hard for any $\textsc{d}$.
The hardness results are quite surprising, as computing optimal paths between nodes can always be done in polynomial time.

\end{abstract}
\textbf{Keywords:} Temporal graph, temporal network, optimal branching, temporal branching, optimal temporal walk, temporal spanning subgraph.


\section{Introduction}
\label{sec:intro}
A temporal graph is a graph where arcs are active only at certain time instants, with a possible \emph{travelling time} indicating the time it takes to traverse an arc. There is not a unified terminology in the literature to call these objects, as they are also known as stream graphs \cite{latapy2018}, dynamic networks \cite{Ranshous2015}, temporal networks \cite{KEMPE2002}, and time-varying graphs \cite{Kuwata2009} to name a few. Important categories of temporal graphs are those of transport networks, where arcs are labeled by the times of bus/train/flight departures and arrivals \cite{dibbelt2018}, and communication networks as phone calls and emails networks, where each arc represents the interaction between two parties \cite{Tang2010}. Temporal graphs find application in a vast number of fields such as neural, ecological and social networks, distributed computing and epidemiology. We refer the reader to \cite{HOLME2012} for a survey on temporal graphs and their applications.
Walks in temporal graphs must respect the flow of time; for instance, in a public transports network a route can happen only at increasing time instants, since a person cannot catch a bus that already left. 
As a consequence, fundamental properties of static graphs, as the fact that concatenation of walks is a walk, do not necessarily hold in temporal graphs. 
This often makes temporal graphs much harder to handle: e.g.\ computing strongly connected components takes linear time in a static graph, but it is an \NP-complete problem in a temporal graph \cite{gabow1986}; the same happens to Eulerian walks \cite{Marino2023}.  
At the same time, it is often the case that classic theorems of graph theory may or may not hold for temporal graphs depending on how some concepts are translated into the temporal framework: this applies for example to Edmonds’ result on branchings \cite{marino2021,KEMPE2002} and Menger’s Theorem \cite{AKRIDA201946,KEMPE2002,Mertzios2019}. 

Figure \ref{fig:ex_temp_graph} shows an example of temporal graph. Informally speaking, a temporal graph is modeled as a multidigraph\footnote{A directed graph where multiple arcs having the same endpoints are allowed.} with no loops, such that each arc is labeled by a couple $(t_s,t_a)$, $t_s\leq t_a$, indicating, respectively, the starting time at which we can traverse the arc from the tail vertex and the arrival time at the head vertex. A temporal walk is a walk in the multidigraph where each arc of the walk must have an arrival time smaller than or equal to the starting time of the subsequent arc in the walk (for formal definitions see Section \ref{sec:preliminaries}).\\
\myparagraph{Shortest paths in temporal graphs.}
The notion of shortest path between two vertices $u$ and $v$ in static graphs can be 
generalized to temporal graphs in different ways, based on the chosen optimization criteria. For example, we may want a path from $u$ to $v$ that arrives the earliest possible (Earliest Arrival time, denoted by $\EA(u,v)$), that minimize the overall duration of the trip (Fastest Time, denoted by $\FT(u,v)$), that leaves the latest possible (Latest Departure time, denoted by $\LD(u,v)$), that takes the least number of arcs (Minimum Transfers, denoted by $\MT(u,v)$), that minimize the waiting time in the intermediate nodes (Minimal Waiting time, denoted by $\MW(u,v)$), or that minimize the sum of the traversing times of the arcs (Shortest Travelling time, denoted by $\ST(u,v)$)\footnote{These concepts are widely used in the literature (see \cite{Bentert2020,Brunelli2023,dibbelt2018,huang2015,wu2014,wu2016}), although they may appear with different names.}. Given $\textsc{d}\in \{\EA,\FT, \LD, \MT,\MW, \ST\}$, a path meeting the criteria $\textsc{d}$ for a pair of vertices $u$ and $v$ is said to realize $\textsc{d}(u,v)$.
For formal definitions see again Section \ref{sec:preliminaries}; Figure \ref{fig:ex_temp_graph} shows examples of such paths. Notice that the paths realizing $\textsc{d}(u,v)$ may not be unique. In fact $\MW(1,3)$ is realized both by the yellow and the red walk, while $\MT(1,3)$ is realized both by the blue and the green walk.
Each distance is computable in polynomial-time, as reported in Table~\ref{table:shortest_path}.  

\begin{table}[t]
\caption{Computational time of single source shortest paths in a temporal graph with $n$ vertices and $m$ arcs for the different criteria.} \label{table:shortest_path}
\label{tab:path}
\centering
\begin{tabular}{|c|c|c|c|c|c|}
    \hline
 \EA & \FT & \LD & \MT  & \MW & \ST\\ 
 \hline
$O(m)$  & $O(m\log n)$&    $O(m\log m)$ &  $O(m\log n)$   & $O(m\log m)$   & $O(m\log m)$
\\
\cite{huang2015,wu2014}& \cite{Brunelli2023}& \cite{Bentert2020} & \cite{Brunelli2023}& \cite{Bentert2020, Brunelli2023} & \cite{Bentert2020,wu2014,wu2016}  \\
    \hline
\end{tabular}
\end{table}

\myparagraph{Optimal temporal branchings.} In static directed graphs, spanning branchings are well-studied objects; they represent a minimal set of arcs that connect a special vertex, called the root, to any other vertex (out-branching), or any vertex to the root (in-branching). They are also called arborescences or spanning directed trees, since their underlying structure is a tree. 
Spanning branchings representing shortest distances are also well-studied. Their existence is guaranteed simply by the reachability of any vertex from/to the root and they can be computed in $O(m \log m)$ time by 
Dijkstra's algorithm \cite{Cormen2001}. Branchings are, to cite a few, important for engineering applications as they represent the cheapest or shortest way to reach all vertices \cite{GAO2019,li2004}, and in social networks in relation to information dissemination and spreading \cite{Amoruso2020,yue2020}.
We can similarly define spanning branchings in temporal graphs, here called spanning \TOBs (Temporal Out-Branchings) and \TIBs (Temporal In-Branchings), representing a minimal set of temporal arcs that temporally connect any vertex from/to the root. Equivalently, a \TOB is a temporal graph that has a branching as underlying graph and each vertex is temporally reachable from the root (see Section \ref{sec:tobTIB} for formal definitions and results).
This definition of \TOB has already appeared in the literature \cite{huang2015,KAMIYAMA2015321}.\footnote{We make notice that \cite{KAMIYAMA2015321} proposes it in a simplified context, while the conditions listed in the definition of \cite{huang2015} are not all necessary to describe the concept (see Lemma \ref{lem:tob}).}
In the context of urban mobility, suppose that a concert has just finished in a remote location $x$, and we want to guarantee that every person can go back home via public transports, while optimizing the number of bus/train rides. This problem can be solved by a spanning \TOB with root $x$. We also may ask this \TOB to arrive the earliest possible in every point of interest of the city, or the trips to last the shortest possible, 
or to optimize any of the distances that we have introduced before.
It is then natural to extend the notion of shortest distance branchings to the temporal framework. 
For each distance $\textsc{d}$, we call spanning $\textsc{d}$-\TOB a spanning \TOB such that, for every vertex $v$, the walk from the root to $v$ realizes the chosen distance $\textsc{d}(r,v)$. Figure \ref{fig:ex_Dtob} shows, for each distance $\textsc{d}$, a spanning $\textsc{d}$-\TOB with root $1$ of the temporal graph in Figure \ref{fig:ex_temp_graph}.\footnote{In general $\textsc{d}$-\TOBs are not unique. For instance, another spanning \MT-\TOB can be obtained from the one in Figure \ref{fig:MT_TOB} by adding the arc $(9,10)$ from vertex $2$ to vertex $3$ and by deleting the arc $(8,9)$ from vertex $5$ to vertex $3$.}
We define similarly spanning $\textsc{d}$-\TIBs. 
\begin{figure}[t]
\centering
\begin{tikzpicture}[scale=0.6]
\SetVertexStyle[FillColor=white]
\SetEdgeStyle[Color=black,LineWidth=1pt]
  \Vertex[x=0,y=0,label=1]{1}
  \Vertex[x=5,y=0,label=2]{2}
  \Vertex[x=5,y=5,label=3]{3}
  \Vertex[x=0,y=5,label=4]{4}
  \Vertex[x=2,y=2,label=5]{5}
  
  \Edge[Direct,color=red](1)(2)
  \Edge[Direct,label={(6,7)},color=blue,style=dashed](1)(2)
  \Edge[Direct,label={(1,2)},bend=40,color=amber](1)(4)
  \Edge[Direct,label={(6,9)}](4)(1)
  \Edge[Direct,label={(5,7)},color=dgreen](1)(5)
  \Edge[Direct,label={(2,4)},bend=20,color=amber](4)(5)
  \Edge[Direct,label={(8,9)},bend=20](5)(4)
  \Edge[Direct,label={(4,4)},bend=30,color=amber](5)(3)
  \Edge[Direct,label={(7,7)}](3)(4)
  \Edge[Direct,label={(9,10)},bend=-10,color=blue](2)(3)
  \Edge[Direct,label={(7,8)},color=red](2)(5)
  \Edge[Direct,label={(4,5)},bend=20,distance=0.8](4)(2)
  \Edge[Direct,bend=-40,color=red](5)(3)
  \Edge[Direct,label={(8,9)},bend=-40,color=dgreen, style=dashed](5)(3)
  
\end{tikzpicture}
\caption{A temporal graph with different temporal walks from vertex $1$ to vertex $3$, each one represented by a color (two-tone arcs belong to two walks). \textcolor{amber}{Yellow}: walk realizing $\EA(1,3)$ and $\MW(1,3)$. \textcolor{red}{Red}: walk realizing $\FT(1,3)$. \textcolor{blue}{Blue}: walk realizing both $\LD(1,3)$ and $\ST(1,3)$. \textcolor{dgreen}{Green}: walk realizing $\MT(1,3)$.  } 
\label{fig:ex_temp_graph}
\end{figure}
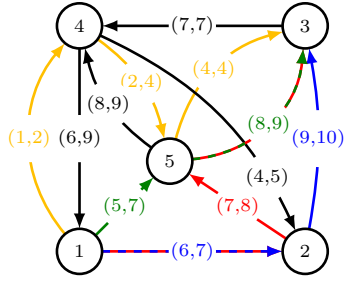

\begin{figure}[t]
\centering
\begin{subfigure}[t]{0.3\textwidth}
\begin{tikzpicture}[scale=0.4]
\SetVertexStyle[FillColor=white]
\SetEdgeStyle[Color=black,LineWidth=1pt]
  \Vertex[x=0,y=0,label=1,color=white!70!black]{1}
  \Vertex[x=5,y=0,label=2]{2}
  \Vertex[x=5,y=5,label=3]{3}
  \Vertex[x=0,y=5,label=4]{4}
  \Vertex[x=2,y=2,label=5]{5}
  \Edge[Direct,label={(1,2)},bend=40](1)(4)
  \Edge[Direct,label={(2,4)},bend=20](4)(5)
  \Edge[Direct,label={(4,4)},bend=30](5)(3)
  \Edge[Direct,label={(4,5)},bend=20, distance=0.8](4)(2)
\end{tikzpicture}
\caption{\EA-\TOB}
\label{fig:ea_tob}
    \end{subfigure}
\begin{subfigure}[t]{0.3\textwidth}
\begin{tikzpicture}[scale=0.4]
\SetVertexStyle[FillColor=white]
\SetEdgeStyle[Color=black,LineWidth=1pt]
  \Vertex[x=0,y=0,label=1,color=white!70!black]{1}
  \Vertex[x=5,y=0,label=2]{2}
  \Vertex[x=5,y=5,label=3]{3}
  \Vertex[x=0,y=5,label=4]{4}
  \Vertex[x=2,y=2,label=5]{5}
  \Edge[Direct,label={(6,7)}](1)(2)
  \Edge[Direct,label={(1,2)},bend=40](1)(4)
  \Edge[Direct,label={(7,8)},position=above right](2)(5)
  \Edge[Direct,label={(8,9)},bend=-40](5)(3)
\end{tikzpicture}
\caption{\FT-\TOB}
\end{subfigure}
   \begin{subfigure}[t]{0.3\textwidth}
\begin{tikzpicture}[scale=0.4]
\SetVertexStyle[FillColor=white]
\SetEdgeStyle[Color=black,LineWidth=1pt]
  \Vertex[x=0,y=0,label=1,color=white!70!black]{1}
  \Vertex[x=5,y=0,label=2]{2}
  \Vertex[x=5,y=5,label=3]{3}
  \Vertex[x=0,y=5,label=4]{4}
  \Vertex[x=2,y=2,label=5]{5}
  \Edge[Direct,label={(6,7)}](1)(2)
  \Edge[Direct,label={(8,9)},bend=20,position=above right](5)(4)
  \Edge[Direct,label={(9,10)},bend=-10](2)(3)
  \Edge[Direct,label={(7,8)},position=above right](2)(5)
\end{tikzpicture}
\caption{\LD-\TOB}
\end{subfigure} 
\begin{subfigure}[t]{0.3\textwidth}
\begin{tikzpicture}[scale=0.4]
\SetVertexStyle[FillColor=white]
\SetEdgeStyle[Color=black,LineWidth=1pt]
  \Vertex[x=0,y=0,label=1,color=white!70!black]{1}
  \Vertex[x=5,y=0,label=2]{2}
  \Vertex[x=5,y=5,label=3]{3}
  \Vertex[x=0,y=5,label=4]{4}
  \Vertex[x=2,y=2,label=5]{5}
  \Edge[Direct,label={(6,7)}](1)(2)
  \Edge[Direct,label={(1,2)},bend=40](1)(4)
  \Edge[Direct,label={(5,7)},position=above left](1)(5)
  \Edge[Direct,label={(8,9)},bend=-40](5)(3)
\end{tikzpicture}
\caption{\MT-\TOB}
\label{fig:MT_TOB}
\end{subfigure}
\begin{subfigure}[t]{0.3\textwidth}
\begin{tikzpicture}[scale=0.4]
\SetVertexStyle[FillColor=white]
\SetEdgeStyle[Color=black,LineWidth=1pt]
  \Vertex[x=0,y=0,label=1,color=white!70!black]{1}
  \Vertex[x=5,y=0,label=2]{2}
  \Vertex[x=5,y=5,label=3]{3}
  \Vertex[x=0,y=5,label=4]{4}
  \Vertex[x=2,y=2,label=5]{5}
  \Edge[Direct,label={(6,7)}](1)(2)
  \Edge[Direct,label={(1,2)},bend=40](1)(4)
  \Edge[Direct,label={(2,4)},bend=20](4)(5)
  \Edge[Direct,label={(4,4)},bend=30](5)(3)
\end{tikzpicture}
\caption{\MW-\TOB}
\label{fig:mw_tob}
    \end{subfigure}
\begin{subfigure}[t]{0.3\textwidth}
\begin{tikzpicture}[scale=0.4]
\SetVertexStyle[FillColor=white]
\SetEdgeStyle[Color=black,LineWidth=1pt]
  \Vertex[x=0,y=0,label=1,color=white!70!black]{1}
  \Vertex[x=5,y=0,label=2]{2}
  \Vertex[x=5,y=5,label=3]{3}
  \Vertex[x=0,y=5,label=4]{4}
  \Vertex[x=2,y=2,label=5]{5}
  \Edge[Direct,label={(6,7)}](1)(2)
  \Edge[Direct,label={(1,2)},bend=40](1)(4)
  \Edge[Direct,label={(9,10)},bend=-10](2)(3)
  \Edge[Direct,label={(7,8)},position=above right](2)(5)
\end{tikzpicture}
\caption{\ST-\TOB}
\end{subfigure}
    \caption{Example of $\textsc{d}$-\TOBs of the temporal graph in Figure (\ref{fig:ex_temp_graph}) for different distances. The grey vertex is the root of the \TOB.}
    \label{fig:ex_Dtob}
    \vspace{-0.2cm}
\end{figure}
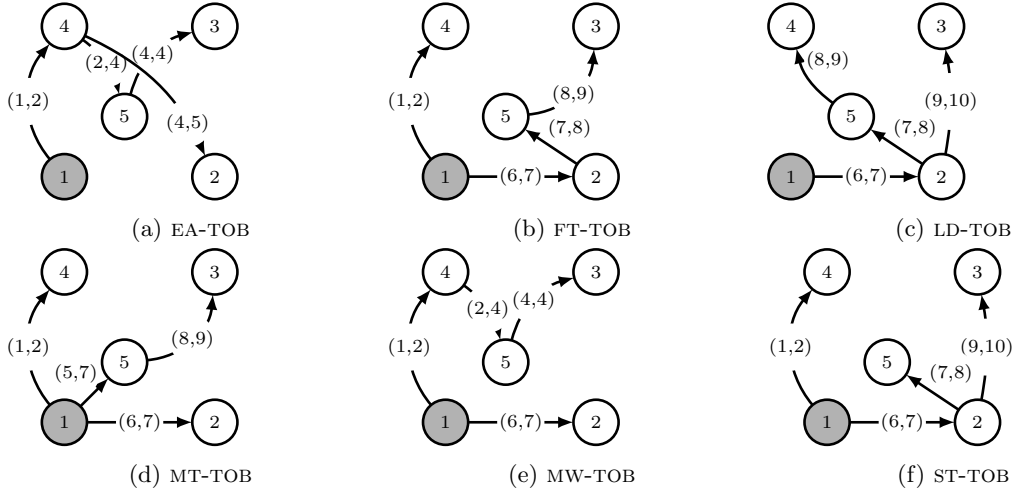

In \cite{huang2015}, the authors prove that a spanning \TOB, as well as an $\EA$-\TOB, exist 
if and only if every vertex is temporally reachable from the root. Then, they provide an algorithm to compute them in $O(m)$ time.  
As for all the other distances, the problem of computing optimal branchings is still open and seems to be a more difficult task. We start observing that for $\textsc{d}\neq \EA$, the temporal reachability from the root to any vertex is no longer sufficient for the existence of a spanning $\textsc{d}$-\TOB. That is showed in Figure \ref{fig:tob_notP} where, for each $\textsc{d}\neq \EA$, we present a temporal graph that does not admit a spanning $\textsc{d}$-\TOB even if every vertex is temporally reachable from the root. 
Indeed, in Figures (\ref{fig:no_MTLD-tob}) and (\ref{fig:no_ST-tob}) there is a unique temporal path $P$ from $r$ to $y$. Thus $P$ must be included in any spanning \TOB. Now, the particular structure of the temporal graphs under consideration implies that  $P$ is the only spanning \TOB.  
However, $P$ does not realize $\textsc{d}(r,x)$, which is realized by the temporal arc from $r $ to $x$. Therefore, $P$ is not a $\textsc{d}$-\TOB and hence no $\textsc{d}$-\TOB exists. 
In Figure (\ref{fig:no_FT-tob}), there is a unique temporal path that realizes $\textsc{d}(r,x)$, namely the one with temporal arcs $(r,v,1,1)$ and $(v,x,1,1)$. Similarly, there is a unique temporal path that realizes $\textsc{d}(r,y)$, namely the one with temporal arcs $(r,v,2,2)$ and $(v,y,2,2)$. Thus, a possible spanning $\textsc{d}$-\TOB must be equal to the graph itself, which clearly is not a branching.\footnote{
Notice that in the examples, $\tau=2$ for $\textsc{d}\in \{\FT,\LD,\MT,\MW \}$, which is the smallest value possible for which the temporal reachability from the root does not guarantee the existence of a spanning \textsc{d}-\TOB, as when $\tau=1$ the temporal graph reduces to a static graph. When $\textsc{d}=\ST$, we have that $\tau=3$: it can be proven that this is again the smallest value possible for which the temporal reachability from the root does not guarantee the existence of a spanning \ST-\TOB (Lemma \ref{claim:STtau}).} 
Notice that in all those examples, we can always find a $\textsc{d}$-\TOB on the vertex set $\{r,v,x\}$; this \TOB is highlighted by solid arcs in the figures.\footnote{In Figure (\ref{fig:no_FT-tob}) also the dashed arcs form a $\textsc{d}$-\TOB on the vertex set $\{r,v,y\}$ for $\textsc{d}\!\in\! \{\FT,\MW\}$.}
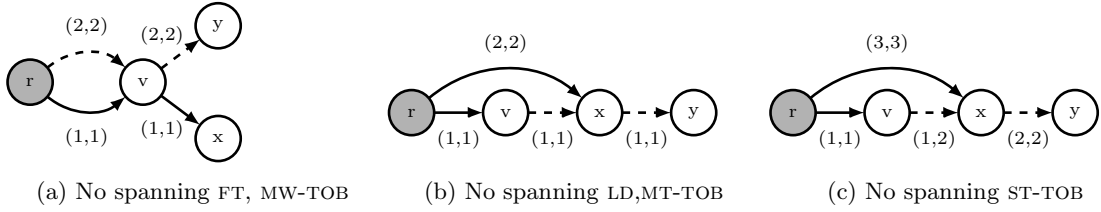
\begin{figure}[t]
\centering
\begin{subfigure}[t]{0.3\textwidth}
\begin{tikzpicture}[scale=0.5]
\SetVertexStyle[FillColor=white]
\SetEdgeStyle[Color=black,LineWidth=1pt]
  \Vertex[y=-1.5,x=0,label=r,color=white!70!black]{1}
  \Vertex[y=-1.5,x=3,label=v]{2}
  \Vertex[x=5,y=-3,label=x]{3}
  \Vertex[x=5,y=0,label=y]{4}

  \Edge[Direct,label={(1,1)},bend=-40,position=below](1)(2)
  \Edge[Direct,label={(2,2)},style=dashed,bend=40,position=above](1)(2)
  \Edge[Direct,label={(1,1)},position=below left](2)(3)
  \Edge[Direct,label={(2,2)},position=above left,style=dashed](2)(4)
\end{tikzpicture}
\caption{No spanning \FT, \MW-\TOB}
\label{fig:no_FT-tob}
\end{subfigure}
 \begin{subfigure}[t]{0.3\textwidth}
\begin{tikzpicture}[scale=0.5]
\SetVertexStyle[FillColor=white]
\SetEdgeStyle[Color=black,LineWidth=1pt]
  \Vertex[y=0,x=0,label=r,color=white!70!black]{1}
  \Vertex[y=0,x=2.5,label=v]{2}
  \Vertex[y=0,x=5,label=x]{3}
  \Vertex[y=0,x=7.5,label=y]{4}
  \Edge[Direct,label={(1,1)},position=below](1)(2)
  \Edge[Direct,label={(1,1)},style=dashed,position=below](2)(3)
  \Edge[Direct,label={(1,1)},style=dashed,position=below](3)(4)
  \Edge[Direct,label={(2,2)},bend=40,position=above](1)(3)
\end{tikzpicture}
\caption{No spanning \LD,\MT-\TOB}
\label{fig:no_MTLD-tob}
\end{subfigure}
\begin{subfigure}[t]{0.3\textwidth}
\begin{tikzpicture}[scale=0.5]
\SetVertexStyle[FillColor=white]
\SetEdgeStyle[Color=black,LineWidth=1pt]
  \Vertex[y=0,x=0,label=r,color=white!70!black]{1}
  \Vertex[y=0,x=2.5,label=v]{2}
  \Vertex[y=0,x=5,label=x]{3}
  \Vertex[y=0,x=7.5,label=y]{4}
  
  \Edge[Direct,label={(1,1)},position=below](1)(2)
  \Edge[Direct,label={(1,2)},style=dashed,position=below](2)(3)
  \Edge[Direct,label={(2,2)},style=dashed,position=below](3)(4)
  \Edge[Direct,label={(3,3)},bend=40,position=above](1)(3)
\end{tikzpicture}
\caption{No spanning \ST-\TOB}
\label{fig:no_ST-tob}
\end{subfigure}

\caption{Examples of temporal graphs that do not admit a spanning $\textsc{d}$-\TOB with root $r$. Solid arcs represent a maximum $\textsc{d}$-\TOB.} 
\label{fig:tob_notP}
\end{figure}
The following questions naturally arise:
\begin{enumerate}
    \item When does a spanning $\textsc{d}$-\TOB exist?
    \item If it does not exist, can we identify a set of vertices of maximum size that can be spanned by a $\textsc{d}$-\TOB (\emph{maximum} $\textsc{d}$-\TOB)?
    \item Can we compute a maximum $\textsc{d}$-\TOB in polynomial time? 
    \item Can we answer to all the above questions for $\textsc{d}$-\TIBs?
\end{enumerate}
We observe that having not all the vertices be spanned by a $\textsc{d}$-\TOB might be an issue in real world situations. For example, in the public transports setting, where we might want to reach anyway all the vertices of the graph (places in a city) by optimizing some distance and while still using the least amount of connections possible (buses/trams/...). We can then approach the problem from another point of view by introducing the concept of \emph{minimum $\textsc{d}$-Temporal Out-Spanning Subgraph} ($\textsc{d}$-\TSS) of a temporal graph, as a temporal subgraph that connects the root to any other vertex with walks realizing a given distance $\textsc{d}$ and that minimizes the number of temporal arcs (see Section \ref{sec:TSS} for formal definition)\footnote{Notice that the concepts of spanning \TOB and minimum \TSS coincide, as well as the concepts of spanning $\EA$-\TOB and minimum $\EA$-\TSS. This does not hold for all the other distances.}. Another question then arises:
\begin{enumerate}
    \item[5.] Can we compute a minimum $\textsc{d}$-\TSS of a temporal graph?
\end{enumerate}
In this paper we solve all these questions.

\paragraph{Our contribution.} We first show some characterizations of \TOBs, each of which gives different insights on these objects. Then, for each $\textsc{d}\in \{\text{\LD,\MT,\ST}\}$, we provide a necessary and sufficient condition for the existence of a spanning $\textsc{d}$-\TOB in a temporal graph relying on the concept of optimal substructure (question 1.).
 Moreover, we characterize the vertex set of maximum size that a $\textsc{d}$-\TOB can span, which turns out to be uniquely identified (question 2.). This property is crucial to find efficient polynomial-time algorithms for computing a maximum $\textsc{d}$-\TOB (question 3.). In particular, our algorithms compute a $\textsc{d}$-\TOB whose paths from the root also arrive the earliest possible in every vertex.
 The characterization does not hold for $\textsc{d}\in \{\FT,\MW\}$, and in fact we show that in these cases computing a maximum $\textsc{d}$-\TOB is  an \NP-complete problem (question 3.). We then show that the same results hold for optimal temporal in-branchings (question 4.).
Finally, we prove that for any distance $\textsc{d}$, the problem of finding a minimum $\textsc{d}$-\TSS is again \NP-complete (question 5.).  
 A summary of our results and of the computational time of our algorithms can be found in Table \ref{tab:results}. 
 We stress that any algorithm computing $\textsc{d}(r,v)$ for all vertices $v$ of a temporal graph cannot suffice by itself to find a $\textsc{d}$-\TOB or a minimum $\textsc{d}$-\TSS. Indeed we have seen in Figure (\ref{fig:no_MTLD-tob}) and (\ref{fig:no_ST-tob}) that $\textsc{d}(r,y)$ is well-defined because $y$ is temporally reachable from the root $r$, but no $\textsc{d}$-\TOBs can span $y$. In other words, there are no guarantees that the union of the paths realizing $\textsc{d}$, and computed by the aforementioned algorithms, would form a $\textsc{d}$-\TOB. Also applying the Dijkstra's algorithm on the static expansion of a temporal graph would not solve the problem (see Remark \ref{rem:static_exp}). 
 In addition, for $\textsc{d}\in \{\FT,\MW\}$\, we have the extreme case where computing $\textsc{d}(r,v)$ is polynomial-time, but finding a maximum \textsc{d}-\TOB or a minimum \textsc{d}-\TSS is \NP-complete.
 
\begin{table}[t]
\centering
\caption{Our contribution: summary results. The first row gives the time to compute a \TOB/\TIB/minimum \TSS. The other rows give the time to compute a $\textsc{d}$-\TOB/$\textsc{d}$-\TIB/minimum $\textsc{d}$-\TSS for the corresponding distance $\textsc{d}$. Results marked with $^*$ are presented in \cite{huang2015}.} 
\label{tab:results}
\centering
\begin{tabular}{|c|l|l|l|}
    \hline
$\textsc{d}$ & $\textsc{d}$-\TOB & $\textsc{d}$-\TIB & minimum $\textsc{d}$-\TSS\\ 
    \hline
\emph{none}& $O(m)^*$ & $O(m)$ & equiv. to \TOB\\
\EA & $O(m)^*$ & $O(m\log m)$ & equiv. to $\EA$-\TOB \\
\FT & \NP-complete & \NP-complete& \NP-hard\\
\LD & $O(m\log m)$ & $O(m)$& \NP-hard\\
\MT & $O(m\log n)$ & $O(m\log n)$ & \NP-hard\\
\MW & \NP-complete & \NP-complete & \NP-hard\\
\ST & $O(m\log m)$ & $O(m\log m)$& \NP-hard\\
    \hline
\end{tabular}

\end{table}

\paragraph{Further Related Results.}
We have already mentioned the results of \cite{huang2015}, where the authors also show that the problem of finding minimum weight spanning \TOBs is \NP-hard.
Kuwata et.\ al.\ \cite{Kuwata2009} are interested in the temporal reachability from the root that realizes the earliest arrival time, and they obtain it by making use of Dijsktra's algorithm on the static expansion of the temporal graph. We already observed that that construction does not translate into a \TOB in the original temporal graph. Gunturi et.\ al.\ \cite{gunturi2021} present a polynomial-time algorithm for computing what they call \emph{minimum (weight) spanning tree} in a spatio-temporal network: the difference is that in their model, the weight of the arcs depend on a function that evolves in time but walks are not required to be time-respecting. Different versions of the problem of finding arc-disjoint \TOBs in temporal graphs are investigated in \cite{marino2021,KAMIYAMA2015321}. 
The concept of \TSS is closely related to the one of \emph{spanner}. A spanner of a temporal graph $\mc G=(V,A,\tau)$ is a temporal subgraph of $\mc G$ with vertex set $V$ such that every vertex temporally reaches any other vertex. In contrast, in a \TSS we are only interested in the reachability of all vertices from the root. A \emph{minimum} spanner is a spanner with the least number of temporal arcs possible. Akrida et.\ al.~\cite{AKRIDA2017} proved that computing a minimum spanner is APX-hard. 
It is worth remarking that a very recent preprint~\cite{casteigts2023} introduces new objects that are a relaxation of spanners. Still, they differ from \TOBs and \TSSs for their underlying structure and their reachability properties.

\paragraph{Structure of the paper.}
Section \ref{sec:preliminaries} lists the notation and introduces the concepts used in the paper. In Section \ref{sec:tobTIB} we formally defines the Temporal Out-(In-) Branching and present some preliminary results; in particular we show the equivalence between problems on $\textsc{d}$-\TIBs and on $\textsc{d}$-\TOBs. Section \ref{sec:alg} shows the theoretical results on spanning/maximum $\textsc{d}$-\TOBs; the polynomial-time algorithms for computing a maximum $\textsc{d}$-\TOB are then presented in Subsection \ref{subsec:algMT} when $\textsc{d}=\MT$ and in Subsection \ref{subsec:algLDST} for $\textsc{d}\in \{\LD,\ST \}$. Section \ref{sec:FT-tobs} shows that the related problems for $\textsc{d}\in \{\FT,\MW\}$ are \NP-complete. Finally, in Section \ref{sec:TSS} we formally define a $\textsc{d}$-Temporal Out-Spanning Subgraph and we prove that, for any $\textsc{d}$, the problem of finding a minimum $\textsc{d}$-\TSS is \NP-hard. 

\paragraph{Previous version.}
A preliminary version of this work has been presented at Fundamentals of Computation Theory 2023 \cite{Bubboloni2023}. Compared to that version, here we have added the full proofs of each result, together with some intermediary result and explanatory remarks/examples (namely, Remark \ref{rem:static_exp}, Lemma \ref{claim:subopt}, Figure \ref{fig:no_substitution_FTMW}, Lemma \ref{claim:STtau}, Proposition \ref{prop:gaps_FTMW}, Remark \ref{rem:digraph}).
Also the pseudocode of Algorithm \ref{alg:MTST} has been added.
Moreover, the results on the Minimum Waiting time distance are new as well as all the results of Section \ref{sec:TSS}.

\section{Preliminaries}\label{sec:preliminaries}
We denote by $\mathbb{N}$ the set of positive integers. We set $\mathbb{N}_0=\mathbb{N}\cup\{0\}$, $[n]:=\{x\in \mathbb{N}: x\leq n\}$ and $[n]_0:=\{x\in \mathbb{N}_0: x\leq n\}$, for $n\in \mathbb{Z}$. Note that if $n\in \mathbb{Z}$ and $n<0$, then $[n]=[n]_0=\varnothing.$ We instead have $[0]=\varnothing$ and $[n]_0=\{0\}.$
Given a set $ Q$ and a property $\mc P$, we say that $ Q$ is \emph{minimal} for property $\mc P$ if $ X$ has property $\mc P$, and for all $R\subsetneq Q$, $ R$ does not have property $\mc P$. 
We remind that a digraph is a pair $ D=(V,A)$ where $V $ is the nonempty finite set of vertices and $A\subseteq V\times V$ is the set of arcs. 
Such digraph is called an \emph{out-branching} (respectively \emph{in-branching}) with root $r\in V$ if, for every $v\in V,$ there exists a unique $(r,v)$-\emph{walk} (respectively $(v,r)$-\emph{walk}) in $ D$. 
Note that in a branching, every walk is necessarily a path and that the underlying graph is acyclic.
 We will use the following well-known characterization of out-branchings.
 \begin{lemma}[\cite{gallier2011}, Theorem 4.3]\label{lem:outbranch}
Let $ D=(V,A)$ be a digraph and $r\in V.$ The following facts are equivalent:
\begin{enumerate}
    \item $ D$ is an out-branching with root $r$;
    \item for every $  v\in V\setminus\{r\}$, there exists a $(r,v)$-walk, $d\sub { D}^-(r)=0$ and $ d\sub { D}^-(v)=1$;
    \item for every $ v\in V$, there exists a $(r,v)$-walk and $|A|=|V|-1$.
  \end{enumerate}
\end{lemma}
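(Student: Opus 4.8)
The plan is to use characterization~(2), the in-degree condition, as a hub and to prove $(1)\Leftrightarrow(2)$ and $(2)\Leftrightarrow(3)$ separately; together these give the stated equivalence. For $(1)\Rightarrow(2)$ I would argue directly from uniqueness of walks. If $D$ is an out-branching with root $r$, the unique $(r,r)$-walk is the trivial one, so there can be no arc with head $r$: any arc $(u,r)$, concatenated after the $(r,u)$-walk guaranteed by~(1), would produce a second, nontrivial $(r,r)$-walk; hence $d\sub D^-(r)=0$. For $v\ne r$, the last arc of an $(r,v)$-walk shows $d\sub D^-(v)\ge 1$, and if $v$ had two distinct in-arcs $(p,v)$ and $(q,v)$ with $p\ne q$, then appending them respectively to an $(r,p)$-walk and an $(r,q)$-walk would yield two distinct $(r,v)$-walks, contradicting~(1); so $d\sub D^-(v)=1$. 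The existence of the required $(r,v)$-walks is part of~(1).

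For $(2)\Rightarrow(1)$ the core is a uniqueness argument by induction on the length $k$ of an $(r,v)$-walk $W$, the claim being that any two $(r,v)$-walks coincide. If $k=0$ then $v=r$, and since $d\sub D^-(r)=0$ every $(r,r)$-walk is trivial, so it equals $W$. If $k\ge 1$ then $v\ne r$ (otherwise the last arc of $W$ would enter $r$), so $d\sub D^-(v)=1$ forces every $(r,v)$-walk to end with the unique arc $(u,v)$ into $v$; deleting that last arc turns $W$ into an $(r,u)$-walk of length $k-1$ and any other $(r,v)$-walk $W'$ into an $(r,u)$-walk, so the induction hypothesis gives equality of the truncations and hence $W=W'$. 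Existence of $(r,v)$-walks in~(1) is immediate from~(2), with the trivial walk covering $v=r$. This induction is the only step needing care, and I flag it as the main (mild) obstacle: it must be set up over the walk length uniformly in the target vertex, and the base case $v=r$ must invoke $d\sub D^-(r)=0$ to rule out nontrivial closed walks at the root.

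The equivalence $(2)\Leftrightarrow(3)$ is essentially bookkeeping. For $(2)\Rightarrow(3)$, summing in-degrees gives $|A|=\sum_{v\in V} d\sub D^-(v)=0+(|V|-1)\cdot 1=|V|-1$, and the reachability clause is common to both statements. For $(3)\Rightarrow(2)$, since every vertex is reachable from $r$, a search from $r$ selects for each $v\ne r$ one in-arc $a_v$; the arcs $\{a_v: v\in V\setminus\{r\}\}$ are pairwise distinct (they have distinct heads), lie in $A$, and number $|V|-1=|A|$, hence they are \emph{all} of $A$. Consequently $d\sub D^-(r)=0$ and $d\sub D^-(v)=1$ for every $v\ne r$, which is exactly~(2). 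Chaining $(1)\Rightarrow(2)$, $(2)\Rightarrow(1)$, $(2)\Rightarrow(3)$, $(3)\Rightarrow(2)$ then yields the full equivalence; everything beyond the induction above is degree counting and concatenation of walks.
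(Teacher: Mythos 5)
Your proof is correct and complete. For comparison there is in fact nothing in the paper to compare it against: the paper does not prove this lemma, it quotes it as Theorem 4.3 of Gallier's notes and uses it as a black box (e.g., inside the proof of its Lemma on \TOB{} characterizations). On its own merits your argument is sound: the strong induction on walk length in $(2)\Rightarrow(1)$ is set up properly, with $d^{-}_{D}(r)=0$ both anchoring the base case and excluding nontrivial closed walks at the root, and the two directions of $(2)\Leftrightarrow(3)$ are correct counting arguments (in particular, the $|V|-1$ selected in-arcs have pairwise distinct heads, so they must exhaust $A$ when $|A|=|V|-1$, which yields the degree conditions). The only implicit use of the digraph (rather than multidigraph) hypothesis is in $(1)\Rightarrow(2)$, where distinct in-arcs of $v$ are taken to have distinct tails; this is legitimate since the lemma is stated for digraphs with $A\subseteq V\times V$.
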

\noindent A \emph{multidigraph} is formalized by a quadruple $\mc D=(V,A, \tail,\head)$, where $V$ is the set of vertices, $A$ the set of arcs and $\tail, \head:A\to V$ are respectively the \emph{head} and the \emph{tail} function, where we require that $\forall a\!\in\! A$, $\tail(a)\neq \head(a)  $, i.e.\ no selfloops are allowed\footnote{Notice that if $\tail$ and $\head$ are injective, $\mc D$ is a digraph.}.
The \emph{in-neighborhood} and \emph{out-neighborhood} of a vertex $v$ are defined as $N\sub{\mc D}^-(v):=\{u\in V: \exists a\!\in \!A \text{ s.t. }\tail(a)=u,  \head(a)=v \} $ and $N\sub{\mc D}^+(v):=\{u\in V: \exists a\!\in\! A \text{ s.t. }\tail(a)=v,  \head(a)=u  \}$. The \emph{in-degree} and \emph{out-degree} of $v$ are defined respectively as $d\sub{\mc D}^-(v):= |\{a\!\in\! A: \head(a)=v\}|$,  $d\sub{\mc D}^+(v):= |\{a\!\in\! A: \tail(a)=v\}  |.$ Let $u,v\in V$. A $(u,v)$-\emph{walk} of length $k\in \mathbb{N}_0$ in $\mc D$ is an alternating ordered sequence 
$
W=(v_0=u,a_1,v_1,\dots ,v_{k-1},a_k,v_k=v)
$
 of vertices $v_0,v_1,\dots ,v_k\in V$ and arcs $a_1,\dots , a_k\in A$ such that ${\rm t}(a_1)=u $, ${\rm h}(a_k)=v $ and  ${\rm h}(a_i)=v_{i}={\rm t}(a_{i+1})  $ for all $i\in [k-1]$. The set of vertices of $W$ is defined by $V(W):=\{v_0,v_1,\dots ,v_k\}$ 
 and the set of arcs of $W$ by $A(W):=\{a_1,\dots , a_k\}$. Note that $|V(W)|\leq k$ as well as $|A(W)|\leq k.$
 We use the notation $\ell(W)$ for the length $k$ of $W.$ Note that $\ell(W)=0$ if and only if $u=v$; in this case $W$ reduces to the single vertex $u=v$ and it is called a trivial walk.
 We say that $W$ \emph{traverses} a vertex $v$ (an arc $a$) if $v\in V(W)$ ($a\in A(W)$). 
  A \emph{path} is a walk where the vertices are all distinct. If a walk $X$ is a sub-sequence of the walk $W$ is called a \emph{subwalk} of $W$ and we write $X\subseteq W$.
For $h\in[k]_0$ the \emph{$v_h$-prefix} of $W$ is the subwalk of $W$ given by $(v_0,a_1,\dots, a_h ,v_h);$ the \emph{$v_h$-suffix} of $W$ is the subwalk of $W$ given by $(v_h, a_{h+1},\dots,a_k, v_{k}).$ Note that, for a fixed $z\in V(W)$, there are, in general, many $z$-prefixes and many $z$-suffixes of $W$; they are unique for all $z\in V(W) $ if and only if $W$ is a path.
Given a $(u,v)$-walk $W$ and a $(v,s)$-walk $Z$, we denote the walk obtained by their concatenation by $W+Z$.
For $V'\subseteq V,$ the {\it multidigraph induced} by $V'$ in $\mc D$ is the multidigraph $\mc D[V']=(V',A')$, where $A'=\{a\in A: \head(a),\tail(a)\in V' \}$ .

\myparagraph{Temporal Graphs.}
A temporal graph $\mc G$ is a triple $(V,A,\tau)$, where $V $ is the set of vertices, $\tau\in \mathbb{N}$ is the \emph{lifetime}, and $$A\subseteq \{(u,v,s,t)\in V^2\times [\tau]^2 :  u\neq v \text{ and }  s\leq t  \}$$ is the set of \emph{temporal arcs}. We set $m:=|A|$ and $n:=|V|$. Given $a\in A$, we write $a=(\tail (a), \head (a), t_s(a), t_a(a)) $, where $\tail (a)$ and $\head (a)$ are, respectively, the \emph{tail} and \emph{head} vertices of the temporal arc $a$, and $t_s(a)$ and $t_a(a)$ are, respectively, the \emph{starting time}  and the \emph{arrival time} of $a$. 
These functions are easily interpreted: $t_s(a)$ is the time at which it is possible to begin a trip along $a$ from vertex ${\rm t}(a)$ to vertex ${\rm h}(a)$, and $t_a(a)$ is the arrival time of that trip. We also define $el(a):=t_a(a)-t_s(a)$ as the \emph{elapsed} time of the arc $a\in A.$

The temporal graph $\mc G$ has the multidigraph $\mc D\sub{\mc G}=(V,A,\tail, \head)$ as underlying structure. When using concepts like in-neighborhood, out-neighborhood, in-degree and out-degree for a temporal graph $\mc G$, it is intended that we are referring to its underlying multidigraph $\mc D\sub{\mc G}$. Given a temporal graph $\mc G$, we also use the notation $V(\mc G)$ and $A(\mc G)$ to denote, respectively, its set of vertices and temporal arcs.
A temporal graph $\mc G'=(V',A',\tau ')$ is 
a \emph{temporal subgraph} of $\mc G=(V,A,\tau)$ if $V'\subseteq V$, $A'\subseteq A$ and $\tau'\leq \tau$.  
Let $(u,v)\in V^2$; we now introduce the concept of \emph{temporal} $(u,v)$-walk of length $k\in \mathbb{N}_0$. If $k\in\{0,1\}$, every $(u,v)$-walk of length $k$ in the underlying multidigraph is also called a temporal $(u,v)$-walk of length $k$ in $\mathcal{G}$. Let $k\geq 2.$ 
A temporal $(u,v)$-walk of length $k$ in $\mathcal{G}$ is a $(u,v)$-walk $ W=(u,a_1,v_1,\dots ,v_{k-1},a_k,v) $  in the underlying multidigraph such that $t_a(a_i)\leq t_s(a_{i+1}) $ for all $i\in [k-1]$. 
We denote by $\mc W\sub {\mc G}(u,v) $ the set of temporal walks from $u$ to $v$ in $\mc G$. 
If $\mc W\sub {\mc G}(u,v)\neq\varnothing, $ we say that $v$ is \emph{temporally reachable} from $u$. Observe that from every temporal walk it is possible to extract a temporal path with the same end vertices. 

\noindent We now define several interesting functions from the set $\mc W:=\bigcup_{(u,v)\in V^2}\mc W\sub {\mc G}(u,v)$ of the walks of $\mc G$ to $\mathbb{N}_0$. 
Let $W=(u,a_1,v_1,\dots ,v_{k-1},a_k,v)\in \mc W$. If $k\geq 1$, we define the \emph{starting time} of $W$ by $t_s(W):=t_s(a_1)$; the \emph{arrival time} of $W$ by $t_a(W):=t_a(a_k)$; the \emph{duration} of $W$ by $ \dur (W):=t_a(W)-t_s(W)$; the \emph{waiting time} of $W$ by $\wait(W):=\sum_{i=1}^{k-1} [t_s(a_{i+1})-t_a(a_i)] $ if $k\geq 2$ and $\wait(W):=0$ if $k=1$; the \emph{travelling time} of $W$ by $\weight (W):=\sum_{i=1}^k el(a_i)$. If instead $k=0$, i.e.\ $W=(u )$ is a trivial walk, we set $t_s(W)=t_a(W)=\dur (W)=\wait(W)=\weight (W)=0$.
 We next define, through the functions above, a set of crucial functions from $V^2$ to $\mathbb{N}_0\cup\{+\infty\}$. Let first $(u,v)\in V^2$ be such that $\mc W\sub {\mc G}(u,v)\neq \varnothing. $ We define:
\begin{description}
\item[Earliest Arrival time.] $\EA\sub{\mc G}(u,v):=\min\{t_a(W): W \in \mc W\sub {\mc G}(u,v)\}$;
\item[Fastest Time.] $\FT\sub{\mc G}(u,v):=\min\{\dur (W): W \in \mc W\sub {\mc G}(u,v) \}$;
\item[Latest Departure time.] $\LD\sub{\mc G}(u,v):=\max\{t_s(W): W \in \mc W\sub {\mc G}(u,v)\}$ if $u\neq v$, $\LD\sub{\mc G}(u,v):=\tau +1$ if $u=v$;
 \item[Minimum Transfers.] $\MT\sub{\mc G}(u,v):=\min\{\len(W): W \in \mc W\sub {\mc G}(u,v) \}$;
 \item[Minimum Waiting time.] $\MW\sub{\mc G}(u,v):=\min\{\wait(W): W \in \mc W\sub {\mc G}(u,v) \}$. 
  \item[Shortest Travelling time.] $\ST\sub{\mc G}(u,v):=\min\{\weight(W): W \in \mc W\sub {\mc G}(u,v) \}$;
\end{description}
In this case, given $\textsc{d}\!\in \!\{\EA, \FT,\LD, \MT,   \MW,\ST \}$, we say that a temporal $(u,v)$-walk realizes $\textsc{d}\sub{\mc G}(u,v)$ if it attains the minimum (or maximum when \textsc{d}=\LD) that defines the function value $\textsc{d}\sub{\mc G}(u,v)$. If $\mc W\sub {\mc G}(u,v)= \varnothing $, then for every $\textsc{d}\!\in \!\{\EA, \FT,\LD, \MT,   \MW,\ST \}$ we set $\textsc{d}\sub{\mc G}(u,v)=+\infty.$ 

\noindent We will call the functions $\textsc{d}\sub{\mc G}$ (temporal) \emph{distances}, as it is common in the literature~\cite{Calamai2022}. However, notice that  they do not necessarily satisfy the classic requirements for distances, such as the triangle inequality.
When the temporal graph is clear from the context, we usually omit the subscripts.

\section{Temporal branching and preliminary results}\label{sec:tobTIB}




\subsection{Temporal out-branching}\label{sec:temp_out}

In this section, we present the formal notion of temporal out-branching, give some useful characterizations, and define the related optmization problems.

\begin{definition}\label{def:tob}
{\rm A temporal graph $\mc B=(V_{\mc B},A_{\mc B},\tau_{\mc B})$ is called a \emph{temporal out-branching} (\TOB) with root $r\in V_{\mc B}$ if
$A$ is a minimal set of temporal arcs such that, for every $ v\in V_{\mc B}$, there exists a temporal $(r,v)$-walk in $\mc B$.
 If $\mc B$ is a temporal subgraph of a temporal graph  $\mc G=(V,A,\tau)$  we say that $\mc B$ is a \TOB of $\mc G$ rooted in $r$. If in addition $ V_{\mc B}=V$, we say  that $\mc B$ is a \emph{spanning} \TOB of $\mc G$ rooted in $r$. }
\end{definition}
 
\noindent The following lemma provides different characterizations of a \TOB.

\begin{lemma}\label{lem:tob}
Let $\mc B\!=\!(V,A,\tau)$ be a temporal graph and $\mc D$ be the underlying multidigraph of $\mc B$. The following facts are equivalent:
\begin{enumerate}
    \item $ \mc B$ is a {\rm TOB} with root $r$; 
    \item\label{lem:tob_item2} For every $v\in V\setminus \{r\}$, there is a temporal $(r,v)$-walk in $\mc B$, $d\sub { \mc B}^-(r)=0$ and $ d\sub {\mc B}^-(v)=1$;
    \item\label{lem:tob_item3} For every $v\in V$, there is a temporal $(r,v)$-walk in $\mc B$, and $|A|=|V|-1$;
    \item $\mc D$ is a digraph which is an out-branching with root $r$ and, for every $v\in V\setminus \{r\}$, the unique $(r,v)$-walk in $\mc D$ is the unique temporal $(r,v)$-walk in $\mc B$. 
\end{enumerate}
\end{lemma}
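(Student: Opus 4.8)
The plan is to prove Lemma~\ref{lem:tob} by establishing the cycle of implications $1 \Rightarrow 4 \Rightarrow 3 \Rightarrow 2 \Rightarrow 1$, leaning heavily on the static characterization in Lemma~\ref{lem:outbranch} applied to the underlying multidigraph $\mc D$. The key preliminary observation, which I would isolate first, is that in a \TOB no two temporal arcs can share both endpoints with the same orientation: if $a,a'$ both have tail $u$ and head $v$, then one of them can be removed without destroying temporal reachability (any walk using $a$ can use $a'$ or vice versa — actually one needs to be slightly careful here, since temporal constraints matter, so the cleaner route is to argue via minimality that the arc set cannot contain redundancy of this kind). So I would first show that minimality forces $\mc D$ to be a genuine digraph (the functions $\tail,\head$ are jointly injective), and moreover that every temporal $(r,v)$-walk, being extractable to a temporal path, corresponds to a walk in $\mc D$.

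For $1 \Rightarrow 4$: assume $\mc B$ is a \TOB with root $r$. By minimality, for each $v \neq r$ there is a temporal $(r,v)$-walk, hence a temporal $(r,v)$-path, hence a $(r,v)$-walk in $\mc D$; so every vertex is reachable from $r$ in $\mc D$. Now I claim $|A| = |V|-1$: if $\mc D$ had a vertex $v$ with $d^-_{\mc D}(v) \geq 2$, I would take the two arcs $a_1, a_2$ into $v$ and argue that one of them, say $a_2$, is not used as the "last arc" of the canonical temporal walk reaching $v$ that we fix; more robustly, I would pick a spanning structure and show some arc is removable. The clean argument: among all minimal temporal reachability-from-$r$ structures, the underlying digraph must be a (static) out-branching, because if it strictly contains an out-branching rooted at $r$ on $V$ (which it does, since it is reachable and we can extract one greedily following temporal paths), then it has a superfluous arc unless it equals that out-branching — and here one must check the extracted static out-branching still certifies temporal reachability, which holds because we extract it from the union of temporal paths realizing reachability to each vertex. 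Then $|A| = |V|-1$ and $d^-(r)=0$, $d^-(v)=1$ for $v\neq r$ by Lemma~\ref{lem:outbranch}. Uniqueness of the temporal $(r,v)$-walk follows since any such temporal walk projects to a walk in $\mc D$, and in an out-branching there is exactly one $(r,v)$-walk.

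The implications $4 \Rightarrow 3$ and $3 \Rightarrow 2$ are then mostly bookkeeping: $4 \Rightarrow 3$ is immediate since an out-branching on $V$ has $|V|-1$ arcs and gives temporal reachability of every vertex (the trivial walk for $r$ itself); $3 \Rightarrow 2$ follows from Lemma~\ref{lem:outbranch}(3$\Rightarrow$2) applied to $\mc D$ — but here I need to first deduce from hypothesis 3 that $\mc D$ is a digraph, which again uses that $|A| = |V| - 1$ together with reachability: if $\tail,\head$ were not jointly injective, collapsing a parallel pair would leave a connected (in the reachability sense) structure on $V$ with fewer than $|V|-1$ arcs, impossible since reaching $|V|$ vertices from $r$ needs at least $|V|-1$ arcs. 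Finally $2 \Rightarrow 1$: hypothesis 2 plus Lemma~\ref{lem:outbranch}(2$\Rightarrow$1) makes $\mc D$ an out-branching, so $|A| = |V|-1$; then $A$ is minimal for the temporal reachability property because removing any arc disconnects some vertex from $r$ even in the static sense (in an out-branching every arc is a bridge), a fortiori temporally.

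The main obstacle I anticipate is the careful handling of parallel temporal arcs and the gap between "temporal walk" and "walk in $\mc D$": a \TOB is defined via minimality of the temporal arc set, and one must rule out that minimality could be achieved with a non-injective $\tail/\head$ pair or with an underlying digraph that is not acyclic. The resolution is the counting argument — temporal reachability of all $n$ vertices from $r$ requires at least $n-1$ arcs, and an out-branching (which can always be extracted from the union of realizing temporal paths) achieves exactly this — so any minimal set has exactly $n-1$ arcs and its underlying digraph is forced to be an out-branching. I would make sure to state explicitly the extraction step (union of temporal $(r,v)$-paths over all $v$ contains, as a subgraph of $\mc D$, a static out-branching rooted at $r$) since that is the linchpin connecting the temporal minimality condition to the static structure theorem.
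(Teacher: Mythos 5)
Your cycle $4 \Rightarrow 3 \Rightarrow 2 \Rightarrow 1$ is fine (and your explicit handling of the multidigraph-versus-digraph bookkeeping via the counting argument is sound), but the heart of the lemma, your $1 \Rightarrow 4$, has a genuine gap exactly at the step you call the linchpin. You claim that a static out-branching extracted "greedily" from the union of the temporal $(r,v)$-paths $P_v$ automatically certifies temporal reachability "because we extract it from the union of temporal paths realizing reachability to each vertex." That justification is invalid: provenance of the arcs is not enough. Take $V=\{r,u,v\}$ with $P_u$ the single arc $(r,u,2,2)$ and $P_v$ the path with arcs $(r,u,1,1),(u,v,1,1)$; the arc set $\{(r,u,2,2),(u,v,1,1)\}$ is a static out-branching contained in $\bigcup_w A(P_w)$, yet $v$ is not temporally reachable in it. Likewise, choosing for each $w$ the last arc of $P_w$ need not even give a static out-branching (the parent pointers can cycle). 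So the extraction has to be done with care, and that care is precisely the content you skipped. Note also that the statement you are invoking ("from the union of realizing temporal paths one can extract an out-branching that still certifies temporal reachability") is essentially the existence of a spanning \TOB under reachability, which the paper obtains as a consequence of this very lemma (Remark \ref{rem:tob}$(i)$); leaning on it here is close to circular unless you prove it independently.

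The missing argument is the one you briefly sketched and then abandoned: if some $v\neq r$ has two incoming temporal arcs $a_1,a_2$ with $t_a(a_1)\leq t_a(a_2)$, then $a_2$ is removable. The point is not that walks through $a_2$ can "use $a_1$ instead" (starting times need not allow that), but a rerouting: by minimality $A=\bigcup_w A(P_w)$, pick $P_{v_1}$ containing $a_1$, let $X$ be its $v$-prefix (which does not use $a_2$, since a path has at most one arc into $v$), and for any $u$ whose path $P_u$ uses $a_2$, replace $P_u$ by $X+S$ where $S$ is the $v$-suffix of $P_u$ after $a_2$; this concatenation is temporal precisely because $t_a(X)=t_a(a_1)\leq t_a(a_2)\leq t_s(S)$. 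This contradicts minimality, forces $d^-_{\mc B}(r)=0$ and $d^-_{\mc B}(v)=1$ for all $v\neq r$, and from there your route to statement 4 (and the rest of your cycle) goes through. Without this arrival-time comparison and prefix/suffix surgery, the bridge from temporal minimality to the static structure theorem is asserted, not proved.
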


\begin{proof}
$\emph{1.} \implies \emph{2.} $
Since the existence of a temporal $(r,v)$-walk in $\mc B$ for all $v\in V$ is guaranteed by definition, we just need to show that $d \sub {\mc B}^-(r)=0 $ and that, for every $v\in V\setminus \{r\}$, $d\sub {\mc B}^-(v)=1 $. 
To that purpose, we first describe the set $A$ of arcs of $\mc B$.
Since from every temporal walk it is possible to extract a temporal path with the same extremes, we have that there exists 
a $(r,v)$-path in $\mc B$ for all $v\in V.$
Fix now one $(r,v)$-path $P_v$ for each $v\in V.$ By the minimality of $A$, we deduce that
\begin{equation}\label{archiT}
A= \bigcup_{v\in V}A(P_v).    
\end{equation} 
As an immediate consequence of \eqref{archiT}, there exists no arc in $A$ entering in $r$ and hence $d\sub{\mc B}^-(r)=0 $.
Now suppose, by contradiction, that there exists $v\in V\setminus \{r\}$ such that $d\sub{\mc B}^-(v)\neq 1 $. If $ d\sub{\mc B}^-(v)=0$, then $v$ is not reachable from $r$, a contradiction. Thus we must have $d\sub{\mc B}^-(v)\geq 2$.
Let $a_1,a_2\in A$ be two different incoming arcs of $v$ with $t_a(a_1)\leq  t_a(a_2).$ We claim that we can delete the temporal arc $a_2$ from $A$ while maintaining the property that every vertex is temporally reachable from $r$, and thus contradicting the minimality of $A$.  Delete $a_2$. By \eqref{archiT}, there exists $v_1\in V$ such that $a_1\in A(P_{v_1}).$ Since in a path there cannot be two different arcs entering the same vertex, we have that $a_2\notin A(P_{v_1}),$ because $a_2\neq a_1.$ 
In particular, $a_2$ is not an arc for the $v$-prefix $X$ of $P_{v_1}.$ 
Let $u\in V$ and consider $P_u.$ If $a_2\notin A(P_u)$, surely $u$ is temporally reachable from $r$ after the removal of $a_2$.
Assume next that  $a_2\in A(P_u)$. Then, since in a path an arc appears at most once, we have that $a_2$ does not appear in the $v$-suffix $S$ of $P_u$.
We consider then the $(r,u)$-walk given by $P=X+S$. Note that $a_2\notin A(P)$ and that $P$ is temporal because $t_a(a_1)\leq  t_a(a_2)$. Hence, again, $u$ is temporally reachable from $r$ after the removal of $a_2$.
 \\
$\emph{2.} \implies \emph{3.}$ Since, by assumption, we have  $d\sub {\mc D}^-(r)=0$ and $ d\sub {\mc D}^-(v)=1$, the fact that $|A|=|V|-1$ follows from $\emph{2.}$ implies $\emph{3.}$ in Lemma \ref{lem:outbranch}. The temporal reachability is trivially true, by assumption.
\\
$\emph{3.} \implies \emph{4.} $  By $\emph{3.}$ implies $\emph{1.}$ in Lemma \ref{lem:outbranch}, we have that $\mc D$ is an out-branching with root $r$. In particular, $\mc D$ is a digraph. Let now $W$ be a temporal $(r,v)$-walk in $\mc B$. Then $W$ is also a $(r,v)$-walk in $\mc D$. By definition of out-branching,  there is a unique $(r,v)$-walk in $\mc D$, so $W$ is the unique temporal $(r,v)$-walk in $\mc B$.
\\
$ \emph{4.} \implies \emph{1.}$
The temporal reachability from the root is guaranteed by hypothesis. Since $\mc D$ is an out-branching, by Lemma \ref{lem:outbranch}, it has $|V|-1$ arcs. Thus, if we delete any arc, then we necessarily disconnect some vertex from the root. Hence $A$ is minimal. 
\end{proof}
\noindent Note that, in particular, Lemma \ref{lem:tob} guarantees that in a \TOB there is a unique temporal walk from the root to any vertex and such a walk is necessarily a temporal path.


\noindent We now want to specialize the concept of \TOB to the various distances. The idea is that we are not only interested in temporally reaching the maximum number of vertices from the root, but we want also to minimize their distance from the root, according to the chosen distance.
\begin{definition}\label{def:Dtob}
{\rm  Let $\mc G=(V,A,\tau)$ be a temporal graph, $r\in V$ and $\textsc{d}\in \{\EA, \FT,\LD, \MT,   \MW,\ST \}$. A \TOB $\mc B=(V\sub {\mc B},A\sub {\mc B},\tau\sub {\mc B})$ of $\mc G$,  rooted in $r$, is called a $\textsc{d}$-\TOB of $\mc G$ rooted in $r$ if, for every $v\in V\sub {\mc B}$, we have  $ \textsc{d}\sub{{\mc B}}(r,v)=\textsc{d}\sub{\mc G}(r,v)$. If in addition $\mc B$ is spanning, we say that $\mc B$ is a \emph{spanning} $\textsc{d}$-\TOB of $\mc G$ rooted in $r$; if instead, in addition,  $|V_\mc B|$ is the largest possible among all the $\textsc{d}$-\TOB of $\mc G$ rooted in $r$, we say that  $\mc B$ is a \emph{maximum} $\textsc{d}$-\TOB of $\mc G$ rooted in $r$.}
\end{definition}
\begin{remark}\label{rem:tob}
 Let  $\mc G=(V,A,\tau)$ be a temporal graph and $r\in V$. Then the following facts hold:
 \begin{itemize}\item[$(i)$] 
Let $\{r\}\subseteq Z\subseteq V$. Then $\mc G$ has a \TOB $\mc B$ rooted in $r$, with vertex set $Z$, if and only if every $v\in Z$ is temporally reachable from $r$ in $ \mc G$. In particular, $\mc G$ has a spanning \TOB $\mc B$ rooted in $r$ if and only if every $v\in V$ is temporally reachable from $r$ in $ \mc G$. 

 \item[$(ii)$] Let $\textsc{d}\in \{\EA, \FT,\LD, \MT,   \MW,\ST \}$. If $\mc B$ is a $\textsc{d}$-\TOB of $\mc G$ rooted in $r$ and with vertex set $Z$, then $\mc B $ is a spanning $\textsc{d}$-\TOB of $\mc G[Z] $ rooted in $r$.
 \end{itemize}
\end{remark}
\begin{proof} $(i)$ If $\mc G$ has a \TOB $\mc B$ rooted in $r$ and with vertex set $Z$, then, by Lemma \ref{lem:tob}, every $v\in Z$ is temporally reachable from $r$ in $ \mc B$ and hence also in  $\mc G$.
Conversely, assume that $Z\subseteq V$ is such that every $v\in Z$ is temporally reachable from $r$ in $ \mc G$. Let $\mc U:=\mc G[Z] $ and let $A_{\mc U}$ be its set of  arcs. If $A_{\mc U}$ is minimal with respect to the reachability in $\mc U$ from the root $r$ then, by  Lemma \ref{lem:tob}, $\mc U$ is a \TOB with vertex set $Z.$ If not, we can delete a finite number of arcs until we reach a minimal set of arcs capable to guarantee the reachability from the root and hence obtain, by Lemma \ref{lem:tob}, a \TOB $\mc B$ with vertex set $Z.$\\
\noindent $(ii)$ Assume that $\mc B$ is a $\textsc{d}$-\TOB of $\mc G$ rooted in $r$ and with vertex set $Z$. Then $\mc B$ is a \TOB and surely $\mc B$ is a temporal subgraph of $\mc U:=\mc G[Z] $. Now, for every $v\in Z$, we have  
$\textsc{d}\sub{{\mc U}}(r,v)\leq \textsc{d}\sub{{\mc B}}(r,v)=\textsc{d}\sub{\mc G}(r,v)\leq \textsc{d}\sub{{\mc U}}(r,v).$
As a consequence, we also have $\textsc{d}\sub{{\mc U}}(r,v)=\textsc{d}\sub{{\mc B}}(r,v).$ This completes the proof.
\end{proof}

\begin{problem}[Maximum {\sc d}-\TOB]\label{prob:tob}
Let $\textsc{d}\in \{\EA, \FT,\LD, \MT,   \MW,\ST \}$ and $\mc G$ be a temporal graph. Find a maximum $\textsc{d}$-\TOB of $\mc G$.
\end{problem}
\noindent Problem \ref{prob:tob} has already been solved for $\textsc{d}=\EA$ in \cite{huang2015}. Their result also implies that a maximum \EA-\TOB spans all the vertices that are temporally reachable from the root, thus the vertex set of a maximum $\EA$-\TOB is uniquely identified. One could be tempted to solve Problem \ref{prob:tob} by simply applying the Dijkstra's algorithm on the static expansion of the temporal graph. This does not produce unfortunately the correct solution. We address the reasons in the following remark.

\begin{figure}
\begin{subfigure}[t]{0.5\textwidth}
\begin{tikzpicture}[scale=0.4]
\SetVertexStyle[FillColor=white]
\SetEdgeStyle[Color=black,LineWidth=1pt]
  \Vertex[y=0,x=0,label=r,color=white!70!black]{1}
  \Vertex[y=0,x=4,label=v]{2}
  \Vertex[y=0,x=8,label=x]{3}
  \Vertex[y=0,x=12,label=y]{4}
  \Edge[Direct,label={(1,1)},bend=40](1)(2)
    \Edge[Direct,label={(2,2)},bend=-40,style=dotted](1)(2)
  \Edge[Direct,label={(1,1)},bend=40](2)(3)
  \Edge[Direct,label={(2,2)},bend=-40,style=dotted](2)(3)
  \Edge[Direct,label={(1,1)}](3)(4)
\end{tikzpicture}
\caption{Temporal graph $\mc G$ that has a spanning $\MT$-\TOB \\ with root $r$ (solid arcs)}\label{fig:Temp_exp1}
\end{subfigure}
\begin{subfigure}[t]{0.5\textwidth}
\begin{tikzpicture}[scale=0.4]
\SetVertexStyle[FillColor=white]
\SetEdgeStyle[Color=black,LineWidth=1pt]
 \Vertex[y=0,x=0,label={[0]},position=left,color=white!70!black]{1}
  \Vertex[y=0,x=5,label={[0]},position=right]{2}
    \Vertex[y=-3,x=0,label={[1]},position=left]{3}
  \Vertex[y=-3,x=5,label={[1]},position=right]{4}
   \Vertex[y=-6,x=0,label={[2]},position=left]{5}
  \Vertex[y=-6,x=5,label={[2]},position=right]{6}
   \Vertex[y=-9,x=0,label={[3]},position=left]{7}
  \Vertex[y=-9,x=5,label={[3]},position=right]{8}

  \Vertex[y=0,x=0,label={(r,1)},color=white!70!black]{1}
  \Vertex[y=0,x=5,label={(r,2)}]{2}
    \Vertex[y=-3,x=0,label={(v,1)}]{3}
  \Vertex[y=-3,x=5,label={(v,2)}]{4}
   \Vertex[y=-6,x=0,label={(x,1)}]{5}
  \Vertex[y=-6,x=5,label={(x,2)}]{6}
   \Vertex[y=-9,x=0,label={(y,1)}]{7}
  \Vertex[y=-9,x=5,label={(y,2)}]{8}

    \Edge[Direct,label={0},position=above](1)(2)
  \Edge[Direct,label={0},position=above,style=dotted](3)(4)
  \Edge[Direct,label={0},position=above,style=dotted](5)(6)
  \Edge[Direct,label={0},position=above](7)(8)
  
  \Edge[Direct,label={1},position=left](1)(3)
  \Edge[Direct,label={1},position=left](3)(5)
  \Edge[Direct,label={1},position=left](5)(7)
\Edge[Direct,label={1},position=left](2)(4)
  \Edge[Direct,label={1},position=left](4)(6)

\end{tikzpicture}
\caption{Static expansion of $\mc G$ and out-branching as output of Dijkstra's algorithm (solid arcs). Numbers in square bracket represent the $\MT$ distance computed.}\label{fig:Temp_exp2}
\end{subfigure}
\caption{Explanatory figure for Remark \ref{rem:static_exp}.} 
\label{fig:Temp_exp}
\end{figure}
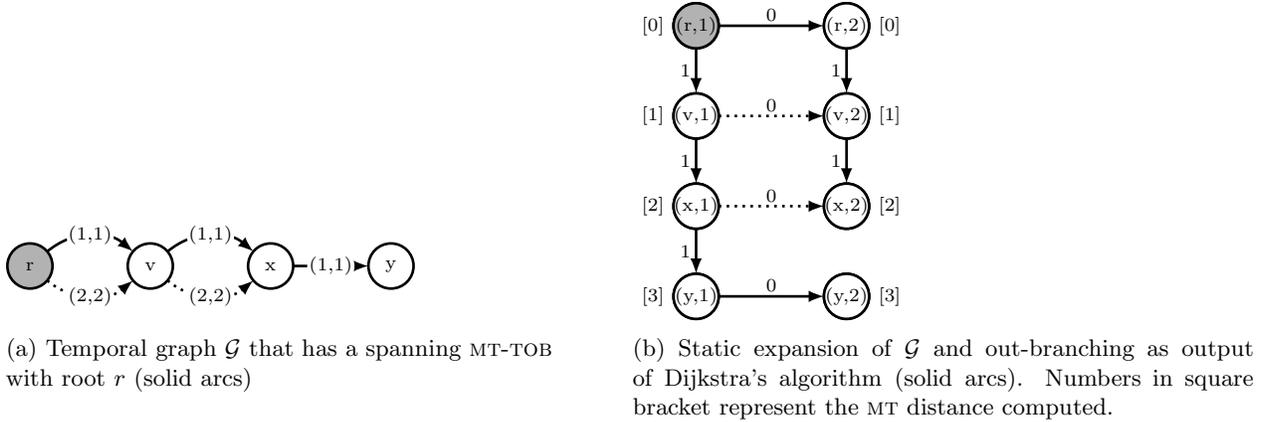

\begin{remark}\label{rem:static_exp}
We remind that the static expansion of a temporal graph $\mathcal{G}=(V,A,\tau)$ is the digraph $SE(\mc G)=(\mc V, \mc A)$ where $\mc V= \{(v,t): v\in V,\, t\in [\tau]  \}$ and $\mc A= \mc M \cup \mc W$ where $ \mc M=\{((u,s),(v,t)): (u,v,s,t)\in A  \}$ and $ \mc W= \{((v,t),(v,t+1)): v\in V, t\in [\tau-1]   \}$, see also \cite{latapy2018,Michail2016}.
The static expansion can be used for computing single source distances $\textsc{d}\sub {\mc G}(r,v)$ by Dijkstra's algorithm, providing each arc in $\mc A$ a suitable weight.
Unfortunately the out-branching that the Dijkstra's algorithm returns on $SE(\mc G)$ does not translate into a $\textsc{d}$-\TOB of $\mc G$. The first problem is that by collapsing back all the vertices $\{ (v,t):t\geq 1  \}$ to $v$, it is not guaranteed that the indegree of $v$ will remain equal to $1$. 
For example, consider the temporal graph of  Figure (\ref{fig:Temp_exp1}) and the out-branching produced by Dijkstra's algorithm for the distance $\textsc{d}=\MT$ on its static expansion in Figure (\ref{fig:Temp_exp2}): if we collapse the vertices of the out-branching, we get as a result the original temporal graph itself. 
Moreover, notice that the vertices $(v,1)$ and $(v,2)$ reach the same distance through the Dijkstra's algorithm (numbers in square brackets), and the same happens to the vertices $(x,1)$ and $(x,2)$; but only the choice of $(v,1)$ and $(x,1)$ would let us achieve a maximum $\MT$-\TOB, while the choice of $(v,2)$ and $(x,2)$ would not. A similar example can be produced for the other distances.  
\end{remark}

\noindent The following concepts will allow us to establish a necessary and sufficient condition for the existence of a spanning $\textsc{d}$-\TOB with root $r$ in a temporal graph in Section \ref{sec:alg}.

\begin{definition}\label{def:prefix-optimal}
Let $\mc G$ be a temporal graph and $W$ be a temporal $(u,v)$-walk in $\mc G$. For every $\textsc{d}\in \{\EA,\FT,\LD,\MT,\MW,\ST \}$ we say that: 
 \begin{itemize}
     \item $W$ is $\textsc{d}$-\emph{prefix-optimal} if, for every $ x\in V(W)$, any $x$-prefix of $W$ realizes $\textsc{d}\sub {\mc G}(u,x)$;
     \item $W$ is $\EA\textsc{d}$-\emph{prefix-optimal} if it is $\textsc{d}$-\emph{prefix-optimal} and, for every $x\in\ V(W)$, any $x$-prefix of $W$ realizes $\EA\textsc{d}\sub {\mc G}(u,x)$.
 \end{itemize}
\end{definition}
\subsection{Temporal in-branching}\label{sec:TIB}

In this section, we present definitions of temporal in-branchings and prove that the related problems are computationally equivalent to \TOBs. 

\begin{definition}
{\rm A temporal graph $\mc B=(V\sub {\mc B},A\sub {\mc B},\tau\sub {\mc B})$ is called a \emph{temporal in-branching} (\TIB) with root $r$ if $A\sub {\mc B}$ is a minimal set of temporal arcs such that for all $ v\in V$, there exists a temporal $(v,r)$-walk in $\mc B$.
Given $\mc G=(V,A,\tau)$ a temporal graph, $r\in V$, and $\mc B=(V\sub {\mc B},A\sub {\mc B},\tau\sub {\mc B})$ a temporal subgraph of $\mc G$ that is a \TIB rooted in $r$, we say that $\mc B$ is \emph{spanning} if $V\sub {\mc B}=V$ and \emph{maximum} if $|V\sub {\mc B}|$ is the largest possible.
Given $\textsc{d}\in \{\EA,\LD,\MT,\FT,\ST,\MW \}$ and $\mc B$ a \TIB with root $r$ of $\mc G$, we say that $\mc B$ is a $\textsc{d}$-\TIB of $\mc  G$ if for every $v\in V\sub  { \mc B}$, $ \textsc{d}\sub{ { \mc B}}(v,r)=\textsc{d}\sub{\mc G}(v,r)$. If in addition $V\sub {\mc B}=V$, then $\mc B $ is a  \emph{spanning} $\textsc{d}$-\TIB, and if $|V\sub {\mc B}|$ is the largest possible among the $\textsc{d}$-\TIB rooted in $r$, then $\mc B$ is a \emph{maximum} $\textsc{d}$-\TIB.}
\end{definition}

\begin{problem}[Maximum {\sc d}-\TIB]\label{prob:tib}
Let $\textsc{d}\in \{\EA, \FT,\LD, \MT,   \MW,\ST \}$ and $\mc G$ be a temporal graph. Find a maximum $\textsc{d}$-\TIB of $\mc G$.
\end{problem}
\noindent The next proposition shows that finding maximum \TIBs can be reduced to finding maximum \TOBs in an auxiliary temporal graph; we first need to define the transformation $\cir$ of a temporal graph that reverses the order of the timesteps as well as the direction of the arcs. A similar transformation has been used e.g. in~\cite{Calamai2022}.
\begin{definition}\label{def:reverse}
Given a temporal graph $\mc G=(V,A,\tau)$, we define the \emph{reverse} of $\mc G$ as the temporal graph $\mathcal{G}^{\circlearrowleft}=(V,A^{\cir},\tau)$ where 
 $A^{\cir} =\{ (\head(a),\tail(a),\tau-t_a(a)+1,\tau-t_s(a)+1  ):a\in A\}:=\{a^{\cir}:a\in A \}$. 
\end{definition}


\begin{proposition}\label{prop:equiv_tobtib}
Given a temporal graph $\mc G$, it holds that:
\begin{enumerate}
    \item $\mc B$ is a maximum \EA-\TIB of ${\mc G}$ if and only if $\mc B^{\cir}$ is a maximum \LD-\TOB of $\mathcal{G}^{\circlearrowleft}$;
    \item $\mc B$ is a maximum \LD-\TIB of ${\mc G}$ if and only if $\mc B^{\cir}$ is a maximum \EA-\TOB of $\mathcal{G}^{\circlearrowleft}$;
    \item For each $\textsc{d}\!\in\! \{\FT,\MT,\MW,\ST \} $, $\mc B$ is a maximum $\textsc{d}$-\TIB of ${\mc G}$ if and only if $\mc B^{\cir}$ is a maximum $\textsc{d}$-\TOB of $\mathcal{G}^{\circlearrowleft}$.
\end{enumerate}  
\end{proposition}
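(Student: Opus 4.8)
The strategy is to show that the reversal operation $\cir$ of Definition \ref{def:reverse} is a vertex‑set‑preserving bijection between the temporal subgraphs of $\mc G$ and those of $\mc G^{\cir}$ which (i) sends \TIBs of $\mc G$ rooted in $r$ to \TOBs of $\mc G^{\cir}$ rooted in $r$, and (ii) converts the $\textsc{d}$-\TIB condition into the $\textsc{d}'$-\TOB condition, where $\textsc{d}'=\LD$ if $\textsc{d}=\EA$, $\textsc{d}'=\EA$ if $\textsc{d}=\LD$, and $\textsc{d}'=\textsc{d}$ for $\textsc{d}\in\{\FT,\MT,\MW,\ST\}$. Since $\cir$ is an involution ($(\mc G^{\cir})^{\cir}=\mc G$) and preserves the size of the vertex set, it then carries maximum $\textsc{d}$-\TIBs of $\mc G$ to maximum $\textsc{d}'$-\TOBs of $\mc G^{\cir}$ and back, which is exactly the three claimed equivalences. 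All reversals are taken with respect to the ambient lifetime $\tau$, so that $\mc B^{\cir}$ is a temporal subgraph of $\mc G^{\cir}$; this is harmless because none of $\EA,\FT,\MT,\MW,\ST$ and none of the \TOB/\TIB notions change if one enlarges the lifetime of a temporal subgraph, and a \LD-\TIB (resp.\ \LD-\TOB) has lifetime $\tau$ anyway, by the convention $\LD(r,r)=\tau+1$.

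\textbf{Step 1: reversal of walks.} I would first set up the walk correspondence. For a temporal $(v,r)$-walk $W=(v=v_0,a_1,v_1,\dots,a_k,v_k=r)$ in $\mc G$, put $W^{\cir}:=(r,a_k^{\cir},v_{k-1},\dots,v_1,a_1^{\cir},v)$. Using Definition \ref{def:reverse} one checks that the time‑respecting inequality $t_a(a_i)\le t_s(a_{i+1})$ becomes $t_a(a_{i+1}^{\cir})=\tau-t_s(a_{i+1})+1\le \tau-t_a(a_i)+1=t_s(a_i^{\cir})$, so $W^{\cir}$ is a temporal $(r,v)$-walk in $\mc G^{\cir}$; since $(W^{\cir})^{\cir}=W$, the map $W\mapsto W^{\cir}$ is a bijection $\mc W\sub{\mc G}(v,r)\to\mc W\sub{\mc G^{\cir}}(r,v)$. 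A short computation gives $\len(W^{\cir})=\len(W)$, $t_s(W^{\cir})=\tau+1-t_a(W)$ and $t_a(W^{\cir})=\tau+1-t_s(W)$, hence $\dur(W^{\cir})=\dur(W)$, $\wait(W^{\cir})=\wait(W)$ (each gap $t_s(a_{i+1})-t_a(a_i)$ is preserved) and $\weight(W^{\cir})=\weight(W)$ (since $el(a^{\cir})=el(a)$); the trivial walk $v=r$ is treated separately, all walk‑functions vanishing on both sides. In particular $v$ is temporally reachable from $r$ in $\mc G^{\cir}$ iff $r$ is temporally reachable from $v$ in $\mc G$.

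\textbf{Step 2: reversal of distances and of the branching structure.} From Step 1, for $\textsc{d}\in\{\FT,\MT,\MW,\ST\}$ the optimised quantity is $\cir$-invariant and a minimum on both sides, so $\textsc{d}\sub{\mc G}(v,r)=\textsc{d}\sub{\mc G^{\cir}}(r,v)$ and $W$ realises the left‑hand side iff $W^{\cir}$ realises the right‑hand side. For the $\EA/\LD$ pair, $x\mapsto\tau+1-x$ is order‑reversing, which gives $\EA\sub{\mc G}(v,r)=\tau+1-\LD\sub{\mc G^{\cir}}(r,v)$ and $\LD\sub{\mc G}(v,r)=\tau+1-\EA\sub{\mc G^{\cir}}(r,v)$ (the conventions $\EA(r,r)=0$, $\LD(r,r)=\tau+1$ are consistent with this), and again realisation is preserved. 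For the branching structure, the underlying multidigraph of $\mc B^{\cir}$ is the arc‑reversal of that of $\mc B$ and $A\sub{\mc B^{\cir}}=\{a^{\cir}:a\in A\sub{\mc B}\}$ is in arc‑for‑arc bijection with $A\sub{\mc B}$; since by Step 1 ``every $v\in V\sub{\mc B}$ reaches $r$ in $\mc B$'' is equivalent to ``$r$ reaches every $v\in V\sub{\mc B}$ in $\mc B^{\cir}$'', and this is stable under deleting a matched pair of arcs, $A\sub{\mc B}$ is minimal for the first property iff $A\sub{\mc B^{\cir}}$ is minimal for the second, i.e.\ $\mc B$ is a \TIB of $\mc G$ rooted in $r$ iff $\mc B^{\cir}$ is a \TOB of $\mc G^{\cir}$ rooted in $r$ (one may equivalently invoke item 4 of Lemma \ref{lem:tob} and its in‑branching analogue, which follows from Lemma \ref{lem:outbranch} applied to the reversed digraph).

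\textbf{Step 3: conclusion and the main obstacle.} Combining Steps 1--2: $\mc B$ is a $\textsc{d}$-\TIB of $\mc G$ rooted in $r$ iff $\mc B^{\cir}$ is a \TOB of $\mc G^{\cir}$ rooted in $r$ with $\textsc{d}'\sub{\mc B^{\cir}}(r,v)=\textsc{d}'\sub{\mc G^{\cir}}(r,v)$ for every $v\in V(\mc B^{\cir})$, i.e.\ a $\textsc{d}'$-\TOB of $\mc G^{\cir}$, with $V(\mc B^{\cir})=V(\mc B)$. Hence $\cir$ restricts to a bijection between $\textsc{d}$-\TIBs of $\mc G$ and $\textsc{d}'$-\TOBs of $\mc G^{\cir}$ preserving the cardinality of the vertex set, so it sends those with a largest vertex set to those with a largest vertex set, proving items 1, 2, 3 at once. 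The only genuinely delicate point I expect is the $\EA\leftrightarrow\LD$ bookkeeping: reversal turns arrival times into starting times and a minimum into a maximum, and one must check that the $\tau+1$ shift in the definition of $\LD$ (including on the trivial walk) is exactly the order‑reversing affine map $x\mapsto\tau+1-x$ matching the two optimisations; once this is pinned down the rest is routine verification.
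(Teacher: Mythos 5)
Your proposal is correct and follows essentially the same route as the paper's proof: reverse walks via $\cir$, check that $\len,\dur,\wait,\weight$ are preserved while $t_s,t_a$ swap under $x\mapsto\tau+1-x$ (giving the $\EA\leftrightarrow\LD$ exchange), note that \TIBs correspond to \TOBs under reversal, and conclude by the vertex-set-preserving bijection. Your treatment is somewhat more explicit than the paper's (e.g.\ the minimality argument for the \TIB/\TOB correspondence, the ambient-lifetime convention, and the maximality step), but these are refinements of the same argument rather than a different approach.
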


\begin{proof}
 Observe that $ (\mc G^{\cir})^{\cir}=\mc G $ and that $W=(u,a_1,v_2, a_2, \dots ,v_k,a_{k}, v)$ is a temporal $(u,v)$-walk in $\mc G$ if and only if $W^{\cir}=(v,a^{\cir}_{k},v_k, \dots , a^{\cir}_2,v_2,a^{\cir}_1,u)$ is a temporal $(v,u)$-walk in $ \mc G^{\cir}$. Then note that for any walk $W$ in $\mc G$ it holds that $(W^{\cir})^{\cir}=W$, $t_s(W^{\cir})= \tau- t_a(W) +1 $, $t_a(W^{\cir})= \tau- t_s(W) +1 $, $ \len(W)=\len(W^{\cir})$, 
 $\dur(W)=\dur(W^{\cir})$, $\wait(W)=\wait(W^{\cir}) $, and , $\weight(W)=\weight (W^\cir) $. It is also easy to see that $\mc B$ is a \TIB with root $r$ if and only if $\mc B^{\cir}$ is a  \TOB with root $r$. 
Let now $W$ and $W'$ be two $(v,r)$-walks in $\mc G$.
We claim that $W$ realizes $\EA(v,r)$ in $\mc G$ if and only if $W^{\cir}$ realizes $\LD(r,v)$ in $\mc G^{\cir}$. Indeed, $t_a(W)\leq t_a(W')$ if and only if $\tau- t_s(W^{\cir})+1\leq \tau- t_s(W'^{\cir})+1$ if and only if $t_s(W^{\cir})\geq  t_s(W'^{\cir})$.
Similarly, $W$ realizes $\LD(v,r)$ in $\mc G$ if and only if $W^{\cir}$ realizes $\EA(r,v)$ in $\mc G^{\cir}$. Indeed,  $t_s(W)\geq t_s(W')$ if and only if $\tau- t_a(W^{\cir})+1\geq \tau- t_a(W'^{\cir})+1$ if and only if $t_a(W^{\cir})\leq  t_a(W'^{\cir})$. 
We now prove that, for $\textsc{d}\in \{\FT,\MT,\MW,\ST \}$, $W$ realizes $\textsc{d}(v,r)$ in $\mc G$ if and only if $W^{\cir}$ realizes $\textsc{d}(r,v)$ in $\mc G^{\cir}$. In fact it holds that $\dur(W)\leq \dur(W')$ if and only if $\dur(W^{\cir})\leq  \dur(W'^{\cir})$, $\len(W)\leq \len(W')$ if and only if $\len (W^{\cir})\leq  \len (W'^{\cir})$, $\wait(W)\leq \wait(W')$ if and only if $\wait(W^{\cir})\leq  \wait(W'^{\cir})$, and $\weight(W)\leq \weight(W')$ if and only if $\weight (W^{\cir})\leq  \weight (W'^{\cir})$. This concludes the proof.
\end{proof}

\section{Computing maximum {\sc d}-temporal out-branchings for latest departure time, minimal transfers and shortest time distances}\label{sec:alg}



\noindent Before stating necessary and sufficient conditions for the existence of a spanning $\textsc{d}$-\TOB in a temporal graph, we need the following lemma. It intuitively says that a $v$-prefix in a $\textsc{d}$-prefix-optimal walk can be replaced by another $\textsc{d}$-prefix-optimal walk that arrives in $v$ earlier.
\begin{lemma}\label{claim:subopt} Let $\mc G=(V,A,\tau)$ be a temporal graph, $r,u\in V$ and $\textsc{d}\in \{\MT,\ST,\LD  \} $. 
Let $W$ be a $\textsc{d}$-prefix-optimal temporal $(r,u)$-walk in $\mc G$ and $v\in V(W)$. Let $ S$ be a $v$-suffix of $W$ and let $W_v$ be a $\textsc{d}$-prefix-optimal temporal $(r,v)$-walk in $\mc G$. If $ t_a(W_v)\leq t_s(S)$, then $W_v + S$ is a $\textsc{d}$-prefix-optimal temporal $(r,u)$-walk in $\mc G$.
\end{lemma}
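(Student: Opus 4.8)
The plan is to check two things: that $W_v+S$ is a temporal $(r,u)$-walk, and that it is $\textsc{d}$-prefix-optimal. The first is immediate from the hypothesis: since $t_a(W_v)\le t_s(S)$, concatenating the temporal walk $W_v$ (which ends at $v$) with the temporal walk $S$ (which starts at $v$) yields a temporal walk, and its endpoints are those of $W$, namely $r$ and $u$. Before attacking prefix-optimality I would dispose of the case $v=r$: for each of $\textsc{d}\in\{\MT,\ST,\LD\}$, a non-trivial $(r,r)$-walk cannot realize $\textsc{d}\sub{\mc G}(r,r)$ (this is $0$ for $\MT$ and $\ST$, and $\tau+1$ for $\LD$), so the $\textsc{d}$-prefix-optimality of $W_v$ and of $W$ forces both $W_v$ and the $v$-prefix $P$ of $W$ with $W=P+S$ to be trivial; then $W_v+S=W$ and the claim holds by hypothesis. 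Hence from now on I assume $v\neq r$, so that $W_v$ and $P$ are non-trivial.

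Next, fix $x\in V(W_v+S)$ and an $x$-prefix $Q$ of $W_v+S$; I must show $Q$ realizes $\textsc{d}\sub{\mc G}(r,x)$. Split into two cases according to which side of the junction vertex $v$ the relevant occurrence of $x$ lies. If $Q$ is contained in the $W_v$-part, then $Q$ is an $x$-prefix of $W_v$, so $Q$ realizes $\textsc{d}\sub{\mc G}(r,x)$ because $W_v$ is $\textsc{d}$-prefix-optimal. Otherwise $Q=W_v+S'$ for some $x$-prefix $S'$ of $S$ (this includes $x=v$ with $S'$ trivial and $Q=W_v$). In this second case, $P+S'$ is an $x$-prefix of $W=P+S$, hence it realizes $\textsc{d}\sub{\mc G}(r,x)$ since $W$ is $\textsc{d}$-prefix-optimal; so it suffices to prove that $Q=W_v+S'$ and $P+S'$ agree on the relevant quantity, i.e.\ on $\len$ if $\textsc{d}=\MT$, on $\weight$ if $\textsc{d}=\ST$, and on $t_s$ if $\textsc{d}=\LD$.

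The key observation is that $W_v$ and $P$ are two $(r,v)$-walks that \emph{both} realize $\textsc{d}\sub{\mc G}(r,v)$: $W_v$ does because, being $\textsc{d}$-prefix-optimal, its full-length $v$-prefix (namely $W_v$ itself) realizes it; and $P$ does because it is a $v$-prefix of the $\textsc{d}$-prefix-optimal walk $W$. Therefore $\len(W_v)=\len(P)$, $\weight(W_v)=\weight(P)$, and $t_s(W_v)=t_s(P)$. Now use that $\len$ and $\weight$ are additive under concatenation and that $t_s(X+Y)=t_s(X)$ whenever $X$ is non-trivial (here $X\in\{W_v,P\}$ is non-trivial since $v\neq r$): in each of the three cases this gives $\textsc{d}(W_v+S')=\textsc{d}(P+S')=\textsc{d}\sub{\mc G}(r,x)$, so $Q$ realizes $\textsc{d}\sub{\mc G}(r,x)$, completing the proof.

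I expect the only real difficulty to be bookkeeping the degenerate situations — $W$ not a path (so an $x$ may have several prefixes, each handled by one of the two cases), the trivial-walk conventions, and the case $v=r$ — rather than anything mathematically deep; the heart of the matter is the one-line fact that swapping $P$ for $W_v$ leaves $\len$, $\weight$, and $t_s$ of all prefixes unchanged because $P$ and $W_v$ already agree on these. This also explains why the statement is restricted to $\textsc{d}\in\{\MT,\ST,\LD\}$: for $\FT$ and $\MW$ the quantities $\dur$ and $\wait$ of a concatenation are not determined by those of the parts (a junction waiting term $t_s(S)-t_a(W_v)$ appears, and it depends on $t_a(W_v)$, which $W_v$ and $P$ need not share), so the same substitution argument breaks down.
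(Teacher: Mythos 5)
Your proposal is correct in substance and follows essentially the same route as the paper's proof: the same split of an $x$-prefix of $W_v+S$ into one lying inside $W_v$ (handled by the prefix-optimality of $W_v$) versus one of the form $W_v+S'$ (handled by noting that $W_v$ and the $v$-prefix $P$ of $W$ both realize $\textsc{d}(r,v)$, together with additivity of $\len$ and $\weight$ and invariance of $t_s$ under extension); the paper treats \LD\ by the same observation, just phrased as all vertices of the walk sharing the common starting time. One side remark of yours is false, though harmless: a non-trivial $(r,r)$-walk \emph{can} realize $\ST(r,r)=0$ when its arcs all have elapsed time $0$, so your disposal of the case $v=r$ does not force $W_v$ and $P$ to be trivial for $\textsc{d}=\ST$. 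This leaves the corner case ($v=r$, $\textsc{d}=\ST$, $W_v$ non-trivial) outside your written case analysis, but it is covered verbatim by your main substitution argument, which for \MT\ and \ST\ never needs $v\neq r$ (the paper likewise makes no such case distinction); only the \LD\ case uses non-triviality of $W_v$ and $P$, and there your triviality claim is sound.
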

\begin{proof}
Since $ t_a(W_v)\leq t_s(S)$ by hypothesis, then $\Bar{W}:=W_v + S$ is a temporal $(r,u)$-walk. Let $x\in V(\Bar{W})$ and $X$ be a $x$-prefix of $\Bar{W}$. If $\textsc{d}=\LD$, since $W$ and $W_v$ are $\LD$-prefix-optimal, it holds that $ t_s(W)=\LD(r,v)=t_s(W_v) $. Since $\LD(r,v)=\LD(r,x) $ and $t_s(X)=t_s(\Bar{W})=t_s(W_v)$, we obtain that $ t_s(X)= \LD(r,x)$.
Consider now $\textsc{d}\in \{\MT,\ST \}$ and let  $P$ be the $v$-prefix of $W$ such that $W=P+S$.
If $X\subseteq W_v$, then by hypothesis $X$ realizes $ \textsc{d}(r,x)$. If $X\not\subseteq W_v$, there exists $S'\subseteq S$ such that  $X=W_v+S'.$ 
Assume first $\textsc{d}=\MT$. Then $ \ell(X)=\ell(W_v)+\ell(S')=\MT(r,v)+\ell(S')$ and, by hypothesis, $\MT(r,x)=\ell(P+S')=\ell(P)+\ell(S')= \MT(r,v)+\ell(S')$. Consequently $ \ell(X)=\MT(r,x)$.
Assume next that $\textsc{d}=\ST$. Then we have $\weight(X)=\weight(W_v)+\weight(S')=\ST(r,v)+\weight(S')$ 
and, by hypothesis, $\ST(r,x)=\weight(P+S')=\weight(P)+\weight(S')= \ST(r,v)+\weight(S')$. This implies that $ \weight(X)=\ST(r,x)$.  
\end{proof}

\noindent We emphasize that Lemma \ref{claim:subopt} does not hold for $\textsc{d}\in \{\FT, \MW \} $. Indeed consider the temporal graph in Figure \ref{fig:no_substitution_FTMW}. We have that $ \FT(r,v)=5$ and $ \MW(r,v)=1$, which is realized by the solid $(r,v)$-path. In particular, this path is $\FT$-prefix-optimal and $\MW$-prefix-optimal. Consider now the dashed $(r,x_3)$-path and call it $W$: it is both $\FT$-prefix-optimal and $\MW$-prefix-optimal. Let $a$ be the temporal arc $(x_3,v,6,7)$. Notice that $W'=W+(x_3,a,v)$ 
is a temporal $(r,v)$-path but does not realize neither $\FT(r,v) $ nor $\MW(r,v) $, as $\dur(W')= 6$ and $\wait(W')=2$.

\begin{figure}
    \centering
    \begin{tikzpicture}[scale=0.4]
\SetVertexStyle[FillColor=white]
\SetEdgeStyle[Color=black,LineWidth=1pt]
  \Vertex[y=0,x=0,label=r,color=white!70!black]{1}
 
  \Vertex[y=0,x=3,label=$x_1$]{2}
  \Vertex[y=0,x=6,label=$x_2$]{3}
  \Vertex[y=0,x=9,label=$x_3$]{4}
  \Vertex[y=0,x=12,label=v]{5}
 \Vertex[y=-2,x=3, label=$y_1$]{6}
  \Vertex[y=-2,x=6, label=$y_2$]{7}
  
  \Edge[Direct,label={(2,3)},position=above](1)(2)
  \Edge[Direct,label={(3,5)},position=above](2)(3)
  \Edge[Direct,label={(5,5)},position=above](3)(4)
  \Edge[Direct,label={(6,7)},position=above](4)(5)

  \Edge[Direct,label={(1,2)},position=below left,style=dashed](1)(6)
  \Edge[Direct,label={(2,4)},position=below,style=dashed](6)(7)
  \Edge[Direct,label={(4,4)},position=below right,style=dashed](7)(4)
  \Edge[Direct,label={(6,7)},position=above,style=dashed](4)(5)
\end{tikzpicture}
    \caption{Example showing that Lemma $\ref{claim:subopt}$ does not hold for $\textsc{d}\in \{\FT, \MW \} $.}
    \label{fig:no_substitution_FTMW}
\end{figure}
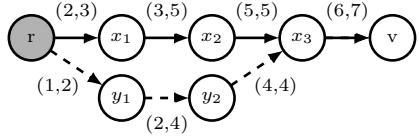

\begin{theorem}
\label{thm:tob}
Let $\mc G=(V,A,\tau)$ be a temporal graph, $r\in V$ and $\textsc{d}\in \{\LD,\MT,\ST  \}$. 
Then $\mc G$ has a spanning $\textsc{d}$-\TOB with root $r$ if and only if there exists a $\textsc{d}$-prefix-optimal temporal $(r,v)$-path in $\mc G$ for all $v\in V$.
\end{theorem}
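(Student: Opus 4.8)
I would prove the two directions separately. The forward direction is essentially immediate: if $\mc G$ has a spanning $\textsc{d}$-\TOB $\mc B$ with root $r$, then by Lemma~\ref{lem:tob} there is, for every $v\in V$, a unique temporal $(r,v)$-walk $P_v$ in $\mc B$, and it is a path. I claim $P_v$ is $\textsc{d}$-prefix-optimal in $\mc G$. Indeed, for $x\in V(P_v)$, the $x$-prefix of $P_v$ is exactly $P_x$ (uniqueness of walks in a branching plus the fact that prefixes of walks in $\mc B$ are walks in $\mc B$), and since $\mc B$ is a $\textsc{d}$-\TOB we have $\textsc{d}\sub{\mc B}(r,x)=\textsc{d}\sub{\mc G}(r,x)$, so $P_x$ realizes $\textsc{d}\sub{\mc G}(r,x)$. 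Hence every $x$-prefix of $P_v$ realizes $\textsc{d}\sub{\mc G}(r,x)$, which is exactly $\textsc{d}$-prefix-optimality.

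**The converse — the real content.** Assume that for every $v\in V$ there is a $\textsc{d}$-prefix-optimal temporal $(r,v)$-path. The goal is to assemble these into a single spanning $\textsc{d}$-\TOB. The natural approach is to process vertices in increasing order of arrival time and build a branching incrementally, using Lemma~\ref{claim:subopt} as the splicing tool. Concretely, among all $\textsc{d}$-prefix-optimal $(r,v)$-walks, pick for each $v$ one, say $W_v$, that in addition minimizes $t_a(W_v)$ (an $\EA\textsc{d}$-prefix-optimal walk in the terminology of Definition~\ref{def:prefix-optimal}); one should first argue such a choice exists, or at least work with arrival-time-minimal $\textsc{d}$-prefix-optimal walks. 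Order $V=\{r=v_0,v_1,\dots\}$ so that $t_a(W_{v_1})\le t_a(W_{v_2})\le\cdots$. I would build a branching $\mc B$ by, for each $v_i$ in this order, taking the last arc $a_i$ of $W_{v_i}$, whose tail is some $u$ with $t_a(W_u)\le t_s(a_i)<t_a(W_{v_i})$ (so $u$ precedes $v_i$ in the order, hence is already in $\mc B$), and adding $a_i$. This yields a digraph with $d^-_{\mc B}(r)=0$, $d^-_{\mc B}(v)=1$ for all other $v$, and a temporal $(r,v)$-walk for each $v$ — namely the walk obtained by following the recursively-built path to $u$ and then $a_i$. By Lemma~\ref{lem:tob} this $\mc B$ is a \TOB. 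The crux is to show it is a $\textsc{d}$-\TOB, i.e.\ the walk $\bar W_{v_i}$ it induces to $v_i$ realizes $\textsc{d}\sub{\mc G}(r,v_i)$. This is precisely where Lemma~\ref{claim:subopt} enters: $\bar W_{v_i}=\bar W_u+a_i$ where $\bar W_u$ is (by induction) a $\textsc{d}$-prefix-optimal walk to $u$, $W_u$ replaces the $u$-prefix of $W_{v_i}$, and the arrival-time condition $t_a(\bar W_u)=t_a(W_u)\le t_s(a_i)=t_s(S)$ holds by the ordering, where $S=(u,a_i,v_i)$ is the $u$-suffix of $W_{v_i}$; Lemma~\ref{claim:subopt} then says $\bar W_u+S$ is $\textsc{d}$-prefix-optimal, in particular it realizes $\textsc{d}\sub{\mc G}(r,v_i)$.

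**Where the care is needed.** The main obstacle is making the induction airtight: one needs that the recursively-built walk $\bar W_{v_i}$ in $\mc B$ actually equals $\bar W_u + a_i$ and that $\bar W_u$ is itself $\textsc{d}$-prefix-optimal with the right arrival time — this requires that the vertex chosen as tail of $a_i$ really was processed earlier, which hinges on using arrival-time-minimal $\textsc{d}$-prefix-optimal walks and on the monotonicity $t_a(W_u)\le t_s(a_i)$. One must also check the base case ($r$ itself, trivial walk) and that adding $a_i$ never creates a second in-arc at an already-placed vertex (it doesn't, since $a_i$ is the unique arc we assign to $v_i$ and each vertex is assigned exactly once). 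A subtle point for $\textsc{d}=\LD$: here "earliest arrival among $\textsc{d}$-prefix-optimal walks" still makes sense, and Lemma~\ref{claim:subopt} was proved to cover $\LD$, so the same scheme goes through; I would just double-check that the $\LD$ value of $\bar W_{v_i}$ comes out right, which follows since $\LD$-prefix-optimality of $\bar W_{v_i}$ forces its $v_i$-prefix (itself) to realize $\LD(r,v_i)$. Finally I would remark that the construction in fact produces an $\EA\textsc{d}$-\TOB — the paths also arrive as early as possible — matching the claim made in the contribution section, though strictly this is extra and not needed for the theorem as stated.
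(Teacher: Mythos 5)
Your forward direction is fine and matches the paper. The converse, however, has a genuine gap, and it sits exactly where you located "the care needed": the ordering claim. You assert that the last arc $a_i$ of $W_{v_i}$ has tail $u$ with $t_a(W_u)\le t_s(a_i)<t_a(W_{v_i})$, but the second inequality is only $t_s(a_i)\le t_a(a_i)=t_a(W_{v_i})$, with equality whenever $el(a_i)=0$ (and ties are endemic for \LD, where all vertices on a prefix-optimal walk share the same departure time). With ties, "arrival-time-minimal" does not force the parent to be processed first, and an arbitrary choice of arrival-minimal $\textsc{d}$-prefix-optimal walks can even create circular parent dependencies: take $\textsc{d}=\ST$ (or \LD) with arcs $(r,u,2,2)$, $(r,v,2,2)$, $(u,v,2,2)$, $(v,u,2,2)$; the walks $W_u=(r,v)+(v,u)$ and $W_v=(r,u)+(u,v)$ are both $\textsc{d}$-prefix-optimal and arrival-minimal, yet your construction would then add only the arcs $(v,u,2,2)$ and $(u,v,2,2)$, leaving neither $u$ nor $v$ reachable from $r$ --- the output is not even a \TOB. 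So the induction as stated does not close. The scheme is repairable (e.g.\ choose, among arrival-minimal prefix-optimal walks, one of minimum length and order lexicographically by arrival time then length; then the $u$-prefix of $W_{v_i}$ witnesses that $u$ strictly precedes $v_i$), but this tie-breaking is precisely the delicate point --- it is the reason the paper needs the priority-queue machinery of Algorithm~\ref{alg:LDST-tob} for \LD\ and \ST\ rather than a naive greedy sweep.

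It is also worth noting that your route is genuinely different from the paper's proof of Theorem~\ref{thm:tob}, which is non-constructive: it takes $A'=\bigcup_v A(W_v)$ for the given prefix-optimal paths, passes to a subset of arcs that is \emph{minimal} subject to preserving a prefix-optimal walk from $r$ to every vertex, and then shows via Lemma~\ref{lem:tob} that minimality forces in-degree at most one, using Lemma~\ref{claim:subopt} only to splice around a hypothetical second in-arc (deleting the later-arriving of two in-arcs). That argument never needs an ordering of the vertices and so avoids the tie issue entirely; your constructive argument, once the tie-breaking is fixed, buys more (it essentially yields the $\EA\textsc{d}$-\TOB and foreshadows the algorithms), but as written the existence claim is not established.
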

\begin{proof}
 $\implies$ By the definition of $\textsc{d}$-\TOB with root $r$ and the uniqueness of temporal walks from the root to any other vertex in a \TOB (see Lemma~\ref{lem:tob}), it follows that all walks in a $\textsc{d}$-\TOB are paths and are $\textsc{d}$-prefix-optimal.\\ 
$\impliedby$
For every $v\in V$, let $W_v$ be a temporal $\textsc{d}$-prefix-optimal $(r,v)$-path in $\mc G$. Let $A':=\bigcup_{v\in V} A(W_v).$ 
 For $B\subseteq A',$ denote by $\mc B[B]$ the temporal subgraph of $\mc G$ having vertex set $V$ and temporal arc set $B$, and consider the property.\\
$\mc P\sub{{\textsc{d}, B}}$: for all $v\in V$, there exists in $\mc B[B]$  a temporal $\textsc{d}$-prefix-optimal $(r,v)$-walk.\\
 \noindent Note that $\mc P\sub{{\textsc{d}, A'}}$ is satisfied. Thus it is possible to consider the minimal subsets $B$ of arcs in $A'$ satisfying $\mc P\sub{{\textsc{d}, B}}$. 
 Let $A\sub {\mc B}\subseteq A'$ be one of such minimal sets and let $\mc B:=\mc B[A\sub {\mc B}].$
   We show that $\mc B$ is a $\textsc{d}$-\TOB for $\mc G$. Clearly, by the construction of $\mc B$, it is enough to show
    that $\mc B$ is a \TOB. Furthermore, in view of Lemma \ref{lem:tob}, it suffices to show that $d^-\sub {\mc B}(r)=0$ and that for all $v\in V\setminus \{ r\}$, $d^-\sub {\mc B}(v)=1$, since the temporal reachability from vertex $r$ to any other vertex is already guaranteed by property $\mc P\sub{{\textsc{d}, A\sub {\mc B}}}$. 
    Since $A'$ does not contain arcs entering in $r$, this holds also for $A\sub {\mc B}$ and hence we have that $d^-\sub {\mc B}(r)= 0$.
  Suppose now, by contradiction, that there exists $v\in V\setminus \{r \}$ such that $d^-\sub {\mc B}(v)\neq 1$. Since $d^-\sub {\mc B}(v)=0$ implies that 
 $v$ is not reachable from $r$, we necessarily have $d^-\sub {\mc B}(v)\geq 2$. Let $a_1, a_2\in A\sub {\mc B} $ be two different incoming temporal arcs of $v$  with $t_a(a_1)\leq  t_a(a_2).$ 
We claim that $\mc P\sub{{\textsc{d}, A\sub{{\mc B}}\setminus\{a_2\}}}$ is satisfied, and thus the minimality of $A\sub {\mc B}$ is contradicted. Indeed, by definition of $A\sub {\mc B}$, there exists $v_1\in V$ and a temporal $(r,v_1)$-path $W_{v_1}$
such that $a_1\in A(W_{v_1}).$ Since in a path two distinct arcs entering in the same vertex do not appear, we have that $a_2\notin A(W_{v_1}).$ In particular, $a_2$ is not an arc for the $v$-prefix $X$ of $W_{v_1}.$ Let $u\in V$ and consider $W_u$ a temporal $(r,u)$-path in $\mc B$.  Assume that  $a_2\in W_u$. Then, since in a path an arc appears at most once, we have that $a_2$ does not appear in the $v$-suffix $S$ of $W_u$.
We consider then the $(r,u)$-walk given by $\Bar W=X + S$. Then $a_2\notin A(\Bar W)$ and we have that
$t_a(X)=t_a(a_1)\leq  t_a(a_2)\leq t_s(S).$
As a consequence, by Lemma \ref{claim:subopt}, $\Bar W$ is a $\textsc{d}$-prefix-optimal walk in $\mc B[A\sub {\mc B}\setminus\{a_2\}]$.
\end{proof}
\noindent Theorem \ref{thm:tob} does not hold for $\textsc{d}\in \{\FT, \MW \} $. Indeed the temporal graph in Figure (\ref{fig:no_FT-tob}) has a $\textsc{d}$-prefix-optimal path from $r$ to any other vertex, but does not admit a spanning \textsc{d}-\TOB as previously observed. We are now ready to characterize the vertex set of a maximum $\textsc{d}$-\TOB.
\begin{corollary}
\label{cor:maxDtob}
Let $\mc G=(V,A,\tau)$ be a temporal graph, $r\in V$ and $\textsc{d}\in \{\LD,\MT,\ST  \}$. Then a maximum $\textsc{d}$-\TOB $\mc B$ with root $r$ of $\mc G$ has vertex set:
\begin{equation}
 V\sub {\mc B}=\{v\in V: \text{there exists a $\textsc{d}$-prefix-optimal $(r,v)$-path in $\mc G$ } \}.  \label{eq:tobmax1}
\end{equation}
\end{corollary}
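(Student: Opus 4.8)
The plan is to show two inclusions, using Theorem \ref{thm:tob} as the main engine. Let $Z$ denote the right-hand side of \eqref{eq:tobmax1}, i.e.\ the set of vertices $v\in V$ for which there exists a $\textsc{d}$-prefix-optimal $(r,v)$-path in $\mc G$. Note that $r\in Z$ since the trivial walk $(r)$ is $\textsc{d}$-prefix-optimal.

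\emph{Every maximum $\textsc{d}$-\TOB has vertex set contained in (indeed equal to) $Z$, and conversely there is a $\textsc{d}$-\TOB spanning exactly $Z$.} First I would argue $V\sub{\mc B}\subseteq Z$ for \emph{any} $\textsc{d}$-\TOB $\mc B$ with root $r$: by Lemma \ref{lem:tob}, for each $v\in V\sub{\mc B}$ the unique temporal $(r,v)$-walk in $\mc B$ is a path $P_v$, and by the definition of $\textsc{d}$-\TOB together with the fact that every prefix of $P_v$ is again the unique temporal path in $\mc B$ to its endpoint, $P_v$ is $\textsc{d}$-prefix-optimal \emph{in $\mc B$}. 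To conclude $P_v$ is $\textsc{d}$-prefix-optimal in $\mc G$, observe that for each $x\in V(P_v)$ the $x$-prefix realizes $\textsc{d}\sub{\mc B}(r,x)=\textsc{d}\sub{\mc G}(r,x)$, the last equality being exactly the defining property of a $\textsc{d}$-\TOB applied to $x\in V\sub{\mc B}$. Hence $v\in Z$, so $V\sub{\mc B}\subseteq Z$ for every $\textsc{d}$-\TOB.

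\emph{Existence of a $\textsc{d}$-\TOB spanning all of $Z$.} For the reverse direction I would apply Theorem \ref{thm:tob} to the induced temporal subgraph $\mc G[Z]$. The key point to verify is that in $\mc G[Z]$ there is a $\textsc{d}$-prefix-optimal $(r,v)$-path for every $v\in Z$. By definition of $Z$, each $v\in Z$ has a $\textsc{d}$-prefix-optimal $(r,v)$-path $W_v$ in $\mc G$; since $W_v$ is a path, every vertex $x\in V(W_v)$ has its $x$-prefix realizing $\textsc{d}\sub{\mc G}(r,x)$, so in particular $x\in Z$, and therefore $W_v$ lives entirely inside $\mc G[Z]$. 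It remains to check that $W_v$ is still $\textsc{d}$-prefix-optimal as a walk of $\mc G[Z]$, i.e.\ that $\textsc{d}\sub{\mc G[Z]}(r,x)=\textsc{d}\sub{\mc G}(r,x)$ for the relevant $x$; this follows because the witnessing prefixes lie in $\mc G[Z]$, so $\textsc{d}\sub{\mc G[Z]}(r,x)$ cannot be worse than $\textsc{d}\sub{\mc G}(r,x)$, and it obviously cannot be better since $\mc G[Z]$ is a subgraph (for $\textsc{d}=\LD$ the inequality flips sign but the argument is the same). Now Theorem \ref{thm:tob} gives a spanning $\textsc{d}$-\TOB $\mc B$ of $\mc G[Z]$ rooted in $r$; by Remark \ref{rem:tob}$(ii)$ (or directly), $\mc B$ is a $\textsc{d}$-\TOB of $\mc G$ with $V\sub{\mc B}=Z$. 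Combined with the first part, this $\mc B$ has the maximum possible number of vertices, so it is a maximum $\textsc{d}$-\TOB, and every maximum $\textsc{d}$-\TOB must then also have vertex set exactly $Z$.

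\emph{Main obstacle.} The only genuinely delicate point is the "prefix-closedness" of $Z$ and the transfer of $\textsc{d}$-prefix-optimality between $\mc G$ and $\mc G[Z]$ — one must be careful that restricting to $\mc G[Z]$ does not destroy optimality of the prefixes, and that the subgraph comparison of distances goes the right way for each of the three criteria $\textsc{d}\in\{\LD,\MT,\ST\}$ (with $\LD$ being a maximization). Everything else is bookkeeping on top of Theorem \ref{thm:tob} and Lemma \ref{lem:tob}.
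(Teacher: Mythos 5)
Your proposal is correct and follows essentially the same route as the paper: both rely on the prefix-closedness of the set $Z$ in \eqref{eq:tobmax1} and on Theorem \ref{thm:tob} applied to $\mc G[Z]$ (after checking that distances in $\mc G[Z]$ agree with those in $\mc G$) to produce a $\textsc{d}$-\TOB spanning exactly $Z$, and both then argue that no $\textsc{d}$-\TOB can reach a vertex outside $Z$. The only cosmetic difference is that you obtain the containment $V\sub{\mc B}\subseteq Z$ for an arbitrary $\textsc{d}$-\TOB directly from the uniqueness of paths in Lemma \ref{lem:tob}, whereas the paper argues by contradiction through Remark \ref{rem:tob}$(ii)$ and the forward direction of Theorem \ref{thm:tob} on induced subgraphs; your citation of Remark \ref{rem:tob}$(ii)$ points in the opposite direction to its statement, but the direct justification you give alongside it already suffices.
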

\begin{proof}
 Consider $\mc G[V\sub {\mc B} ]$. Let $v\in V\sub {\mc B}$ and $W$ a $\textsc{d}$-prefix-optimal $(r,v)$-temporal walk in $\mc G$. 
 By definition of $\textsc{d}$-prefix-optimal walk, for every $u\in V(W)$, it holds that $u\in V\sub {\mc B}$, which implies that $W$ is also a $\textsc{d}$-prefix-optimal $(r,v)$-temporal walk in $ \mc G[V\sub {\mc B} ]$. Hence, by Theorem~\ref{thm:tob}, $ \mc G[V\sub {\mc B} ]$ has a spanning $\textsc{d}$-\TOB $\mc B$, which is also a $\textsc{d}$-\TOB of $\mc G$. We now show that $\mc B$ is maximum. By Remark \ref{rem:tob} it suffices to prove that if $V'\subseteq V$ is such that $V'\setminus V\sub {\mc B}\neq \emptyset $, then $\mc G[V']$ does not admit a spanning $\textsc{d}$-\TOB with root $r$. Let $u\in  V'\setminus V\sub {\mc B}$. By hypothesis there does not exist a $\textsc{d}$-prefix-optimal temporal $(r,u)$-walk in $\mc G$, hence there does not exist one in $\mc G[V']$. By Theorem \ref{thm:tob}, $\mc G[V']$ does not admit a spanning $\textsc{d}$-\TOB. 
\end{proof}
\noindent The above corollary shows that for every $\textsc{d}\in \{\LD,\MT,\ST \}$, the vertex set of a maximum $\textsc{d}$-\TOB of a temporal graph is uniquely determined. We now show that for $\textsc{d}=\ST $ and $\tau\leq 2$, a maximum $\ST$-\TOB is always spanning, as long as each vertex is temporally reachable from the root. That is no longer true for $\tau\geq 3$, as already highlighted in Figure (\ref{fig:no_ST-tob}).

\begin{lemma}\label{claim:STtau}
 If $\tau\leq 2$, then a temporal graph $\mc G=(V,A,\tau)$ has a spanning $\ST$-\TOB with root $r$ if and only if each vertex is temporally reachable from $r$.   
\end{lemma}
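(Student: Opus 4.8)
The \emph{forward implication} is immediate: a spanning $\ST$-\TOB of $\mc G$ rooted in $r$ contains a temporal $(r,v)$-walk for every $v$, hence every vertex is temporally reachable from $r$. For the converse my plan is to invoke Theorem~\ref{thm:tob}, which for $\textsc{d}=\ST$ reduces the existence of a spanning $\ST$-\TOB to the existence, for every $v\in V$, of an $\ST$-prefix-optimal temporal $(r,v)$-path in $\mc G$. So it suffices to prove that, when $\tau\le 2$, mere temporal reachability of $v$ from $r$ already forces such a path to exist.

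First I would, for each reachable $v$, fix a temporal $(r,v)$-walk realizing $\ST\sub{\mc G}(r,v)$ and extract from it a temporal $(r,v)$-path $P_v$; since extraction only deletes arcs and each arc has nonnegative elapsed time, $\weight(P_v)\le \ST(r,v)$, while $\weight(P_v)\ge \ST(r,v)$ because $P_v$ is itself a temporal $(r,v)$-walk, so $P_v$ also realizes $\ST(r,v)$. The one place where $\tau\le 2$ enters is the following elementary observation, which I would isolate as the crux: if a temporal walk $X$ satisfies $\weight(X)\ge 1$, then some arc $a_i$ of $X$ has $el(a_i)\ge 1$, which together with $t_s(a_i)\ge 1$ and $t_a(a_i)\le\tau=2$ forces $t_a(a_i)=2$; since arrival times are non-decreasing along a temporal walk, this yields $t_a(X)=2$. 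Intuitively, with lifetime $2$ any walk that has already spent positive travelling time is stuck at the last timestep.

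Then I would prove $P_v$ is $\ST$-prefix-optimal by an exchange argument. Fix $x\in V(P_v)$ and write $P_v=X+S$ with $X$ its $x$-prefix and $S$ its $x$-suffix; the endpoint cases $x=r$ (where $\weight(X)=0=\ST(r,r)$) and $x=v$ (where $\weight(X)=\weight(P_v)=\ST(r,v)$) are trivial, so assume $X,S$ are nontrivial. Since always $\weight(X)\ge \ST(r,x)$, suppose for contradiction that $\weight(X)>\ST(r,x)$; then $\weight(X)\ge 1$, so by the crux $t_a(X)=2$, hence $t_s(S)\ge t_a(X)=2$, forcing $t_s(S)=2$. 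Picking a temporal $(r,x)$-walk $Y$ realizing $\ST(r,x)$, we have $t_a(Y)\le\tau=2=t_s(S)$, so $Y+S$ is a temporal $(r,v)$-walk with $\weight(Y+S)=\ST(r,x)+\weight(S)<\weight(X)+\weight(S)=\weight(P_v)=\ST(r,v)$, contradicting the minimality defining $\ST(r,v)$. Hence $\weight(X)=\ST(r,x)$ for every $x\in V(P_v)$, i.e.\ $P_v$ is $\ST$-prefix-optimal, and Theorem~\ref{thm:tob} delivers the spanning $\ST$-\TOB.

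I expect no deep obstacle here: once the crux is stated, the rest is bookkeeping, and the only care needed is to check that $Y+S$ is genuinely time-respecting and to handle the trivial-subwalk endpoints cleanly. The instructive point to keep in mind is \emph{why} the argument collapses for $\tau\ge 3$, which is exactly the content of Figure~(\ref{fig:no_ST-tob}): with three timesteps a walk can accumulate travelling time and still leave room to continue, so the crux — "positive travelling time $\Rightarrow$ arrival at the final timestep" — fails, and the exchange step cannot be carried out.
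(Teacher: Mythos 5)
Your proof is correct, and it runs on the same engine as the paper's: both reduce the backward implication, via Theorem~\ref{thm:tob}, to producing an \ST-prefix-optimal temporal $(r,v)$-path for every reachable $v$, and the $\tau=2$ ingredient is identical --- positive travelling time forces arrival at time $2$, hence the suffix beyond the offending prefix starts (and stays) at time $2$, so a cheaper $(r,x)$-walk can be spliced in front of it. The packaging differs somewhat. The paper starts from an \emph{arbitrary} temporal $(r,v)$-walk, observes that for $\tau=2$ the labels lie in $\{(1,1),(1,2),(2,2)\}$ so every walk has travelling time $0$ or $1$, picks the violating vertex $u$ closest to $v$, and \emph{repairs} the walk by replacing its $u$-prefix with a weight-$0$ walk, noting that the $u$-suffix consists entirely of $(2,2)$-arcs, so the repaired walk is prefix-optimal. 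You instead prove the slightly stronger statement that any path realizing $\ST(r,v)$ is \emph{automatically} \ST-prefix-optimal, by an exchange argument contradicting the minimality of $\ST(r,v)$. Your route avoids the explicit weight-$\{0,1\}$ dichotomy and makes transparent exactly where $\tau\geq 3$ breaks the argument; the paper's route is constructive about which prefix-optimal walk to take. One cosmetic point, not a gap: your crux is phrased with $\tau=2$ (using $t_a(a_i)\leq\tau=2$), while the lemma also covers $\tau=1$; in that case every arc has label $(1,1)$, so $\weight(X)\geq 1$ is impossible and the contradiction branch is vacuous --- one line suffices, whereas the paper dispatches $\tau=1$ separately as a purely static statement.
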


\begin{proof}
If $\tau\!=\!1$ the temporal graph reduces to a static graph, so the problem reduces to find a spanning out-branching of a static graph. If $\tau\!=\!2$, notice that every temporal label $(t_s(a), t_a(a))$ can assume only three values, namely $\{(1,1), (1,2), (2,2) \}$. This implies that every temporal walk $W$ in $\mc G$ is such that either $\weight(W)=0$ or $\weight(W)=1$. 
We now want to prove that if $v$ is temporally reachable from $r$, then there exists an $\ST$-prefix-optimal $(r,v)$-walk in $\mc G$. Let $W$ be an $(r,v)$-walk in $\mc G$. If $\weight(W)=0$, then $W$ is necessarily $\ST$-prefix-optimal. Suppose now that $\weight(W)=1$ and that $W$ is not $\ST$-prefix-optimal. Let $u$ be the first vertex of $W$, starting from $v$, for which the $u$-prefix $X$ of $W$ does not realize $\ST(r,u)$. If $\weight(X)=0$, then we also have $\ST(r,u)=0,$ against the fact that $X$ does not realize $\ST(r,u)$. As a consequence, we have $\weight(X)=1$ and $\ST(r,u)=0 $. This implies that $t_a(X)=2$ and that there must exist an $(r,u)$-walk $W_u$ in $\mc G$ such that $\weight(W_u)=0$; in particular $W_u$ is $\ST$-prefix-optimal.  Let $S$ be a $u$-suffix of $W$ such that $W=X+S$. Notice that $t_a(W_u)\leq \tau=2=t_a(X)\leq t_s(S)\leq 2 $, so that $t_a(X)= t_s(S)$. Then all the temporal labels of the arcs in  $S$ are equal to $(2,2)$. As a consequence $W_u+S $ is an $\ST$-prefix-optimal $(r,v)$-walk in $\mc G$.  
\end{proof}

\noindent The next sections present algorithms for finding $\textsc{d}$-\TOBs of a given temporal graph in polynomial time, when $\textsc{d}\!\in\! \{\LD,\MT,\ST  \}$. In particular, we show that in such cases we can constrain ourselves to the earliest arrival paths that realize the distances. For this, we define:

\begin{definition}
{\rm Given a temporal graph $\mc G=(V,A,\tau)$ rooted in $r\in V$, for any $(u,v)\in V^2$ and $\textsc{d}\in \{\EA, \FT,\LD, \MT,   \MW,\ST \}$, we define $\EA\textsc{d}\sub{\mc G}(u,v):=\min\{t_a(W): W \text{ realizes } \textsc{d}\sub{\mc G}(u,v)\}$.
A $\textsc{d}$-\TOB $\mc B\!=\!(V\sub {\mc B},A\sub {\mc B},\tau\sub {\mc B})$ with root $r$ of $\mc G$ is called an $\EA\textsc{d}$-\TOB if, for every $v\in V\sub {\mc B}$, we have that $ \EA\textsc{d}\sub{{\mc B}}(r,v)=\EA\textsc{d}\sub{\mc G}(r,v)$. 
$\mc B$ is called \emph{spanning} if $V\sub {\mc B}=V$ and \emph{maximum} if $|V\sub {\mc B}|$ is the largest possible among the $\EA\textsc{d}$-\TOB rooted in $r.$}
\end{definition}

\noindent The proposed algorithms will always return an $\EA\textsc{d}$-\TOB. This implies that for $\textsc{d}\in \{\MT,\ST,\LD \}$, the existence of a $\textsc{d}$-prefix-optimal $(r,v)$-path in $\mc G$ is equivalent to the existence of an $\EA\textsc{d}$-prefix-optimal $(r,v)$-path in $\mc G$. For $\textsc{d}= \FT$ this is no longer true: indeed consider Figure (\ref{fig:no_FT-tob}). The only $\FT$-prefix-optimal $(r,y)$-path is $W=(r,(r,v,2),v,(v,y,2),y)$, but it is not $\EA\FT$-prefix-optimal: in fact, $\EA\FT(r,v)=1 $ since the path $(r,(r,v,1),v)$ realizes $\FT(r,v)$ and arrives in $v$ at time $1$, while $W$ arrives in $v$ at time $2$. The same reasoning holds for $\textsc{d}= \MW$. This difference will be crucial for showing that computing a maximum \textsc{d}-\TOB for $\textsc{d}\in \{\FT,\MW \} $ is an \NP-complete problem (Section \ref{sec:FT-tobs}).


\subsection{Algorithm for minimal transfer distance}\label{subsec:algMT}
Algorithm \ref{alg:MTST} computes a maximum $\MT$-\TOB with root $r$ of a given temporal graph. First observe that, given an $\MT$-prefix-optimal temporal $(r,v)$-walk $W=(r=v_0,a_1,v_1,\ldots ,a_k,v_k=v)$, we have that $\MT(r,v_i)\!=\!\MT(r,v_{i+1})-1<\MT(r,v_{i+1})$ for all $i\in [k-1]$, in particular the sequence of distances in any \MT-prefix-optimal walk is strictly increasing.
\begin{algorithm}[t]
\small
\caption{Computing a maximum $\MT$-\TOB of a temporal graph.}\label{alg:D1-tob}\label{alg:MTST}
\KwIn{A temporal graph $ \mathcal{G}=(V,A,\tau)$, and a vertex $r\in V$.}
\KwOut{A maximum $\MT$-\TOB $\mc B = (V\sub{\mc B},A\sub{\mc B},\tau\sub{\mc B})$ of $\mc G$ with root $r$.}
$\mc {EAMT}(r)\gets 0; \forall v\in V\setminus \{r\}, \mc {EAMT}(v)\gets +\infty $\;
$d(r)\gets 0$; $\forall v\in V\setminus \{r\},d(v)\gets \MT\sub{\mc G}(r,v)$\; 
$V\sub{\mc B}\gets \{r\}$; $A\sub{\mc B}\gets \emptyset$; $\tau\sub{\mc B}\gets 0$; $h\gets \max\{d(v):v\in V, d(v)<+\infty \}$\;
\For{$i=1,\dots, h$}
{
    \For{each $v\in V$ such that $d(v)=i$}
    {$S\gets \{(u',v,s',t')\in A: s'\geq  \mc {EAMT}(u') ,\, d(u')=i-1\}$\;
        \If{$S \neq \emptyset$}
        {
        $a \gets $ choose $ (u,v,s,t)\in \argmin_{(u',v,s',t')\in S} t' $\;
        $\mc {EAMT}(v)\gets t$,  $V\sub{\mc B}\gets V\sub{\mc B}\cup  \{v\}$, $A\sub{\mc B}\gets A\sub{\mc B}\cup  \{a\}$, $\tau\sub{\mc B}\gets \max\{\tau\sub{\mc B},t\}$\;
        }
    }
}
\end{algorithm}
The main idea of the algorithm is then to compute a priori the \MT-distances of all vertices from the root (line 2), and then build the $\MT$-\TOB guided by these computed distances, using their strict monotonicity property. 
More specifically, given $h=\max\{\MT(r,v): v\in V \}$, the algorithm grows an $\MT$-\TOB starting from the root and adding, at step $i\!\in\! [h]$, all the vertices at distance $i$. During this process, when adding some vertex $v$, we choose, among its neighbors at distance $i-1$, which one can be the parent of $v$. 
To choose the right parent, we look at the incoming temporal arcs having tail in vertices at distance $i-1$ and we consider only the arcs $a'=(u',v,s',t')$ such that, if $W_{u'}$ is the unique temporal $(r,u')$-path in the $\MT$-\TOB built so far, then $s'\geq t_a(W_{u'})$, i.e.\ the new arc can be concatenated with $W_{u'}$ to obtain a temporal $(r,v)$-path. Among the arcs fulfilling these constraints, we choose $a'$ minimizing $t'$, the arrival time in $v$; such arc $a'$ exists if and only if there exists an $\MT$-prefix-optimal $(r,v)$-path in $\mc G$. 
We prove that such choice of $a'$ ensures that we are actually representing in the \TOB a temporal $(r,v)$-path that realizes the distance $\MT(r,v)$ and has the earliest arrival time among the walks realizing such distance, i.e.\ we are computing an $\EA\MT$-\TOB.
The algorithm takes $O(m\log n)$ time to compute all the initial $\MT$ distances (Table \ref{table:shortest_path}), while 
the remaining part of the algorithm takes $O(m)$ time as it requires only one scan of each temporal arc.

\begin{theorem}\label{thm:algMTST}
Algorithm $\ref{alg:MTST}$ returns a maximum $\MT$-\TOB of a temporal graph, for a chosen root, in $O(m\log n)$ time. Additionally, the output is an $\EA\MT$-\TOB.
\end{theorem}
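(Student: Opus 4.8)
The plan is to prove three things about Algorithm~\ref{alg:MTST}: correctness of the output (it is an $\MT$-\TOB), maximality of the spanned vertex set, and the refinement that it is in fact an $\EA\MT$-\TOB; the running time is then immediate from the discussion preceding the theorem. I would set up the argument around the loop invariant that after completing iteration $i$ of the outer \textbf{for} loop, the pair $(V\sub{\mc B},A\sub{\mc B})$ restricted to vertices $v$ with $d(v)\le i$ forms an out-branching (each such $v\ne r$ has $d^-\sub{\mc B}(v)=1$) in which the unique $(r,v)$-path $W_v$ is a temporal path realizing $\MT(r,v)$, and moreover $\mc{EAMT}(v)=t_a(W_v)=\EA\MT\sub{\mc G}(r,v)$. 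The key structural fact I would invoke up front is the one stated just before the algorithm: any $\MT$-prefix-optimal temporal $(r,v)$-walk has strictly increasing distance sequence, so every such walk passes through exactly one vertex at each distance level $0,1,\dots,\MT(r,v)$; hence the parent of $v$ in any $\MT$-\TOB must sit at level $d(v)-1$, which is exactly where the algorithm looks.

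The induction step has two halves. First, \emph{soundness of each added arc}: when $v$ with $d(v)=i$ gets an arc $a=(u,v,s,t)$ from $S$, then $d(u)=i-1$, by the invariant $u$ already carries a temporal $(r,u)$-path $W_u$ with $t_a(W_u)=\mc{EAMT}(u)\le s$, so $W_u+a$ is a temporal $(r,v)$-path of length $i=\MT(r,v)$, i.e.\ it realizes $\MT(r,v)$; this also re-establishes $d^-\sub{\mc B}(v)=1$ and updates the invariant for level $i$. Second, \emph{the earliest-arrival optimality}: I must show that the arc chosen by $\argmin_{(u',v,s',t')\in S} t'$ gives $t=\EA\MT\sub{\mc G}(r,v)$. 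Here I would argue that if $Z$ is any temporal $(r,v)$-path realizing $\MT(r,v)$ with $t_a(Z)=\EA\MT\sub{\mc G}(r,v)$, then its level-$(i-1)$ vertex $u'$ satisfies, by the induction hypothesis applied at level $i-1$, $\mc{EAMT}(u')\le t_a(Z\text{-prefix at }u')$; therefore the last arc of $Z$ lies in the set $S$ built for $v$, so the algorithm's choice has arrival time at most $t_a(Z)$. Combined with the first half (the chosen arc does realize $\MT$), equality $\mc{EAMT}(v)=\EA\MT\sub{\mc G}(r,v)$ follows. This earliest-arrival part — making precise that the prefixes of an optimal $Z$ have arrival times that the greedily-built $\mc{EAMT}$ values already dominate — is the main obstacle, and it is exactly the point where the strict monotonicity of the distance sequence is doing the heavy lifting, since it guarantees that the relevant prefix of $Z$ is the unique level-$(i-1)$ prefix and that no "shortcut through a higher level and back" can occur.

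For maximality, I would show $V\sub{\mc B}$ equals the set in Corollary~\ref{cor:maxDtob}, namely $\{v\in V:\text{there is an }\MT\text{-prefix-optimal }(r,v)\text{-path in }\mc G\}$. The inclusion $V\sub{\mc B}\subseteq$ that set is clear since the constructed $W_v$ is $\MT$-prefix-optimal (every prefix is the unique $\MT$-\TOB path to its endpoint, hence realizes the distance by the invariant). For the reverse inclusion, suppose $v$ admits an $\MT$-prefix-optimal $(r,v)$-path $W$; I would induct on $d(v)$ to show $v\in V\sub{\mc B}$: the level-$(d(v)-1)$ vertex $u$ of $W$ has an $\MT$-prefix-optimal $(r,u)$-path (a prefix of $W$), so by induction $u\in V\sub{\mc B}$ with $\mc{EAMT}(u)\le t_a(W\text{-prefix at }u)\le s_W$, where $s_W$ is the starting time of the last arc of $W$; hence that last arc lies in $S$ when $v$ is processed, so $S\ne\emptyset$ and $v$ is added. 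Then by Theorem~\ref{thm:tob} / Corollary~\ref{cor:maxDtob} this set is the vertex set of a maximum $\MT$-\TOB, and since $\mc B$ is by construction a \TOB (indegrees checked above, reachability by the invariant, and $|A\sub{\mc B}|=|V\sub{\mc B}|-1$ since exactly one arc is added with each non-root vertex), $\mc B$ is a maximum $\MT$-\TOB. Finally, having shown $\mc{EAMT}(v)=\EA\MT\sub{\mc G}(r,v)$ for all $v\in V\sub{\mc B}$ and that inside $\mc B$ the unique $(r,v)$-path realizes both $\MT$ and that earliest arrival, $\mc B$ is an $\EA\MT$-\TOB; the $O(m\log n)$ bound comes from the single computation of all $\MT\sub{\mc G}(r,\cdot)$ (Table~\ref{table:shortest_path}) plus an $O(m)$ pass over the arcs.
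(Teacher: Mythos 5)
Your overall skeleton — the loop invariant over the levels $D_i$, soundness of each added arc, the indegree-one bookkeeping, maximality via Corollary \ref{cor:maxDtob}, and the $O(m\log n)$ accounting — coincides with the paper's proof, and those parts are fine. The genuine gap is in the earliest-arrival half of your induction step. You take an arbitrary temporal $(r,v)$-path $Z$ realizing $\MT(r,v)$ with $t_a(Z)=\EA\MT\sub{\mc G}(r,v)$ and assert that ``its level-$(i-1)$ vertex $u'$'' satisfies $\mc{EAMT}(u')\le t_a$ of the corresponding prefix, so that the last arc of $Z$ lies in $S$. But the strict monotonicity of the distance sequence that you invoke holds only along $\MT$-prefix-optimal walks; a walk that merely realizes $\MT(r,v)$ with earliest arrival need not be prefix-optimal. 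Its predecessor $u'$ of $v$ can have $d(u')<i-1$ (for instance when $u'$ also has a direct arc from $r$ that departs too late to be extended towards $v$), in which case the last arc of $Z$ is excluded from $S$ by the condition $d(u')=i-1$; moreover the intermediate vertices of such a $Z$ may admit no prefix-optimal walks at all, so they never enter $V\sub{\mc B}$, $\mc{EAMT}(u')$ stays $+\infty$, and the induction hypothesis tells you nothing about them. Hence the inference ``therefore the last arc of $Z$ lies in the set $S$'' does not go through as written.

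The paper's proof avoids quantifying over all $\MT$-realizing paths: it fixes an $\MT$-prefix-optimal $(r,v)$-walk $W$ (chosen to arrive the earliest, with the assertion $t_a(W)=\EA\MT(r,v)$) and works with its last arc $\bar a=(\bar u,v,\bar s,\bar t)$; for a prefix-optimal $W$ one really does have $d(\bar u)=i-1$, $\bar u\in V\sub{\mc B}$ and $\bar s\ge \mc{EAMT}(\bar u)$ by the induction hypothesis, so $\bar a\in S$ and the $\argmin$ choice closes the argument. The price is precisely the claim that some prefix-optimal walk attains $\EA\MT\sub{\mc G}(r,v)$ — the bridge between ``arbitrary earliest $\MT$-realizing path'' and ``prefix-optimal path'' that your argument silently assumes and never supplies. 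To repair your write-up you must either restrict attention to prefix-optimal walks from the outset, as the paper does, or state and justify that bridge explicitly; without one of these, the equality $\mc{EAMT}(v)=\EA\MT\sub{\mc G}(r,v)$, and with it the ``$\EA\MT$-\TOB'' half of the statement, is not established.
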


\begin{proof}
Let $\mc G=(V,A,\tau)$ be the temporal graph input of the algorithm and $r\in V$, $d(v)=\MT\sub{\mc G}(r,v)$ for all $v\in V$, $h=\max \{d(v):v\in V , d(v)<+\infty\}$ and $V'=\{v\in V: v \text{ is temporally reachable from $r$} \}$. For $i\in [h]_0$ let $D_i=\{v\in V: d(v)=i  \}$ and note that $\{D_i:i\in  [h]_0\}$ is a partition of $V'$ with $D_0=\{r\}$. Since no confusion is possible, from now on we will avoid writing the subscripts $\mc G$. 
We prove the following loop invariant:
\begin{claim}\label{claim:MTSTalg}
At the end of the $i$-th iteration of the \textbf{for} loop in lines 4-12, \[  V\sub{\mc B}=\{v\in V': \exists \text{ an $\MT$-prefix-optimal temporal } (r,v)\text{-walk in } \mc G \text{ and } d(v)\leq i  \},\] $\mc{EAMT}(v)=\EA\MT(r,v)  $ for all $v\in V\sub{\mc B}$, and
 $\mc B$ is an $\EA\MT$-\TOB with root $r$ of $\mc G$.
\end{claim}
\noindent The above claim implies that the final output $\mc B$ of the algorithm is an $\EA\MT$-\TOB with root $r$ of $\mc G$, which is in particular an $\MT$-\TOB. Moreover, $V\sub {\mc B}$ consists of all the vertices in $\mc G$ for which there exists an $\MT$-prefix-optimal temporal walk from the root. Thus $\mc B$ is a maximum $\MT$-\TOB by Corollary \ref{cor:maxDtob}. We are left to prove the claim. 
$\mc B$ is initialized as the temporal graph made of the sole vertex $r$, so the loop invariant is trivially true. 
Suppose now that the loop invariant is true up to a certain $i$-th iteration. We now prove that it holds for the $(i+1)$-th iteration.
Let $v\in V$ be such that $d(v)=i+1$. We first prove that if there exists an $\MT$-prefix-optimal temporal $(r,v)$-walk in $\mc G$, say $W$, then the set $S$ in line 6 is non-empty. We can always choose $W$ such that it arrives the earliest, 
that is $t_a(W)=\EA\MT(r,v)$.
Let $\ba a=(\ba u,v, \ba s, \ba t)\in A$ be the last temporal arc of $W$. 
It holds that $d(\ba u)=i$, so by inductive hypothesis we have that $\ba u\in V\sub{\mc B}$ and $\mc {EAMT}(\ba u)=\EA\MT(r,\ba u)$  at the end of the $i$-iteration. 
Since $W$ is $\MT$-prefix-optimal, we have that $\ba s\geq \EA\MT(r,\ba u)=\mc{EAMT}(\ba u)$. 
Therefore $\ba a$ fulfils the conditions to belong to $S$, so $S$ is non-empty. Notice also that since $ \ba a\in S$ and $\ba t=\EA\MT(r,v)$ then
\begin{equation}\label{eq:t'}
    \min_{(u',v,s',t')\in S}t'= \ba t=\EA\MT(r, v)\, .
\end{equation}
We now prove that $\mc B$ is an $\EA\MT$-\TOB with root $r$. We have just showed that if there exists an $\MT$-prefix-optimal temporal $(r,v)$-walk in $\mc G$, then $S$ in line 6 is non-empty. This implies that in line 8 we choose an arc $a=(u,v,s,t)\in S$ that minimizes the arrival time, and this arc is added to $A\sub {\mc B}$, while $v$ is added to $V\sub {\mc B}$ and $\mc {EAMT}(v)$ is set to $t$. Since $D_0,\ldots,D_h$ is a partition of $V'$, no other incoming arc to $v$ is added in the algorithm, and therefore $v$ has in-degree equal to $1$ in $\mc B$. Moreover $s\geq \mc {EAMT}(u)$ since $a\in S$, so if $W_u$ is the unique temporal $(r,u)$-path in $\mc B$ (it exists by inductive hypothesis), then $W_v=W_u +(u,a,v)$ is a temporal $(r,v)$-path in $\mc B$. Hence $\mc B$ is a \TOB with root $r$. 
It remains to show that $W_v$ realizes $\EA\MT(r,v)$. By the inductive hypothesis we have that $W_u$ is $\EA\MT$-prefix-optimal. Therefore $\len(W_v)= \len(W_u)+1=d(u)+1=i+1=d(v)$. 
Moreover, by equation (\ref{eq:t'}) and since $a\in S$, we have that $\mc{EAMT}(v)= t_a(W_v)=   t=\ba t= \EA\MT(r,v) $. This concludes the proof of claim.

Regarding the computational complexity of the algorithm, by Table \ref{table:shortest_path} the initial computation of all distances requires $O(m\log n)$; 
the remaining part of the algorithm takes $O(m)$ as it requires only one scan of each temporal arc. Therefore the overall complexity is $O(m\log n)$. 
\end{proof}

\subsection{Algorithm for latest departure time and shortest time distances}\label{subsec:algLDST}
Algorithm \ref{alg:LDST-tob} computes a maximum $\textsc{d}$-\TOB  $\mc B$ with root $r$ for a given temporal graph when $\textsc{d}\in \{\LD, \ST\}$, and it is more involved with respect to Algorithm~\ref{alg:MTST}.
\begin{algorithm}[t]
\small
\caption{Computing a maximum $\textsc{d}$-\TOB, with $\textsc{d} \in \{\LD,\ST\}$.}\label{alg:LDST-tob}
\KwIn{A temporal graph $ \mathcal{G}=(V,A,\tau)$, a vertex $r\in V$, $\textsc{d}\in \{\LD,\ST\}$.}
\KwOut{A maximum ${\textsc{d}}$-\TOB $\mc B = (V\sub{\mc B},A\sub{\mc B},\tau\sub{\mc B})$ of $\mc G$ with root $r$.}
$\mc {EAD}(r)\gets 0; \forall v\in V\setminus \{r\}, \mc {EAD}(v)\gets +\infty $\;
$d(r)\gets 0$; $\forall v\in V\setminus \{r\},d(v)\gets \textsc{d}\sub{\mc G}(r,v)$\;
$\langle d_1,\dots, d_h\rangle \gets$ ordered list of finite $d$ values with no repetitions\;
$V\sub{\mc B}\gets \{r\}$; $A\sub{\mc B}\gets \emptyset$; $\tau\sub{\mc B}\gets 0$, $D_{0}\gets \{r\}$\;
\For{$i=1,\dots, h$}
{
$D_i\gets\{v\in V\setminus\{r\}: d(v)=d_i  \}  $\;
\lIf{$\textsc{d}=\LD$}{enqueue all $(r,v,s,t)\in A$ such that $ s=d_i$ in a min priority queue $Q$ with weight $t$}
\lIf{$\textsc{d}=\ST$}{enqueue all $(u,v,s,t)\in A$ such that $ u\in D_0\cup\ldots\cup D_{i-1}$ and $ v\in D_i$ in a min priority queue $Q$ with weight $t$}
    \While{$Q\neq \emptyset$}
    {
    dequeue $a\gets (u,v,s,t)$ from $Q$\;
    \While{$s< \mc{EAD}(u)$ or $ t\geq \mc{EAD}(v) $ or $(\textsc{d}=\ST\ and\ t-s\neq d_i-d(u))$}
        {
        \lIf{$Q=\emptyset$}{go to Line 5 with next value of $i$}
        dequeue $a\gets (u,v,s,t)$ from $Q$\;
        }
    \tcc{$a=(u,v,s,t)$ is s.t. $a\in \argmin_{(u',v',s',t')\in Q} t'$,\ $s\geq \mc{EAD}(u)$, $ t< \mc{EAD}(v)=+\infty $, and if $\textsc{d}=\ST$, $t-s= d_i-d(u)$.}
    $\mc {EAD}(v)\gets t$, $V\sub{\mc B}\gets V\sub{\mc B}\cup  \{v\}$, $A\sub{\mc B}\gets A\sub{\mc B}\cup  \{a\}$, $\tau\sub{\mc B}\gets \max\{\tau\sub{\mc B},t\}$\;
    enqueue all $(v,v',s',t')\in A$ such that $ v'\in D_i$ in $Q$ with weight $t'$\;
    }
}
\end{algorithm}
 The issue is that if 
$W\!=\!(r\!=\!v_0,a_1,\ldots,a_k,v_k\!=\!v)$ is a $\textsc{d}$-prefix-optimal walk, then it is possible to have $\textsc{d}(r,v_{i-1})\!=\!\textsc{d}(r,v_{i})$ for some $i\!\in\! [k]$. Indeed, if $\textsc{d}=\LD$, then all the vertices in the walk share the same latest departure time, i.e. $t_s(W)=\LD(r,v_i)$ for all $i\!\in \![k]$ and $\textsc{d}=\ST$ and $el(a_i)=0$ for some $i\!\in\! [k]$, then $\ST(r,v_{i-1})=\ST(r,v_i)$. However, in any case we have that $\textsc{d}(r,v_{i-1})\leq \textsc{d}(r,v_{i}) $ for all $i\!\in\! [k]$.
This implies that, by letting $D_i$ denote the set of vertices at distance $d_i$ from $r$ with the distances $d_i$ being in increasing order, to choose the parent of each vertex of $D_i$ in ${\mc B}$, we cannot look only at vertices in $D_0\cup\cdots \cup D_{i-1}$, but also at the ones in $D_i$ itself (in particular, only at the ones in $D_i$ when $\textsc{d}=\LD$). Note that this gives us an additional difficulty as we cannot simply choose an arbitrary vertex $v\in D_i$ to be the next one to be added to ${\mc B}$, as it might happen that the good parent of $v$ (i.e.\ the in-neighbor of $v$ within an $\EA{\textsc{d}}$-prefix-optimal $(r,v)$-walk) has not been added to ${\mc B}$ yet. To overcome this, we add vertices in $D_i$ to $\mc B$ in increasing order of the value of $\EA{\textsc{d}}(r,v)$. Observe however that $\EA{\textsc{d}}(r,v)$ is not known a priori, so to do that we use a queue that keeps the outgoing temporal arcs from vertices in ${\mc B}$ in increasing order of their arrival time. These ideas are formalized below.
At step $i$ of the \textbf{for} loop at lines 5-18,  Algorithm \ref{alg:LDST-tob} adds to ${\mc B}$ the vertices of $D_i$ that are reachable by a ${\textsc{d}}$-prefix-optimal walk. To this aim, it uses a min priority queue $Q$ for temporal arcs $a$ with head vertices in $D_i$ with weight $t_a(a)$. For $\textsc{d}=\LD$, $Q$ is initialized with all the outgoing temporal arcs from $r$ with starting time $d_i$, as they are the only arcs that can realize a latest departure time equal to $d_i$. For $\textsc{d}=\ST$, $Q$ is initialized with 
all the temporal arcs with tail in $D_0\cup\ldots\cup D_{i-1}$ and head in $D_i$. The vector $\mc{EAD}$  in the algorithm, initialized at $+\infty$ for all the vertices but the root, keeps track of the arrival time in the vertices every time they are added to the \TOB.
In the \textbf{while} loop at lines 9-17, we dequeue temporal arcs from $Q$ that cannot possibly be within an $\EA{\textsc{d}}$-prefix-optimal walk. Formally, if such loop is not broken in line 12, then at the end we are left with an arc $a=(u,v,s,t)\in \argmin\limits_{(u',v',s',t')\in Q} t'$, i.e. an arc that minimizes the arrival time in the queue, satisfying:
\begin{itemize}
\item $s\geq \mc{EAD}(u)$, so that $a$ is temporally compatible with the $(r,u)$-walk $W_u$ that is already present in the \TOB, i.e.\ $W_u+(u,a,v)$ is a temporal walk;
\item $ t< \mc{EAD}(v) $, which ensures that we add to the \TOB a new vertex each time; 
\item $t-s= d_i-d(u)$ if $\textsc{d}=\ST$, ensuring that $W_u+(u,a,v)$ realizes $\ST(r,v)$.
\end{itemize}
We then add $v$ and the temporal arc $a$ to the \TOB and we update the arrival time in $v$ to $\mc {EAD}(v)=t$, which is equal to $\EA\textsc{d}(r,v) $ and it will be no longer updated until the end of the algorithm. Finally, we add to $Q$ all the outgoing arcs from $v$ with head vertices in $D_i$.
When at distance $d_i$ there are no arcs satisfying these constraints, i.e.\ the queue $Q$ at line 12 is empty, we go to the next distance $d_{i+1}$, as it means that we have already spanned all the possible vertices in $D_i$. 
The initial computation of all $\textsc{d}(r,v)$ requires $O(m\log m)$ by Table \ref{table:shortest_path}. Concerning the remaining part of the algorithm, the $i$-th iteration of the \textbf{for} loop considers only arcs whose head is in $D_i$, hence each arc is considered only in one of the iterations of the \textbf{for} loop. Moreover, each arc is dequeued from $Q$ at most once. As the dequeue from $Q$ costs $O(\log m)$ we obtain a running time of $O(m\log m)$.

\begin{theorem}\label{thm:algLDST}
For any $\textsc{d}\!\in\! \{\LD,\ST \}$, Algorithm $\ref{alg:LDST-tob}$ returns a maximum $\textsc{d}$-\TOB of a temporal graph, for a chosen root, in $O(m\log m)$ time. Additionally, the output is an $\EA\textsc{d}$-\TOB.
\end{theorem}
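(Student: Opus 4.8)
The plan is to follow the template of the proof of Theorem~\ref{thm:algMTST}: prove a loop invariant for the outer \textbf{for} loop (lines~5--18), supported by a nested invariant for the \textbf{while} loop (lines~9--17). Write $d(v)=\textsc{d}\sub{\mc G}(r,v)$, let $d_1<\cdots<d_h$ be the sorted finite values of $d$, put $D_0=\{r\}$ and $D_i=\{v\in V\setminus\{r\}:d(v)=d_i\}$ for $i\in[h]$, and let $V'$ be the set of vertices temporally reachable from $r$, so $\{D_i\}_{i\in[h]_0}$ partitions $V'$. Recall from the discussion before the algorithm that along any $\textsc{d}$-prefix-optimal $(r,v)$-walk the distances are non-decreasing, that for $\textsc{d}=\LD$ they all equal the starting time $\LD(r,v)$, and that for $\textsc{d}=\ST$ two consecutive ones coincide exactly when the arc between them has $el=0$. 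The target outer invariant is: at the end of iteration $i$, $V\sub{\mc B}=\{v\in V':d(v)\le d_i\text{ and }\mc G\text{ has a }\textsc{d}\text{-prefix-optimal }(r,v)\text{-walk}\}$, $\mc{EAD}(v)=\EA\textsc{d}(r,v)$ for all $v\in V\sub{\mc B}$, and the current $\mc B$ is an $\EA\textsc{d}$-\TOB of $\mc G$ with root $r$. Granting it, after iteration $h$ the set $V\sub{\mc B}$ is exactly the vertex set of a maximum $\textsc{d}$-\TOB identified by Corollary~\ref{cor:maxDtob}, so $\mc B$ is a maximum $\textsc{d}$-\TOB, and it is an $\EA\textsc{d}$-\TOB by construction. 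The complexity bound is as sketched before the theorem: the initial computation of all $d(v)$ costs $O(m\log m)$ by Table~\ref{table:shortest_path}, and afterwards each arc is enqueued $O(1)$ times and dequeued at most once (the $i$-th iteration handles only arcs with head in $D_i$), each queue operation costing $O(\log m)$ and each filter test $O(1)$.

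For the inductive step I fix $i$, assume the invariant through $i-1$, and analyse the \textbf{while} loop. Two structural facts are needed. \emph{(a) Every arc surviving the filter is ``good''}: if $a=(u,v,s,t)$ survives, then $s\ge\mc{EAD}(u)=t_a(W_u)$ for the unique $(r,u)$-path $W_u$ already in $\mc B$, so $W_v:=W_u+(u,a,v)$ is a temporal walk to a vertex $v\notin V(W_u)$, and it realises $\textsc{d}(r,v)$: for $\textsc{d}=\ST$ because $\weight(W_v)=\weight(W_u)+(t-s)=d(u)+(d_i-d(u))=d_i=\ST(r,v)$ by the elapsed-time test $t-s=d_i-d(u)$; for $\textsc{d}=\LD$ because $t_s(W_v)=t_s(W_u)=d_i=\LD(r,v)$ (the last equality since either $u=r$ and $a$ was enqueued with $s=d_i$, or $u\in D_i$ so $\LD(r,u)=d_i$). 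As $W_v$ is a path whose proper prefixes are prefixes of the $\textsc{d}$-prefix-optimal path $W_u$ and $W_v$ itself realises $\textsc{d}(r,v)$, the walk $W_v$ is $\textsc{d}$-prefix-optimal. \emph{(b) The good arc of the ``next'' vertex is available}: let $v\in D_i$ be not yet spanned, with a $\textsc{d}$-prefix-optimal walk and $\EA\textsc{d}(r,v)$ minimum among such vertices; pick such a walk $W$ with $t_a(W)=\EA\textsc{d}(r,v)$ and let $\ba a=(\ba u,v,\ba s,\ba t)$ be its last arc. Then $d(\ba u)\le d_i$, the $\ba u$-prefix $P$ of $W$ realises $\textsc{d}(r,\ba u)$ so $\ba s\ge t_a(P)\ge\EA\textsc{d}(r,\ba u)$, and for $\textsc{d}=\ST$ we get $\ba t-\ba s=d_i-d(\ba u)$ from $\weight(P)=d(\ba u)$ and $\weight(W)=d_i$. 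If $d(\ba u)<d_i$ (for $\textsc{d}=\LD$ this forces $\ba u=r$), then $\ba u\in\mc B$ with $\mc{EAD}(\ba u)=\EA\textsc{d}(r,\ba u)$ already when iteration $i$ starts and $\ba a$ is in the initial queue; if $d(\ba u)=d_i$, then $\ba u$ is itself such a vertex (via $P$) with $\EA\textsc{d}(r,\ba u)\le\ba s\le\ba t=\EA\textsc{d}(r,v)$, and a sub-induction walking $W$ backwards (terminating since $W$ is finite) shows $\ba u$ is added before $v$, at which point line~17 enqueues $\ba a$. In every case $\ba a$ eventually reaches the front of $Q$ passing all filter conditions and is accepted, which by (a) means no eligible vertex of $D_i$ is missed.

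\emph{(c) The queue order is correct}: since $Q$ is a min-priority queue on arrival time and by (a) each accepted arc sets $\mc{EAD}(v)$ to the arrival time of a distance-realising walk, one shows by a Dijkstra-type induction over the additions within iteration $i$ that the vertices of $D_i$ are added in nondecreasing order of $\EA\textsc{d}(r,\cdot)$, each with $\mc{EAD}(v)=\EA\textsc{d}(r,v)$; combined with (a), this makes $W_v$ an $\EA\textsc{d}$-prefix-optimal $(r,v)$-walk (here one also checks that no surviving arc can have head an already-spanned vertex or an ineligible one, so the only arcs processed with arrival smaller than $\EA\textsc{d}(r,v)$ are discarded). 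Finally, once $v$ is accepted $\mc{EAD}(v)$ becomes finite, so no later arc into $v$ passes the test $t<\mc{EAD}(v)$; no enqueued arc ever has head $r$; and since $\{D_j\}$ partitions $V'$, no arc into $v$ is considered in another iteration. Hence $d^-\sub{\mc B}(r)=0$ and $d^-\sub{\mc B}(v)=1$ for $v\ne r$, so by Lemma~\ref{lem:tob} $\mc B$ is (and stays) an $\EA\textsc{d}$-\TOB; together with (a)--(b) this closes the outer invariant.

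The main obstacle is (c): because for $\textsc{d}\in\{\LD,\ST\}$ the parent of a vertex of $D_i$ may itself lie in $D_i$, the order in which to process a level cannot be fixed in advance, and one must show that dequeuing arcs by arrival time processes the level in nondecreasing $\EA\textsc{d}$ order (equivalently, that the sub-induction in (b) terminates). What makes this go through is the observation that if $\ba u,v\in D_i$ with $\EA\textsc{d}(r,\ba u)=\EA\textsc{d}(r,v)$, then the arc joining them in an $\EA\textsc{d}$-prefix-optimal walk has $el=0$ and starting time equal to that common value, so within-level ``ties'' are consistently orientable along finite prefix-optimal walks and carry no cyclic dependency. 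This is precisely the phenomenon that does not arise in the $\MT$ case of Algorithm~\ref{alg:MTST}, where distances increase strictly by $1$ along prefix-optimal walks so the level structure is trivially ordered, and it is why Algorithm~\ref{alg:LDST-tob} needs the extra priority-queue bookkeeping.
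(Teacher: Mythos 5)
Your proposal is correct and follows essentially the same route as the paper's proof: the identical outer \textbf{for}-loop invariant closed via Corollary~\ref{cor:maxDtob}, completeness obtained by tracing a $\textsc{d}$-prefix-optimal walk and observing that the filter can only reject an arc of that walk if its head is already spanned, exactness of $\mc{EAD}$ (and hence indegree one and the $\EA\textsc{d}$-\TOB property) obtained from the min-priority queue on arrival times --- the paper phrases your point (c) as a ``first failure'' contradiction in which the good arc $a_{xy}$ is shown to still be in $Q$ --- and the same complexity accounting. The differences (forward versus backward tracing of the walk, and your explicit tie discussion) are organizational rather than substantive.
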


\begin{proof}
Let $\mc G=(V,A,\tau)$ be the temporal graph input of the algorithm, $r\in V$, $d(v)=\textsc{d}\sub{\mc G}(r,v)$ for all $v\!\in\! V$, $h=| \{d(v):v\in V \}|$ and $   \{d_0,d_1,\dots ,d_h\}=\{d(v):v\in V, d(v)\!<\! +\infty \} $, with $d_0<d_1<\dots <d_h$. Let $V'=\{v\in V: v \text{ is temporally reachable from $r$} \}$ and for all $i\in [h]_0$, let $D_i=\{v\in V: d(v)=d_i  \}$. Note that $\{D_i:i\in  [h]_0\}$ is a partition of $V'$ with $D_0=\{r\}$. Since no confusion is possible, from now on we will avoid writing the subscripts $\mc G$.
Note that if $\textsc{d}=\LD$, then each iteration of the \textbf{for} loop in lines 5-18 is completely independent on the others, as it deals only with vertices in $D_i$ and temporal arcs with both tail and head in $D_i$. 
We now proceed by proving the following loop invariant:

\begin{claim}\label{claim:LDSTalg2}
Given $\textsc{d}\in \{\LD,\ST\}$, at the end of the $i$-th iteration of the \textbf{for} loop in lines 5-18, we have that $\mc{EAD}(v)=\EA\textsc{d}\sub{\mc G}(r,v)  $ for all $v\in V\sub{\mc B}$ and
$\mc B=(V\sub{\mc B},A\sub{\mc B},\tau\sub{\mc B})$ is an $\EA\textsc{d}$-\TOB with root $r$ of $\mc G$ with
\begin{equation}\label{eq:claimLDST}
    V\sub{\mc B}=\{v\in V': \exists \text{ a $\textsc{d}$-prefix-optimal temp. } (r,v)\text{-walk in } \mc G, d(v)\leq d_{i}  \}.
\end{equation}
\end{claim}
\noindent The claim above implies that the final output $\mc B$ of the algorithm is an $\EA\textsc{d}$-\TOB with root $r$ of $\mc G$, which is in particular a $\textsc{d}$-\TOB. Moreover, $V\sub {\mc B}$ consists of all the vertices in $\mc G$ for which there exists a $\textsc{d}$-prefix-optimal temporal walk from the root, so $\mc B$ is a maximum $\textsc{d}$-\TOB by Corollary \ref{cor:maxDtob}. We are left to prove the claim. 
$\mc B$ is initialized as the temporal graph made of the sole vertex $r$, so the loop invariant is trivially true. 
Suppose now that the loop invariant is true up to a certain $i$-th iteration. We now prove that it holds for the $(i+1)$-th iteration.
We start by proving that if $v\in D_{i+1}$ and there exists a $\textsc{d}$-prefix-optimal temporal $(r,v)$-walk in $\mc G$, say $W=(r=x_0,a_1,x_1,\dots ,a_{m},x_{m}=v)$, then $v\in V\sub{\mc B}$ at the end of the $(i+1)$-th \textbf{for} loop iteration. 
Let $x_j$ be the last vertex of $W$ starting from $r$ that is in $V\sub{\mc B}$ before the beginning of the $(i+1)$-th iteration ($x_j$ possibly equal to $r$). By inductive hypothesis, this implies that $d(x_j)<d_{i+1} $ and that $d(x_l)=d_{i+1}$ for all $l>j$. Then the arc $a_{j+1}$ is added to $Q$ in lines 7-8 at the beginning of the $(i+1)$-th iteration. Since 
$W$ is $\textsc{d}$-prefix-optimal,  $a_{j+1}$ does not fulfil the condition in line 11, unless $x_{j+1}$ has been already added to $V\sub{\mc B}$. This implies that at one point of the \textbf{while} loop $x_{j+1}$ is added to $V\sub{\mc B} $, which implies that $a_{j+2}$ is put in queue $Q$ by line 16. This iteratively proves that for every $l>j $, $x_l$  is added to $V\sub{\mc B}$ at one point of the \textbf{while} loop, including $x_{m}=v$. This proves equation (\ref{eq:claimLDST}).
To prove the rest of the claim, we will prove the following fact:
\begin{claim}\label{claim:LDSTalg}
Suppose to be in the $(i+1)$-th iteration of the \textbf{for} loop of lines 5-18. Then, at the end of each iteration of the \textbf{while} loop of lines 9–17, we have that $\mc B=(V\sub{\mc B},A\sub{\mc B},\tau\sub{\mc B})$ is an $\EA\textsc{d}$-\TOB with root $r$ and for all $ v\in V\sub{\mc B}, \,\mc{EAD}(v)=\EA\textsc{d}\sub{\mc G}(r,v)  $.
\end{claim}
\noindent At the beginning of the $(i+1)$-th \textbf{for} loop iteration, the inductive hypothesis holds, so the invariant property is true. 
By contradiction, consider the first iteration of the \textbf{while} loop such that the addition of the vertex $v$ to $V\sub{\mc B}$ and of the arc $a=(u,v,s,t)$ to $A\sub{\mc B}$ makes the claim fail. Since we are at the $(i+1)$-th \textbf{for} loop iteration, it holds that $d(v)=d_{i+1}$.
Due to line 11, it must hold that $s\geq \mc {EAD} (u)$ and, if $\textsc{d}=\ST$, then $t-s=d_{i+1}-d(u) $. 
This implies that $\mc {EAD} (u)<+\infty$, and since the only way for this to hold is to have $u=r$, or to have $\mc {EAD}(u)$ updated to a natural number (in which case $u$ is added to $V\sub {\mc B}$ in line 15),  we get that $u\in V\sub {\mc B}$. 
Also, $u$ must have entered $V\sub {\mc B}$ before $v$, so by hypothesis there exists an $\EA\textsc{d}$-prefix-optimal temporal $(r,u)$-walk $W_u$ in $\mc B$; in particular $t_a(W_u)=\EA\textsc{d}(r,u)= \mc{EAD}(u) $. Since $s\geq \mc {EAD} (u)$, then $W_v=W_u+(u,a,v)$ is a temporal $(r,v)$-walk in $\mc B$. Moreover, if $\textsc{d}=\LD$, then $u\in D_{i+1}$, so $t_s(W_v)=t_s(W_u)=d(u)=d_{i+1}=d(v)$. If $\textsc{d}=\ST$, then $ t-s=d_{i+1}-d(u)$, so $\weight(W_v)=\weight(W_u)+(t-s)=d(u)+(t-s)=d_{i+1}$. Hence in both cases $W_v$ is $\textsc{d}$-prefix-optimal. This also implies that $t=t_a(W_v)\geq \EA\textsc{d}(r,v)$. 
It remains to show that $v$ has indegree $1$ in $\mc B$ and that $t=\mc{EAD}(v)=\EA\textsc{d}(r,v)  $ to derive the contradiction.
Suppose first that $v$ has indegree $\neq 1$. Then it must have indegree greater than $1,$ because $a\in A\sub {\mc B}$ is an incoming temporal arc of $v$. Then, at a previous step of the \textbf{while} loop, an arc $a'=(u',v,s',t')$ was added to $A\sub {\mc B}$, which means that $v$ was also added to $V\sub{\mc B}$ and $\mc {EAD}(v)$ was set equal to $t'$.  
When $a'$ was added, $u'$ must have already been in $V\sub {\mc B}$. By hypothesis, there exists an $\EA\textsc{d}$-prefix-optimal temporal $(r,u')$-walk $W_{u'}$ in $\mc B$, and $W'=W_{u'}+(u',a',v)$ is such that $\EA\textsc{d}(r,v)=t_a(W')=t'=\mc{EAD}(v)$ by hypothesis. Since $t\geq \EA\textsc{d}(r,v)=\mc{EAD}(v)$, the arc $a$ could have never been chosen later, as it is fulfilling the condition $t\geq \mc{EAD}(v) $ in line 11. So $v$ has indegree $ 1$ in $\mc B$. 
Suppose now that $\mc{EAD}(v)\neq \EA\textsc{d}(r,v)  $, i.e.\ that $t>\EA\textsc{d}(r,v)$.
We know that $v$ has a $\textsc{d}$-prefix-optimal $(r,v)$-walk in $\mc G$; let $W$ the one that arrives the earliest in $v$, i.e. $t_a(W)=\EA\textsc{d}(r,v)$. 
Let $y\in V(W)$ be the first vertex along $W$ such that, when $v$ is added to $V\sub{\mc B}$, $y\notin V\sub {\mc B}$, and let $x\in V\sub {\mc B}$ be $y$'s predecessor along $W$ and $a_{xy}=(x,y,s_{xy},t_{xy})$ be the temporal arc connecting them in $W$ ($x$ may coincide with $r$). By inductive hypothesis, we have that $y\in D_{i+1} $.
Notice that $t_{xy}\leq t_a(W)=\EA\textsc{d}(r,v)$.
Since $x\in V\sub {\mc B}$ and we chose $v$ as the first vertex for which $\mc{EAD}(v)\neq \EA\textsc{d}(r,v)$, we have that $\mc{EAD}(x)=\EA\textsc{d}(r,x) $ when $x$ was added to $V\sub {\mc B}$. 
This implies that the arc $a_{xy}$ is enqueued in $Q$ when $x$ is added to $V\sub{\mc B}$.  
Indeed if $\textsc{d}=\LD$, then $d(x)=d(y)=d(v)=d_{i+1}$ and so $a_{xy}\in \{(x,v',s',t')\in A: v'\in D_{i+1} \}$ in line 16 and if $\textsc{d}=\ST$, let $i'$ such that $d_{i'} = d(x)\leq d_{i+1} $. If $i'=i+1$ we conclude as above. If $i'<i+1$, since $y\in D_{i+1}$, then $a_{xy}$ is enqueued in $Q$ at the beginning of the $(i+1)$-th \textbf{for} loop iteration (line 8). 
We claim that when $v$ was added to $V\sub {\mc B}$, $a_{xy}$ was still in $Q$. Indeed, $a_{xy}$ could have not been dequeued from $Q$ and added to $A\sub{\mc B}$ since otherwise $y\in  V\sub {\mc B}$ before $v$ was added to $V\sub {\mc B}$, which contradicts the hypothesis. If $a_{xy}$ was dequeued from $Q$ without being added to $A\sub{\mc B}$, since $W$ is $\textsc{d}$-prefix-optimal (and so $t_{xy}-s_{xy}=d(y)-d(x)=d_{i+1}-d(x) $ if $\textsc{d}=\ST$) and $s_{xy}\geq \EA\textsc{d}(r,x)=\mc{EAD}(x)$, then it must have hold that $\mc {EAD}(y)<+\infty$. This implies that $y$ was already in $V\sub {\mc B}$ before $v$ was added to $V\sub {\mc B}$, which again contradicts the hypothesis.
Therefore, when $v$ was added to $V\sub {\mc B}$,
it must hold that $t\leq t_{xy}$. But $t_{xy}\leq t_a(W)=\EA\textsc{d}(r,v)\leq t $ and so $ t=\EA\textsc{d}(r,v) $. This concludes the proof. 

 Regarding the computational complexity, the initial computation of all $\textsc{d}(r,v)$, $v\!\in\! V$, requires $O(m\log m)$ by Table \ref{table:shortest_path}. Concerning the remaining part of the algorithm, notice that the $i$-th iteration of the \textbf{for} loop considers only arcs whose head is in $D_i$. This means that each arc is considered only in one of the iterations of the \textbf{for} loop. Moreover, each arc is dequeued from $Q$ at most once. As the dequeue from $Q$ costs $O(\log m)$ we obtain a total running time of $O(m\log m)$.
\end{proof}

\section{Finding a maximum {\sc{d}}-temporal out-branching is \NP-hard for fastest time and minimum waiting time distances}\label{sec:FT-tobs}

As previously observed, Theorem \ref{thm:tob} does not hold for $\textsc{d}\in \{\FT, \MW\}$. Indeed in these cases the problem becomes \NP-complete even in the following very constrained situations:
when $el(a)=0$ for all $a\in A$, also called \emph{nonstrict} temporal graphs, and when $el(a)=1$ 
for all $a\in A$, also called \emph{strict} temporal graphs (see e.g.\ \cite{casteigts2018finding}). 
The nonstrict model is used when the time-scale of the measured phenomenon is relatively big: this is the case in a disease-spreading scenario~\cite{ZSCHOCHE2020,DELIGKAS2022} where the spreading speed might be unclear, or in time-varying graphs \cite{nicosia2012}, where a single snapshot corresponds e.g.\ to all the streets available within a day.\footnote{The literature often focused on nonstrict/strict variations to provide stronger negative results. In this paper, we have used the more general model to provide stronger positive results, while using the nonstrict/strict when providing negative ones.}
Notice that in the following theorem we consider we consider a particular case of the decision problems, namely, the existence of a spanning $\textsc{d}$-\TOB.

\begin{theorem}\label{thm:FT_tob_hard}
Let $\mc G=(V,A,\tau)$ be a temporal graph, $r\!\in\! V$ and $\textsc{d}\in \{\FT, \MW\}$. Deciding whether $\mc G$ has a spanning \textsc{d}-\TOB with root $r$ is \emph{\NP}-complete, even if $\tau=2$ and $el(a)\!=\!0$ for every $a\in A$, or if $\tau=4$ and $el(a)\!=\!1$ for every $a\!\in\! A$.
\end{theorem}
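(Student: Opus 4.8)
The plan is to establish membership in \NP\ first, and then \NP-hardness by reduction from \textsc{3-Sat}. Membership is routine: a certificate is the claimed spanning $\textsc{d}$-\TOB $\mc B$, and by Lemma~\ref{lem:tob} one checks in linear time that $\mc B$ is a branching rooted at $r$ (i.e.\ $d^-_{\mc B}(r)=0$, $d^-_{\mc B}(v)=1$ for all $v\neq r$, and $|A_{\mc B}|=|V|-1$); then for every $v$ one extracts the unique temporal $(r,v)$-path of $\mc B$ and compares $\textsc{d}_{\mc B}(r,v)$ with $\textsc{d}_{\mc G}(r,v)$, the latter being computable in polynomial time by the algorithms in Table~\ref{table:shortest_path}.

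The hardness construction is organised around the obstruction of Figure~(\ref{fig:no_FT-tob}): when $\textsc{d}(r,v)=0$, the value $\FT(r,v)$ (resp.\ $\MW(r,v)$) is attained in $\mc G$ only by walks reaching $v$ in one specific time slot, while $v$ can also be reached in the other slot by a cheap arc out of $r$; since $v$ has in-degree $1$ in any \TOB, committing $v$ to the ``cheap'' slot permanently deletes from the branching every descendant of $v$ that needed the other slot. (Equivalently, as already observed in the excerpt for $\FT$, a vertex may have a $\textsc{d}$-prefix-optimal $(r,\cdot)$-path only through a prefix that arrives late, and using that prefix can be incompatible with spanning a sibling that needs the early prefix.) I would work with $\tau=2$ and $el(a)=0$ for all $a$, so every label lies in $\{(1,1),(1,2),(2,2)\}$ and every temporal walk has duration and waiting time in $\{0,1\}$. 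Each variable $x_i$ gets two hubs $s_i,\bar s_i$, each with exactly the two incoming arcs $(r,\cdot,1,1)$ and $(r,\cdot,2,2)$, so that any spanning $\textsc{d}$-\TOB places each hub into exactly one slot; $s_i$'s slot encodes the truth value of $x_i$. To force $s_i$ and $\bar s_i$ into opposite slots I attach two ``or-obstruction'' sinks, $m_i$ with the only incoming arcs $(s_i,m_i,1,1),(\bar s_i,m_i,1,1)$ and $m_i'$ with the only incoming arcs $(s_i,m_i',2,2),(\bar s_i,m_i',2,2)$: spanning $m_i$ forces some hub into slot $1$, spanning $m_i'$ forces some hub into slot $2$, so spanning both is possible iff the two hubs occupy opposite slots, which is exactly the consistent encoding of $x_i$ and $\overline{x_i}$. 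Each clause $c=\ell_1\vee\ell_2\vee\ell_3$ gets a clause vertex $w_c$ with a default incoming arc $(r,w_c,2,2)$, an incoming arc $(s_{i_j},w_c,1,1)$ for each positive literal $\ell_j$ and $(\bar s_{i_j},w_c,1,1)$ for each negative literal, plus a single descendant $x_c$ whose only incoming arc is $(w_c,x_c,1,1)$. Then $\textsc{d}(r,w_c)=\textsc{d}(r,x_c)=0$ in $\mc G$ unconditionally, but in a spanning \TOB the vertex $x_c$ can be reached only if $w_c$ is reached in slot $1$, which can happen only through a literal arc whose hub sits in slot $1$, i.e.\ only if some literal of $c$ is true under the encoded assignment.

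Gluing all gadgets at $r$ then yields a polynomial-size temporal graph that has a spanning $\textsc{d}$-\TOB with root $r$ iff the \textsc{3-Sat} instance is satisfiable, and the \emph{same} graph works verbatim for $\textsc{d}=\FT$ and $\textsc{d}=\MW$ because on it the optimal walks are exactly the ``$0$-cost'' ones for both measures. For the strict case, $el(a)=1$ and $\tau=4$, I would give an analogous construction over the four available time steps, emulating the two nonstrict layers with arc types $(1,2),(2,3),(3,4)$ and re-deriving the obstruction statements with the duration/waiting bookkeeping rescaled; the extra room is what the threshold $\tau=4$ buys.

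The step I expect to be the real obstacle is the variable/negation gadget together with the global-consistency direction of correctness: one must show that \emph{every} spanning $\textsc{d}$-\TOB — not merely the one built from a satisfying assignment — induces a single well-defined truth value at each variable that is read coherently by all (possibly many) clauses containing it, and this has to be engineered inside the minuscule window $\tau=2$ where essentially the only freedom is the choice $(1,1)$ versus $(2,2)$. A companion subtlety to maintain everywhere is that every gadget vertex stays temporally reachable from $r$ in $\mc G$ under all assignments (so all $\textsc{d}(r,v)$ are finite, in fact $0$): this is what guarantees that the line between satisfiable and unsatisfiable instances is drawn by the \emph{$\textsc{d}$-optimality} of the spanning paths and never by bare reachability, which is the phenomenon the theorem is really about; in the strict variant one additionally has to tune the ``slow'' routes so that their duration and waiting time are $\textsc{d}$-optimal yet they still arrive too late to reach $x_c$.
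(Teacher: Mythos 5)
Your \NP-membership argument and your nonstrict ($\tau=2$, $el\equiv 0$) construction are correct, and they follow the same overall strategy as the paper (reduction from 3-SAT, with the two time slots $(1,1)$/$(2,2)$ encoding truth values and the fact that all distances in $\mc G$ equal $0$ doing the work for both \FT and \MW). Your gadgetry is heavier than necessary, though: the paper uses a single vertex $x_i$ per variable with the two arcs $(r,x_i,1,1)$, $(r,x_i,2,2)$, encodes the polarity of a literal in the \emph{timing} of the arc $x_i\to c_j$ (label $(1,1)$ for positive, $(2,2)$ for negative), and forces consistency directly at the clause vertex — positive literals are forced by temporality of the unique $(r,c_j)$-path inside the branching, negative ones by $\textsc{d}$-optimality at $c_j$. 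This removes the need for your twin hubs $s_i,\bar s_i$, the consistency sinks $m_i,m_i'$, and the clause descendants $w_c,x_c$; your version is nevertheless sound (I checked that $m_i$ forces its parent hub into slot $1$ via reachability in $\mc B$, $m_i'$ forces its parent into slot $2$ via $\textsc{d}$-optimality, and $x_c$ forces some literal hub into slot $1$ via reachability, while all vertices keep $\FT=\MW=0$ in $\mc G$).

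The genuine gap is the strict case $\tau=4$, $el\equiv 1$, which is half of the theorem and which you dispose of with ``an analogous construction \dots re-deriving the bookkeeping.'' This is not automatic for your gadget, and the naive transcription fails: if you keep the default arc into the clause vertex (the analogue of $(r,w_c,2,2)$, say $(r,w_c,2,3)$), then the route $r\to w_c\to x_c$ with labels $(2,3),(3,4)$ has duration $2$ and waiting time $0$, whereas the intended literal route $r\to \text{hub}\to w_c\to x_c$ with labels $(1,2),(2,3),(3,4)$ has duration $3$; so the default route is optimal for both \FT and \MW, a spanning $\textsc{d}$-\TOB can reach $x_c$ through it, and the clause gadget no longer forces any literal hub into the early slot — the reduction collapses for both distances. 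The fix is easy (drop the default arc, which was already redundant in the nonstrict case, and re-verify that $m_i'$ is still forced by duration $2$ vs.\ $3$ and waiting $0$ vs.\ $1$), but as written the strict half is unproven. For comparison, the paper handles it by simply shifting labels in its lean gadget: $(r,x_i,1,2)$ and $(r,x_i,2,3)$ for the variables, $(x_i,c_j,2,3)$ for positive and $(x_i,c_j,3,4)$ for negative occurrences, with the same two forcing mechanisms as before.
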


\begin{proof}
The problem is in \NP, since computing $\textsc{d}\sub{\mc G}(r,v)$ for every vertex $v$ can be done in polynomial time (Table \ref{table:shortest_path}), as well as testing whether a given temporal subgraph $\mc B$ is a $\textsc{d}$-\TOB (see Lemma \ref{lem:tob}). 
To prove hardness, we make a reduction from 3-SAT, largely known to be \NP-complete~\cite{cook1971complexity,levin1973universal}. For this, consider a formula $\phi$ in CNF form on variables $X = \{x_1,\ldots,x_n\}$ and on clauses $C = \{c_1,\ldots,c_m\}$. We first construct $\mc G = (V,A,\tau)$ for the case where every arc has elapsed time~0 (observe Figure~(\ref{fig:reduction0}) to follow the construction). Let $V = X \cup C\cup \{r\}$. For each variable $x_i$, add to $A$ the temporal arcs $(r,x_i,1,1)$ and $(r,x_i,2,2)$. Then, for each clause $c_j$ and each variable $x_i$ appearing in $c_j$, add temporal arc $(x_i,c_j,1,1)$ if $x_i$ appears in $c_j$ positively, while add the temporal arc $(x_i,c_j,2,2)$ if $x_i$ appears in $c_j$ negatively. We now prove that $\phi$ is satisfiable if and only if there exists a spanning \textsc{d}-\TOB rooted in $r$. 
Suppose first that $\phi$ has a satisfying assignment; we show how to construct a spanning \textsc{d}-\TOB ${\mc B} = (V,A\sub{\mc B},\tau\sub{\mc B})$ rooted in $r$. For each variable $x_i$, add to $A\sub{\mc B}$ the temporal arc $(r,x_i,1,1)$ if $x_i$ is true, while add to $A\sub{\mc B}$ the temporal arc $(r,x_i,2,2)$ if $x_i$ is false. Now consider a clause $c_j$ and choose one of the variables that validates $c_j$, say $x_{i_j}$. Add to $A\sub{\mc B}$ the unique temporal arc with head $c_j$ and tail $x_{i_j}$.  Now observe that the vertices in $X$ are connected to $r$ in ${\mc B}$ through direct arcs; hence we get that $\textsc{d}\sub{\mc B}(r,x_i) = 0$ for every $x_i\in X$. For a clause $c_j$, if $x_{i_j}$ appears positively in $c_j$, then $x_{i_j}$ is true, and $(r,x_{i_j},1,1)$ and $(x_{i_j},c_j,1,1)$ are in $A\sub{\mc B}$; therefore $\textsc{d}\sub{\mc B}(r,c_j) = 0$. If $x_{i_j}$ appears negatively in $c_j$, then $x_{i_j}$ is false, so $(r,x_{i_j},2,2)$ and $(x_{i_j},c_j,2,2)$ are in $A\sub{\mc B}$; therefore $\textsc{d}\sub{\mc B}(r,c_j) = 0$. 
Finally, observe that each vertex different from the root has indegree~1. By Lemma~\ref{lem:tob}, we get that ${\mc B}$ is a spanning \TOB, and since $\textsc{d}\sub{\mc B}(r,v) = 0$ for every $v\in V$, it follows that ${\mc B}$ is a spanning \textsc{d}-\TOB.

Suppose now that ${\mc B} = (V,A\sub{\mc B},\tau\sub{\mc B})$ is a spanning \textsc{d}-\TOB rooted in $r$. Since the only possible $(r,x_i)$-walk is through an arc, we get that either $(r,x_i,1,1)\in A\sub{\mc B}$ or $(r,x_i,2,2)\in A\sub{\mc B}$. If the former occurs, then set $x_i$ to true, while if the latter occurs, then set $x_i$ to false. We now argue that this must be a satisfying assignment to $\phi$. For this, consider a clause $c_j$. By Lemma~\ref{lem:tob}, we know that $d^-\sub{\mc B}(c_j)=1$; so let $a = (x_{i_j},c_j,t,t)$ be the temporal arc incident to $c_j$ in ${\mc B}$. If $x_{i_j}$ appears positively in $c_j$, then we know that $a=(x_{i_j},c_j,1,1)$ by construction. And since the temporal $(r,c_j)$-walk must pass by $x_{i_j}$, we get that $(r,x_{i_j},1,1)\in A_{\tau}$, in which case $x_{i_j}$ is set to true and hence satisfies $c_j$. If $x_{i_j}$ appears in $c_j$ negatively, then $a=(x_{i_j},c_j,2,2)$. Notice that $\textsc{d}\sub{\mc G}(r,c_j) = 0$; since ${\mc B}$ is a \textsc{d}-\TOB, we must also have $\textsc{d}\sub{\mc B}(r,c_j) = 0$. This implies that $(r,x_{i_j},2,2)\in A_{\mc B}$ and hence $x_{i_j}$ is set to false, satisfying $c_j$.

In the case where $el(a)=1$ for every arc $a$, the reduction is similar to the previous one. Specifically, for each $x_i\in X$, we add arcs $(r,x_i,1,2)$ and $(r,x_i,2,3)$. For each clause $c_j$, if $x_i$ appears positively in $c_j$ we add the temporal arc $(x_i,c_j,2,3)$, while if $x_i$ appears negatively in $c_j$ we add the temporal arc $(x_i,c_j,3,4)$. Analogous arguments to the previous ones apply.
\end{proof}
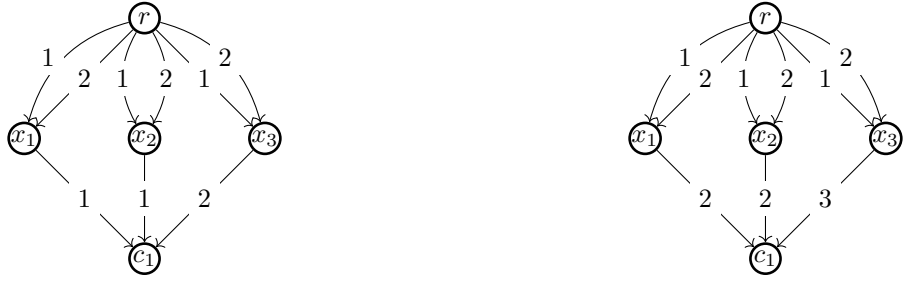
\begin{figure*}[t!]
    \centering
    \begin{subfigure}[t]{0.5\textwidth}
        \centering
        \begin{tikzpicture}[roundnode/.style={circle,fill=white, inner sep=0pt, minimum size=4mm,draw=black},node distance=0.5cm,line width=0.2mm, scale=0.8]
  \pgfsetlinewidth{1pt}
  \pgfdeclarelayer{bg}    
   \pgfsetlayers{bg,main}  


   			\node[roundnode]    (r) at (0,0) {$r$};
 			\node[roundnode]    (x1) at (-2,-2) {$x_1$};
 			\node[roundnode]    (x2) at (0,-2) {$x_2$};
 			\node[roundnode]    (x3) at (2,-2) {$x_3$};
 			\node[roundnode]    (c1) at (0,-4) {$c_1$};

    \begin{pgfonlayer}{bg}    
        \path[->] (r) edge[bend right,left]	node[fill=white]  {$1$} (x1)
 				  (r) edge 		node[fill=white]  {$2$} (x1)
 				  (r) edge[bend right]	node[fill=white]  {$1$} (x2)
 				  (r) edge[bend left] 		node[fill=white]  {$2$} (x2)
 				  (r) edge	node[fill=white]  {$1$} (x3)
 				  (r) edge[bend left] 		node[fill=white]  {$2$} (x3) 
 				  (x1) edge	node[fill=white]  {$1$} (c1)
 				  (x2) edge	node[fill=white]  {$1$} (c1)
 				  (x3) edge	node[fill=white]  {$2$} (c1);
    \end{pgfonlayer}
    
  \end{tikzpicture}
        \caption{All temporal arcs have elapsed time~0.}\label{fig:reduction0}
    \end{subfigure}%
    \begin{subfigure}[t]{0.5\textwidth}
        \centering
        \begin{tikzpicture}[roundnode/.style={circle,fill=white, inner sep=0pt, minimum size=4mm,draw=black},node distance=0.7cm,line width=0.2mm,scale=0.8]
  \pgfsetlinewidth{1pt}
  \pgfdeclarelayer{bg}    
   \pgfsetlayers{bg,main}  

   			\node[roundnode]    (r) at (0,0) {$r$};
 			\node[roundnode]    (x1) at (-2,-2) {$x_1$};
 			\node[roundnode]    (x2) at (0,-2) {$x_2$};
 			\node[roundnode]    (x3) at (2,-2) {$x_3$};
 			\node[roundnode]    (c1) at (0,-4) {$c_1$};

    \begin{pgfonlayer}{bg}    
        \path[->] (r) edge[bend right]	node[fill=white]  {$1$} (x1)
 				  (r) edge 		node[fill=white]  {$2$} (x1)
 				  (r) edge[bend right]	node[fill=white]  {$1$} (x2)
 				  (r) edge[bend left] 		node[fill=white]  {$2$} (x2)
 				  (r) edge	node[fill=white]  {$1$} (x3)
 				  (r) edge[bend left] 		node[fill=white]  {$2$} (x3)
 				  (x1) edge	node[fill=white]  {$2$} (c1)
 				  (x2) edge	node[fill=white]  {$2$} (c1)
 				  (x3) edge	node[fill=white]  {$3$} (c1);
    \end{pgfonlayer}

  \end{tikzpicture}
        \caption{All temporal arcs have elapsed time~1.}\label{fig:reduction1}
    \end{subfigure}
    \caption{Example of the construction in the proof of Theorem~\ref{thm:FT_tob_hard}. Clause $c_1$ is equal to $(x_1\vee x_2\vee\neg x_3)$. The value on top of each arc represents the starting time.}\label{fig:reduction}
\end{figure*}

\noindent The gaps left by the above theorem are when ${\mc G}$ has lifetime~1 or when ${\mc G}$ has lifetime $\tau\in \{2,3 \}$ and all arcs have elapsed time at least~1. Those cases are investigated in the following proposition.

\begin{proposition}\label{prop:gaps_FTMW}
  Let $\mc G=(V,A,\tau)$ be a temporal graph, $r\!\in\! V$ and $\textsc{d}\in \{\FT, \MW\}$. If $\tau=1$ or, if $\tau\in \{2,3 \}$ and $el(a)\geq 1$ for all $a\in A$, then a maximum \textsc{d}-\TOB with root $r$ of $\mc G$ is computable in polynomial time.
\end{proposition}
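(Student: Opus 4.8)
The plan is to exploit that under each of the hypotheses the temporal graph is so restricted that every temporal walk out of $r$ is extremely short, and to deduce that a maximum $\textsc{d}$-\TOB always spans the whole reachable set $V':=\{v\in V:v\text{ is temporally reachable from }r\}$; it then suffices to exhibit such a branching explicitly. I will use repeatedly that a maximum \TOB of $\mc G$ rooted in $r$ has vertex set exactly $V'$ (Remark~\ref{rem:tob}$(i)$ together with Lemma~\ref{lem:tob}) and is computable in $O(m)$ time \cite{huang2015}; hence in each case it is enough to produce a \TOB on $V'$ whose $(r,v)$-walk realises $\textsc{d}\sub{\mc G}(r,v)$ for every $v$. \emph{Case $\tau=1$:} every arc has label $(1,1)$, so every temporal walk $W$ has $\dur(W)=0$ and $\wait(W)=0$, whence $\FT\sub{\mc G}(r,v)=\MW\sub{\mc G}(r,v)=0$ for all $v\in V'$. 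Thus every \TOB of $\mc G$ rooted in $r$ is a $\textsc{d}$-\TOB, and a maximum \TOB is a maximum $\textsc{d}$-\TOB, found in $O(m)$.

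\emph{Case $\tau\in\{2,3\}$ with $el(a)\ge1$ for all $a$.} First I would record a structural fact: since $el(a)\ge 1$, two consecutive arcs $a_i,a_{i+1}$ of a temporal walk satisfy $t_s(a_{i+1})\ge t_a(a_i)\ge t_s(a_i)+1$; iterating, a walk of length $k$ has $t_a(a_k)\ge t_s(a_1)+k$, so $k\le\tau-1$. Hence walks out of $r$ have length $\le 1$ if $\tau=2$ and length $\le 2$ if $\tau=3$, and in the latter case a length-$2$ walk must have arc labels $(1,2)$ then $(2,3)$ (the only shape with $el\ge1$ and all times $\le 3$). In every case a walk out of $r$ has waiting time $0$ (length $\le 1$: by definition; the single length-$2$ shape: $2-2=0$), so $\MW\sub{\mc G}(r,v)=0$ for all $v\in V'$, and exactly as before a maximum \TOB is a maximum $\MW$-\TOB, computable in $O(m)$. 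This also settles $\textsc{d}=\FT$ when $\tau=2$: there $V'\setminus\{r\}=N^+(r)$ and $\FT\sub{\mc G}(r,v)=1$ for each such $v$, realised by any arc $(r,v,1,2)$, so one such arc per out-neighbour gives a spanning $\FT$-\TOB of $\mc G[V']$, which is maximum.

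There remains $\textsc{d}=\FT$ with $\tau=3$ and $el\ge1$. I would compute all $\FT\sub{\mc G}(r,v)\in\{0,1,2\}$ (polynomial by Table~\ref{table:shortest_path}; in fact these values, and $V'$, can be read off directly from $A$ in $O(m)$ via the structural fact), and then assign a parent arc $\pi(v)$ to each $v\in V'\setminus\{r\}$ as follows. If $\FT\sub{\mc G}(r,v)=1$, a minimum-duration walk to $v$ is a single arc of elapsed time $1$, so $(r,v,1,2)\in A$ or $(r,v,2,3)\in A$; let $\pi(v)$ be the former if it exists, the latter otherwise. If $\FT\sub{\mc G}(r,v)=2$: when $(r,v,1,3)\in A$ set $\pi(v)=(r,v,1,3)$; otherwise $v$ has no arc from $r$, so (structural fact) $v$ is reached only via a walk $(r,u,1,2),(u,v,2,3)$, and we set $\pi(v)=(u,v,2,3)$ for such a $u$, observing that then $(r,u,1,2)\in A$, hence $\pi(u)=(r,u,1,2)$ by the rule above. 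Let $\mc B$ have vertex set $V'$ and arc set $\{\pi(v):v\in V'\setminus\{r\}\}$. Then $r$ has in-degree $0$, every other vertex has in-degree $1$, and unwinding the definition of $\pi$ shows $\mc B$ contains, for each $v$, a temporal $(r,v)$-walk of duration $\FT\sub{\mc G}(r,v)$; by Lemma~\ref{lem:tob} $\mc B$ is a \TOB, and since always $\FT\sub{\mc B}\ge\FT\sub{\mc G}$ it is an $\FT$-\TOB. It spans $V'$, hence is maximum. Overall the procedure runs in $O(m\log n)$, dominated by the distance computation (or $O(m)$ if that is bypassed).

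The one step I expect to require care is exactly the consistency of the parent assignment in the $\tau=3$, $\FT$ case: one must verify that whenever $\pi(v)=(u,v,2,3)$ the arc chosen for $u$ is indeed $(r,u,1,2)$ — arriving at time $2$, so that the concatenation is a genuine temporal walk — rather than an arc $(r,u,2,3)$ arriving at time $3$. This is precisely where the hypothesis $el(a)\ge1$ does its work, since it is what rules out all other walk shapes and forces the picture; I would write this verification out in full, while treating the remaining bookkeeping (in-degrees, the application of Lemma~\ref{lem:tob}, and $\FT\sub{\mc B}\ge\FT\sub{\mc G}$) as routine.
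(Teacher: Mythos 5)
Your proof is correct and follows essentially the same route as the paper: under the lifetime and elapsed-time restrictions every walk out of $r$ has length at most $\tau-1$ with forced label shapes, and a maximum branching is then built by explicitly assigning each reachable vertex one realizing arc, which is the paper's $A_1,\dots,A_4$ construction reorganized by distance value. The only (harmless) divergence is that you dispatch $\tau=1$ and the \MW{} cases by noting the distance is identically zero on the reachable set, so any maximum \TOB{} (e.g.\ via \cite{huang2015}) already serves, whereas the paper reuses its explicit construction there as well.
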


\begin{proof}
If $\tau=1$, the temporal graph reduces to a static graph, so the problem is solvable in polynomial time by Dijkstra's algorithm. Suppose now that $\tau=2$ and $el(a)\geq 1$ for all $a\in A$. Then a maximum \textsc{d}-\TOB rooted in $r$ contains exactly $r$ and every $u\in V$ such that $(r,u,1,2)$ is an arc in ${\mc G}$. Finally, let $\tau=3$ and $el(a)\geq 1$ for all $a\in A$. Note that each arc has a temporal label belonging to the set $\{(1,2),(2,3),(1,3) \}$. This implies that a temporal walk has length of at most $2$ and that every temporal walk realizing $\textsc{d}$ is also $\textsc{d}$-prefix-optimal, because it is either made of just one arc, or it is made by two arcs consecutively labeled by $(1,2)$ and $(2,3)$. Intuitively, to build a maximum \textsc{d}-\TOB, we first add all the vertices reachable from the root by an arc labeled by $(1,2)$, and secondly we add all the vertices (different from the previous ones) that are reached from the root by an arc labeled by $(1,3)$. Finally, we add all the vertices (not yet added) that are reachable from $r$ by a temporal path of length $2$.
More formally, we set $A_1=\{(r,v,1,2)\in A: v\in V\}$,
$V_1=\{v\in V: (r,v,1,2)\in A_1 \}$, $A_2=\{(r,v,2,3)\in A: v\in V\setminus V_1\}$, $V_2=\{v\in V: (r,v,2,3)\in A_2 \}$, $A_3=\{(r,v,1,3)\in A : v\in V\setminus (V_1\cup V_2)\}$,
$V_3=\{v\in V: (r,v,1,3)\in A_3\}$, $A_4=\{(u,v,2,3)\in A: u\in V_1, v\in V\setminus (V_1\cup V_2\cup V_3)\}$, $V_4=\{v\in V: u\in V_1,\,(u,v,2,3)\in A_4, \}$, $V\sub{\mc B}=\{r\}\cup V_1\cup V_2\cup V_3\cup V_4$ and $A\sub{\mc B}=A_1\cup A_2\cup A_3\cup A_4$.
Then a maximum \textsc{d}-\TOB for $\mc G$ rooted in $r$ is the subgraph $\mc B=(V\sub{\mc B},A\sub{\mc B},3 )$.
\end{proof}

\section{Reaching vertices with no prefix-optimal paths: minimum temporal spanning subgraphs}\label{sec:TSS}
We have seen that, for $\textsc{d}\in \{\LD,\MT,\ST \}$, when a vertex $v$ in a temporal graph does not have a $\textsc{d}$-prefix-optimal path from the root, then no $\textsc{d}$-\TOB reaching $v$ is possible. We can tackle the problem from another point of view, where we want to reach anyway all the vertices of the graph from the root optimizing some distance, while still using the least amount of connections possible. 

\begin{definition}\label{def:dsub}
Let $\mc G=(V,A,\tau)$ a temporal graph, $r\in V$ and $\textsc{d}$ a distance. Then $\mc G'=(V,A',\tau')$ is a \emph{$\textsc{d}$-Temporal Out-Spanning Subgraph {\rm(}$\textsc{d}$-\TSS{\rm)} with root $r$} of $\mc G$ if for all $ v\in V$, there exists a temporal $(r,v)$-walk in $\mc  G'$ realizing $ \textsc{d}\sub {\mc G}(r,v)$; if in addition $|A'|$ is the smallest possible, then $\mc G'$ is a \emph{minimum} $\textsc{d}$-\TSS.
\end{definition}
\noindent A spanning $\textsc{d}$-\TOB with root $r$ is always a minimum $\textsc{d}$-\TSS with root $r$. On the other hand, notice that the only $\textsc{d}$-\TSS with root $r$ of the temporal graphs in Figure \ref{fig:tob_notP}, for the corresponding distances, are the graph themselves. 
Clearly, for $\textsc{d}\in \{\LD,\MT,\ST \}$, if every vertex of the temporal graph has a $\textsc{d}$-prefix-optimal path from the root, then a minimum $\textsc{d}$-\TSS with root $r$ is a spanning $\textsc{d}$-\TOB with root $r$. 
We now focus on finding a minimum $\textsc{d}$-\TSS of a given temporal graph, for a chosen root and distance $\textsc{d}$. In particular, we study the computational complexity of the following problem:
\begin{problem}\label{pb:TOSS}
Given $\mc G=(V,A,\tau)$ a temporal graph, $r\in V$, $k\in \mathbb N$ and $\textsc{d}\in \{\EA,\FT, \LD,\MT,\MW,\ST \}$, decide whether $\mc G$ has a $\textsc{d}$-\TSS with root $r$ and with at most $k$ temporal arcs.  
\end{problem}
\noindent Notice that for $\textsc{d}=\EA$, the concepts of spanning $\EA$-\TOB and minimum $\EA$-\TSS coincide\footnote{Since a temporal graph $\mc G$ has a spanning $\EA$-\TOB with a given root as subgraph if and only if each vertex is reachable from the root in $\mc G$ \cite{huang2015}.}. This implies that Problem \ref{pb:TOSS} for $\textsc{d}=\EA$ is solvable in polynomial time: if some vertex is not temporally reachable from the root the answer is simply NO; if every vertex is temporally reachable from the root, then the answer is YES if and only if $k\geq |V|-1$ \cite{huang2015}. On the other hand, for $\textsc{d}\in \{\FT,\MW \}$, Problem \ref{pb:TOSS} becomes \NP-hard as it suffices to set $k=|V|-1$ and apply Theorem \ref{thm:FT_tob_hard}. 
For all the other distances, the problem turns out to be a difficult task also in very constrained situations.

\begin{theorem}\label{thm:TSS_hard}
Let $\mc G=(V,A,\tau)$ be a temporal graph, $r\in V$, $k\in \mathbb N$, and suppose that there exists $a\in A$ such that $ el(a)\neq 0$. Then deciding whether $\mc G$ has an $\ST$-\TSS with root $r$ and with at most $k$ temporal arcs
is \emph{\NP}-complete even when $\tau=3$.
\end{theorem}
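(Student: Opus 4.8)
The plan is to establish membership in \NP\ and then give a polynomial reduction from \textsc{Set Cover}, which is \NP-complete. Membership is easy: given a temporal subgraph $\mc G'=(V,A',\tau')$ of $\mc G$, one computes $\ST_{\mc G}(r,v)$ and $\ST_{\mc G'}(r,v)$ for all $v$ in polynomial time (Table~\ref{table:shortest_path}); by Definition~\ref{def:dsub}, $\mc G'$ is an $\ST$-\TSS with root $r$ iff these agree for every $v$, and comparing $|A'|$ with $k$ is immediate.

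For hardness, starting from an instance of \textsc{Set Cover} with universe $U=\{e_1,\dots,e_m\}$, family $\mathcal S=\{S_1,\dots,S_n\}$ and parameter $k$, I would build $\mc G=(V,A,3)$ on $V=\{r\}\cup\{v_i,x_i:i\in[n]\}\cup\{y_j:j\in[m]\}$ with temporal arcs: $(r,v_i,1,1)$ and $(r,x_i,3,3)$ for every $i\in[n]$; $(v_i,x_i,1,2)$ for every $i\in[n]$; and $(x_i,y_j,2,2)$ for every $i,j$ with $e_j\in S_i$. Note $\tau=3$ and the arcs $(v_i,x_i,1,2)$ have $el=1\neq0$, so the hypotheses of the theorem are satisfied. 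I would then prove that $\mc G$ admits an $\ST$-\TSS with root $r$ and at most $2n+m+k$ arcs if and only if $U$ has a set cover of size at most $k$.

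The heart of the argument is the computation of the distances in $\mc G$: $\ST_{\mc G}(r,v_i)=0$, realized only by $(r,v_i,1,1)$ (the unique arc into $v_i$); $\ST_{\mc G}(r,x_i)=0$, realized only by $(r,x_i,3,3)$, because the competing walk $r\to v_i\to x_i$ has travelling time $1$; and $\ST_{\mc G}(r,y_j)=1$, realized exactly by the walks $r\to v_i\to x_i\to y_j$ with $e_j\in S_i$ — indeed any walk into $y_j$ ends with some $(x_i,y_j,2,2)$, which can only be taken after reaching $x_i$ by time $2$, i.e.\ via $(v_i,x_i,1,2)$, for a total travelling time $0+1+0=1$. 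From this, any $\ST$-\TSS $\mc G'=(V,A',\tau')$ must contain the $2n$ arcs $(r,v_i,1,1)$ and $(r,x_i,3,3)$ (the unique optimal walks to the $v_i$'s and the $x_i$'s), at least one arc $(x_i,y_j,2,2)$ for each $j$ (so that $y_j$ is reached at all), and, for each kept arc $(x_i,y_j,2,2)$ lying on its optimal walk to $y_j$, also the arc $(v_i,x_i,1,2)$; writing $I=\{i:(v_i,x_i,1,2)\in A'\}$, we get $|A'|\ge 2n+m+|I|$ and $\{S_i:i\in I\}$ covers $U$. Conversely, from a cover $\{S_i:i\in I\}$ with $|I|\le k$, the subgraph whose arcs are the $2n$ mandatory ones, the arcs $(v_i,x_i,1,2)$ for $i\in I$, and one arc $(x_{i(j)},y_j,2,2)$ per $j$ with $i(j)\in I$ and $e_j\in S_{i(j)}$, is an $\ST$-\TSS with $2n+m+|I|\le 2n+m+k$ arcs.

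The step I expect to be the main obstacle is the lower-bound direction — making fully rigorous that no $\ST$-\TSS beats $2n+m+|I|$. Concretely, one must argue that the arcs $(r,x_i,3,3)$ cannot be omitted even for indices $i$ whose gadget is already ``used'' (omitting one would leave in $\mc G'$ only a travelling-time-$1$ walk into $x_i$, contradicting $\ST_{\mc G}(r,x_i)=0$), and that routing $y_j$ through $x_i$ genuinely forces $(v_i,x_i,1,2)$ into $\mc G'$. This boils down to verifying that the distance computations above overlook no cheaper temporal walk of $\mc G$; since $\tau=3$, all walks are short and this is a finite, routine case check. The analogous \NP-hardness statements for the other distances recorded in Table~\ref{tab:results} would be proved by adapting this gadget, which I would treat separately.
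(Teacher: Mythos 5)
Your proof is correct, but it takes a genuinely different route from the paper's. The paper establishes membership in \NP\ exactly as you do, and proves hardness by a reduction from 3-SAT: each variable $x_i$ yields literal vertices $x^p_i,x^n_i$ and a vertex $y_i$, with arcs $(r,x^{\alpha}_i,1,2)$, $(r,x^{\alpha}_i,3,3)$, $(x^{\alpha}_i,y_i,2,2)$, clause vertices reached by arcs $(x^{\alpha}_i,c_j,2,2)$, and budget $4l+m$; the budget forces exactly one of the two arcs $(r,x^{\alpha}_i,1,2)$ per variable, encoding a truth assignment that the clause vertices then check. You instead reduce from \textsc{Set Cover}: each set $S_i$ gets a gadget $v_i,x_i$ in which keeping the unique nonzero-elapsed-time arc $(v_i,x_i,1,2)$ ``activates'' the set, each element vertex $y_j$ can only be reached within travelling time $1$ through an activated gadget, and the budget $2n+m+k$ caps the number of activations at $k$. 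Your distance computations are right ($\ST(r,v_i)=\ST(r,x_i)=0$ with unique realizing walks, $\ST(r,y_j)=1$ realized only by $r\to v_i\to x_i\to y_j$ with $e_j\in S_i$), and the counting argument you sketch for the lower bound (the $2n$ arcs $(r,v_i,1,1)$, $(r,x_i,3,3)$ are forced, at least one arc enters each $y_j$, and any realizing walk to $y_j$ forces the corresponding activation arc, so $|A'|\geq 2n+m+|I|$ with $\{S_i: i\in I\}$ a cover) is sound; the step you flagged as the main obstacle is indeed just the short case check you describe, since every temporal walk in your graph has length at most three. Compared with the paper, your construction dispenses with the paired positive/negative literal vertices and the per-variable forcing vertex $y_i$, and the budget has explicit slack $k$ rather than being tied to a tight per-variable choice, which makes the counting a bit more transparent; the paper's 3-SAT gadget, on the other hand, is reused almost verbatim for Theorems \ref{thm:LD_min_hard} and \ref{thm:MT_min_hard} and for the simple-digraph modifications of Remark \ref{rem:digraph}, so it buys uniformity across the other hardness results. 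Two minor points you should make explicit: assume (without loss of generality) that every element of the universe belongs to some set, so each $y_j$ is temporally reachable; and your \NP-membership test via comparing $\ST\sub{\mc G}$ and $\ST\sub{\mc G'}$ carries the same implicit reachability convention as the paper's own argument, so it is not a gap.
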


\begin{proof}
Let $\mc G'=(V,A',\tau')$ be a temporal subgraph of $\mc G$. Computing $\ST\sub{\mc G}(r,v)$ and $\ST\sub{{\mc G'}}(r,v)$ for every vertex $v$ can be done in polynomial time (Table \ref{table:shortest_path}), as well as checking if $|A'|\leq k$, so the problem is in \NP.

\noindent To prove hardness we make a reduction from 3-SAT. 
Consider a formula $\phi$ in CNF form on variables $X = \{x_1,\ldots,x_l\}$ and clauses $C = \{c_1,\ldots,c_m\}$. We construct a temporal graph $\mc G = (V,A,3)$ in the following way. Let $V = \{x^p_1,\ldots,x^p_l\} \cup \{x^n_1,\ldots,x^n_l\} \cup \{y_1,\ldots,y_l\} \cup C\cup \{r\}$. Notice that $|V|=3l+m+1$.
For each $\alpha\in \{p,n\}$ and $i\in [l]$, add to $A$ the temporal arcs $(r,x^{\alpha}_i,1,2)$, $(r,x^{\alpha}_i,3,3)$ and $(x^{\alpha}_i,y_i,2,2)$. 
Then, for each clause $c_j$ and each variable $x_i$ appearing in $c_j$, add the temporal arc $(x^p_i,c_j,2,2)$ if $x_i$ appears in $c_j$ positively, while add the temporal arc $(x^n_i,c_j,2,2)$ if $x_i$ appears in $c_j$ negatively (see Figure \ref{fig:ST_dsubgraph}). Observe that $\ST\sub{ \mc G}(r,x^{\alpha}_i)=0$ for all $\alpha\in \{p,n\}$, $i\in [l]$, $\ST\sub{ \mc G}(r,y_i)=1$ for all $i\in [l]$ and $\ST\sub{ \mc G}(r,c_j)=1$ for all $j\in [m]$.
We now prove that $\phi$ is satisfiable if and only if there exists an $\ST$-\TSS $\mc G'=(V,A',3)$ of $\mc G$ with root $r$ such that $|A'|\leq 4l+m $.

\noindent Suppose first that $\phi$ has a satisfying assignment; we show how to construct $\mc G'$. 
For each variable $x_i$, if $x_i$ is true then add to $A'$ the temporal arcs $(r,x^p_i,1,2)$,$(r,x^p_i,3,3)$,$(r,x^n_i,3,3)$,$(x^p_i,y_i,2,2)$, while if $x_i$ is false then add to $A'$ the temporal arcs $(r,x^n_i,1,2)$,$(r,x^n_i,3,3)$,$(r,x^p_i,3,3)$,$(x^n_i,y_i,2,2)$. 
Now consider a clause $c_j$ and choose one of the variables that validates $c_j$, say $x_{i_j}$. Add to $A'$ the unique temporal arc with head $c_j$ and tail $x^{\alpha_{i_j}}_{i_j}$, $\alpha_{i_j}\in \{p,n \}$. It holds that $|A'|= 4l+m $.
Now observe that for all $v\in V$ there exists a temporal $(r,v)$-path in $\mc G'$ realizing $\ST\sub{\mc G}(r,v)$: indeed, the vertices $x^{\alpha}_i$ are directly connected to $r$ by the arcs with time labels $(3,3)$ realizing their shortest times. Vertices $y_i$ are connected to $r$ by the path $(r,x^p_i,1,2)$, $(x^p_i,y_i,2,2)$ if $x_i$ is true, or by the path $(r,x^n_i,1,2)$, $(x^n_i,y_i,2,2)$ if $x_i$ is false, which realizes $\ST\sub{\mc G}(r,y_i)$. 
For a clause $c_j$, if $x_{i_j}$ appears positively in $c_j$, then $x_{i_j}$ is true, and so $(r,x^p_{i_j},1,2)$, $(x^p_{i_j},c_j,2,2)$ is a temporal path in $\mc G'$ reaching $c_j$ and realizing $\ST\sub{\mc G}(r,c_j)$; if $x_{i_j}$ appears negatively in $c_j$, then $x_{i_j}$ is false, and so $(r,x^n_{i_j},1,2)$, $(x^n_{i_j},c_j,2,2)$ is a temporal path in $\mc G'$ reaching $c_j$ and realizing $\ST\sub{\mc G}(r,c_j)$.

\noindent Suppose now that ${\mc G'} = (V,A',3)$ is an $\ST$-\TSS of $\mc G$ with $|A'|\leq 4l+m $. Since $\mc G'$ is spanning, each vertex different from the root must have at least one incoming arc, so $|A'|\geq 3l+m=|V|-1 $. In particular, for all $i\in [l]$, $\alpha\in \{p,n \}$, the arc $(r,x^\alpha_{i},3,3) $ must belong to $ A'$ as it is the only one realizing $\ST(r,x^\alpha_{i})$. Then notice that the only way a vertex $y_i$ can be temporally reachable from $r$ is that a least one of the arcs $(r,x^p_{i},1,2)$ and $(r,x^n_{i},1,2)$ belongs to $A'$.
Since $|A'|\leq 4l+m $, it must hold that for every $i\in [l]$, exactly one of the arcs between $(r,x^p_{i},1,2)$ and $(r,x^n_{i},1,2)$ belongs to $A'$.
If the former case occurs, then set $x_i$ to true, while if the latter case occurs, then set $x_i$ to false. We now argue that this must be a satisfying assignment to $\phi$. For this, consider a clause $c_j$. Each clause $c_j$ must be reached exactly by one arc in $\mc G'$, as otherwise there would not be enough arcs to connect every vertex; let $(x^{\alpha_{i_j}}_{i_j},c_j,2,2)$ be this arc. This also implies that $(r,x^{\alpha_{i_j}}_{i_j},1,2)\in A'$, as otherwise $c_j$ would not be temporally reachable from $r$. 
If $\alpha_{i_j}=p$, then $x_{i_j} $ is set to true and,
by construction, $x_{i_j}$ appears positively in $c_j$; so $x_{i_j}$ satisfies $c_j$.
If $\alpha_{i_j}=n$, then $x_{i_j} $ is set to false and,
by construction, $x_{i_j}$ appears negatively in $c_j$; so $x_{i_j}$ satisfies $c_j$.
\end{proof}
\noindent The gaps left by the above theorem are when ${\mc G}$ has lifetime $\tau\in \{1,2\}$ or when $el(a)=0$ for all arcs $a$ of $\mc G$. When $\tau=1$ the temporal graph reduces to a static graph and the problem reduces to finding an out-branching of $\mc G$ and when $\tau=2$, by Lemma \ref{claim:STtau} the problem reduces to find a spanning \ST-\TOB , which is solvable in polynomial time. Finally, if $el(a)=0$ for all $a\in A$, then $\ST(r,v)=0$ for all $v\in V$; hence the problem reduces to finding a \TOB, which is then solvable in polynomial time (Table \ref{tab:results}).

\begin{figure}
    \centering
  \begin{tikzpicture}[scale=0.3]
\SetVertexStyle[FillColor=white]
\SetEdgeStyle[Color=black,LineWidth=0.5pt]
  \Vertex[y=3,x=0,label=r,color=white!70!black]{1}
  \Vertex[y=-2,x=-13,label=$x_1^p$]{2}
  \Vertex[y=-2,x=-9,label=$x_1^n$]{3}
\Vertex[y=-4.5,x=-2,label=$x_2^p$]{5}
  \Vertex[y=-4.5,x=2,label=$x_2^n$]{6}
  \Vertex[y=-2,x=9,label=$x_3^p$]{8}
  \Vertex[y=-2,x=13,label=$x_3^n$]{9}
  
  \Vertex[y=-6,x=-11,label=$y_1$]{4}
\Vertex[y=-8.5,x=0,label=$y_2$]{7}
\Vertex[y=-6,x=11,label=$y_3$]{10}

  \Vertex[y=-12,x=0,label=$c$]{11}
  
  \Edge[Direct,label={(1,2)},bend=-5,fontsize=\tiny](1)(2)
\Edge[Direct,label={(1,2)},fontsize=\tiny](1)(3)
\Edge[Direct,label={(1,2)},bend=-25,fontsize=\tiny](1)(5)
\Edge[Direct,label={(1,2)},fontsize=\tiny](1)(6)
\Edge[Direct,label={(1,2)},bend=10,fontsize=\tiny](1)(8)
\Edge[Direct,label={(1,2)},bend=10,fontsize=\tiny](1)(9)

 \Edge[Direct,label={(3,3)},bend=-25,fontsize=\tiny,distance=0.8](1)(2)
 \Edge[Direct,label={(3,3)},bend=20,fontsize=\tiny,distance=0.8](1)(3)
 \Edge[Direct,label={(3,3)},bend=10,fontsize=\tiny,distance=0.7](1)(5)
  \Edge[Direct,label={(3,3)},bend=35,fontsize=\tiny,distance=0.7](1)(6)
   \Edge[Direct,label={(3,3)},bend=-10,fontsize=\tiny,distance=0.8](1)(8)
    \Edge[Direct,label={(3,3)},bend=30,fontsize=\tiny,distance=0.8](1)(9)

  \Edge[Direct,label={(2,2)},fontsize=\tiny](2)(4)
  \Edge[Direct,label={(2,2)},fontsize=\tiny](3)(4)
   \Edge[Direct,label={(2,2)},fontsize=\tiny](5)(7)
  \Edge[Direct,label={(2,2)},fontsize=\tiny](6)(7)
   \Edge[Direct,label={(2,2)},fontsize=\tiny](8)(10)
  \Edge[Direct,label={(2,2)},fontsize=\tiny](9)(10)

 \Edge[Direct,label={(2,2)},bend=-50,fontsize=\tiny](2)(11)
  \Edge[Direct,label={(2,2)},bend=30,fontsize=\tiny](6)(11)
   \Edge[Direct,label={(2,2)},bend=50,fontsize=\tiny](9)(11)
 
\end{tikzpicture}
    \caption{Example of the construction in the proof of Theorem \ref{thm:TSS_hard}. Clause $c$ is equal to $(x_1\vee \neg x_2\vee\neg x_3)$.}
    \label{fig:ST_dsubgraph}
\end{figure}
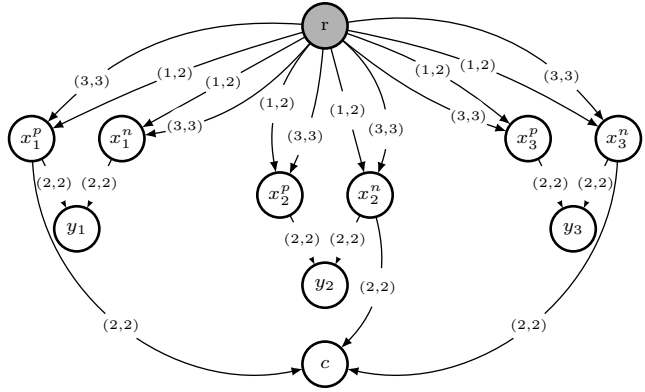

\begin{theorem}\label{thm:LD_min_hard}
Let $\mc G=(V,A,\tau)$ be a temporal graph, $r\in V$ and $k\in \mathbb N$. Then deciding whether $\mc G$ has an $\LD$-\TSS with root $r$ and with at most $k$ temporal arcs
is \emph{\NP}-complete even when $\tau=2$ and $el(a)= 0$ for every $a\in A$, or when $\tau=3$ and $el(a)= 1$ for every $a\in A$.
\end{theorem}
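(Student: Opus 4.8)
The plan is to reduce from 3-SAT, following the blueprint of the proof of Theorem~\ref{thm:TSS_hard} but retuning the time labels so that the latest-departure objective, rather than the travelling time, is what enforces the Boolean choice. Membership in \NP\ is immediate: by Table~\ref{table:shortest_path} the values $\LD\sub{\mc G}(r,v)$ and $\LD\sub{\mc G'}(r,v)$ can be computed in polynomial time for any candidate temporal subgraph $\mc G'$, so one can check in polynomial time whether $\mc G'$ is an $\LD$-\TSS with root $r$ having at most $k$ temporal arcs.

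For the case $\tau=2$, $el(a)=0$, given a CNF formula $\phi$ on variables $X=\{x_1,\dots,x_l\}$ and clauses $C=\{c_1,\dots,c_m\}$, I would build $\mc G=(V,A,2)$ with $V=\{r\}\cup\{x_i^p,x_i^n:i\in[l]\}\cup\{y_i:i\in[l]\}\cup C$. For every $i\in[l]$ and $\alpha\in\{p,n\}$ add the arcs $(r,x_i^\alpha,1,1)$, $(r,x_i^\alpha,2,2)$ and $(x_i^\alpha,y_i,1,1)$; for every clause $c_j$ and every variable $x_i$ occurring in $c_j$ add the arc $(x_i^p,c_j,1,1)$ if $x_i$ occurs positively in $c_j$ and $(x_i^n,c_j,1,1)$ if it occurs negatively. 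Set $k=4l+m$. The load-bearing observations are: $\LD\sub{\mc G}(r,x_i^\alpha)=2$, realized \emph{only} by the arc $(r,x_i^\alpha,2,2)$, so this arc must lie in every $\LD$-\TSS; and $\LD\sub{\mc G}(r,y_i)=\LD\sub{\mc G}(r,c_j)=1$, realized only by length-$2$ walks $(r,x_i^\alpha,1,1),(x_i^\alpha,w,1,1)$ with $w\in\{y_i\}\cup C$ (there is no realizing walk of any other length, since the only in-arcs of $y_i$ and of $c_j$ come from the $x_i^\alpha$'s, whose only in-arcs come from $r$).

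I would then prove the two directions. Given a satisfying assignment, let $A'$ consist of the $2l$ arcs $(r,x_i^\alpha,2,2)$; the $l$ arcs $(r,x_i^{\alpha_i},1,1)$ where $\alpha_i=p$ if $x_i$ is true and $\alpha_i=n$ otherwise; the $l$ arcs $(x_i^{\alpha_i},y_i,1,1)$; and, for each clause $c_j$, one arc $(x_i^{\beta},c_j,1,1)$ with $x_i$ a literal satisfying $c_j$ (so $\beta=\alpha_i$). Then $|A'|=4l+m$ and each vertex is reached by a walk realizing its $\LD$-value, so $\mc G'=(V,A',2)$ works. Conversely, let $\mc G'=(V,A',\tau')$ be an $\LD$-\TSS with $|A'|\le 4l+m$. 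The $2l$ arcs $(r,x_i^\alpha,2,2)$ are forced; for each $i$, reaching $y_i$ with departure time~$1$ forces some arc $(r,x_i^\alpha,1,1)$ and some arc $(x_i^\alpha,y_i,1,1)$ into $A'$; for each $j$, reaching $c_j$ with departure time~$1$ forces some arc $(x_i^\beta,c_j,1,1)$ and the matching arc $(r,x_i^\beta,1,1)$ into $A'$. The arcs singled out have pairwise distinct head vertices within each of the four groups, so they number at least $4l+m$; together with $|A'|\le 4l+m$ this makes the budget tight, whence for every variable exactly one of $(r,x_i^p,1,1),(r,x_i^n,1,1)$ belongs to $A'$. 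This defines a Boolean assignment, and the matching requirement on the clause arcs forces it to satisfy every clause. Finally, the case $\tau=3$, $el(a)=1$ is obtained from the same construction by the relabelling $(r,x_i^\alpha,1,1)\mapsto(r,x_i^\alpha,1,2)$, $(r,x_i^\alpha,2,2)\mapsto(r,x_i^\alpha,2,3)$ and $(x_i^\alpha,w,1,1)\mapsto(x_i^\alpha,w,2,3)$ for $w\in\{y_i\}\cup C$; the arrival-time bookkeeping giving $\LD\sub{\mc G}(r,x_i^\alpha)=2$ and $\LD\sub{\mc G}(r,y_i)=\LD\sub{\mc G}(r,c_j)=1$ is identical, and so is the rest of the argument.

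The step I expect to be the main obstacle is the tight counting in the converse direction: one must argue that the shared ``choice'' arcs $(r,x_i^\alpha,1,1)$ cannot be duplicated for any variable (at most one of the two per variable), since this is precisely what prevents an $\LD$-\TSS from encoding an inconsistent, clause-by-clause literal selection, and it is exactly where the budget $k=4l+m$ is consumed. A small amount of care is also needed to confirm that $y_i$ and $c_j$ genuinely have no realizing walk other than the intended length-$2$ ones and that the $4l+m$ lower-bound arcs are pairwise distinct.
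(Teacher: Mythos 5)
Your proposal is correct and follows essentially the same route as the paper: the identical 3-SAT gadget with vertices $x_i^p,x_i^n,y_i,c_j$, the arcs $(r,x_i^\alpha,1,1)$, $(r,x_i^\alpha,2,2)$, $(x_i^\alpha,y_i,1,1)$ and the clause arcs, budget $k=4l+m$, the same tight-counting converse (forced $(2,2)$ arcs, one $(1,1)$ choice arc per variable, one arc per $y_i$ and per $c_j$), and the same relabelling $(1,1)\mapsto(1,2)$, $(2,2)\mapsto(2,3)$, $(x_i^\alpha,w,1,1)\mapsto(x_i^\alpha,w,2,3)$ for the strict case. The only difference is cosmetic: you count the forced arcs as four disjoint families, while the paper combines an in-degree lower bound with the forced arcs, which amounts to the same argument.
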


\begin{proof}
Let $\mc G'=(V,A',\tau')$ be a temporal subgraph of $\mc G$. Computing $\LD\sub{\mc G}(r,v)$ and $\LD\sub{{\mc G'}}(r,v)$ for every vertex $v$ can be done in polynomial time (Table \ref{table:shortest_path}), as well as checking if $|A'|\leq k$, so the problem is in \NP.

\noindent To prove hardness we make a reduction from 3-SAT, in a similar way to the proof of Theorem \ref{thm:TSS_hard}. 
Consider a formula $\phi$ in CNF form on variables $X = \{x_1,\ldots,x_l\}$ and on clauses $C = \{c_1,\ldots,c_m\}$. We construct a temporal graph $\mc G = (V,A,2)$ in the following way. Let $V = \{x^p_1,\ldots,x^p_l\} \cup \{x^n_1,\ldots,x^n_l\} \cup \{y_1,\ldots,y_l\} \cup C\cup \{r\}$. Notice that $|V|=3l+m+1$.
For each $\alpha\in \{p,n\}$ and $i\in [l]$, add to $A$ the temporal arcs $(r,x^{\alpha}_i,1,1)$, $(r,x^{\alpha}_i,2,2)$ and $(x^{\alpha}_i,y_i,1,1)$. 
Then, for each clause $c_j$ and each variable $x_i$ appearing in $c_j$, add the temporal arc $(x^p_i,c_j,1,1)$ if $x_i$ appears in $c_j$ positively, while add the temporal arc $(x^n_i,c_j,1,1)$ if $x_i$ appears in $c_j$ negatively (see Figure \ref{fig:reductionLD0}). Observe that $\LD\sub{ \mc G}(r,x^{\alpha}_i)=2$ for all $\alpha\in \{p,n\}$, $i\in [l]$, while  $\LD\sub{ \mc G}(r,v)=1$ for all the other vertices $v$. 
We now prove that $\phi$ is satisfiable if and only if there exists a $\LD$-\TSS of $\mc G$ with root $r$ with at most $4l+m $ arcs. Suppose that $\phi$ has a satisfying assignment; we show how to construct a $\LD$-\TSS  $\mc G'=(V,A',2)$ with root $r$ of $\mc G$ with $|A'|\leq 4l+m $.
For each variable $x_i$, if $x_i$ is true then add to $A'$ the temporal arcs $(r,x^p_i,1,1)$,$(r,x^p_i,2,2)$,$(r,x^n_i,2,2)$,$(x^p_i,y_i,1,1)$, while if $x_i$ is false then add to $A'$ the temporal arcs $(r,x^n_i,1,1)$,$(r,x^n_i,2,2)$,$(r,x^p_i,2,2)$,$(x^n_i,y_i,1,1)$. 
Now consider a clause $c_j$ and choose one of the variables that validates $c_j$, say $x_{i_j}$. Add to $A'$ the unique temporal arc with head $c_j$ and tail $x^{\alpha_{i_j}}_{i_j}$, $\alpha_{i_j}\in \{p,n \}$. It holds that $|A'|= 4l+m $.
Observe that for all $v\in V$ there exists a temporal $(r,v)$-path in $\mc G'$ realizing $\LD\sub{\mc G}(r,v)$. 

\noindent Suppose now that ${\mc G'} = (V,A',2)$ is an $\LD$-\TSS of $\mc G$ with $|A'|\leq 4l+m $. Since $\mc G'$ is spanning, each vertex different from the root must have at least one incoming arc, so $|A'|\geq 3l+m=|V|-1 $. In particular, for all $i\in [l]$, $\alpha\in \{p,n \}$, the arc $(r,x^\alpha_{i},2,2) $ must belong to $ A'$ since it is the only one realizing $\LD(r,x^\alpha_{i})$. Then notice that the only way a vertex $y_i$ can be temporally reachable from $r$ is that at least one of the arcs $(r,x^p_{i},1,1)$ and $(r,x^n_{i},1,1)$ belongs to $A'$.
Since $|A'|\leq 4l+m $, it must hold that for every $i\in [l]$, exactly one of the arcs between $(r,x^p_{i},1,1)$ and $(r,x^n_{i},1,1)$ belongs to $A'$.
If the former case occurs, then set $x_i$ to true, while if the latter case occurs, then set $x_i$ to false. We now argue that this must be a satisfying assignment to $\phi$. For this, consider a clause $c_j$. Each clause $c_j$ must be reached exactly by one arc in $\mc G'$, as otherwise there would not be enough arcs to connect every vertex; let $(x^{\alpha_{i_j}}_{i_j},c_j,1,1)$ be this arc. This also implies that $(r,x^{\alpha_{i_j}}_{i_j},1,1)\in A'$, as otherwise $c_j$ would not be temporally reachable from $r$. 
If $\alpha_{i_j}=p$, then $x_{i_j} $ is set to true and,
by construction, $x_{i_j}$ appears positively in $c_j$; so $x_{i_j}$ satisfies $c_j$.
If $\alpha_{i_j}=n$, then $x_{i_j} $ is set to false and,
by construction, $x_{i_j}$ appears negatively in $c_j$; so $x_{i_j}$ satisfies $c_j$.

In the case where $el(a)=1$ for every arc $a$, the reduction is similar to the previous one. Specifically, for each $i\in [l]$, $\alpha\in \{p,n \}$, we add arcs $(r,x^\alpha_i,1,2)$, $(r,x^\alpha_i,2,3)$, $(x^\alpha_i,y_i,2,3)$. For each clause $c_j$, if $x_i$ appears positively in $c_j$ we add the temporal arc $(x^p_i,c_j,2,3)$, while if $x_i$ appears negatively in $c_j$ we add the temporal arc $(x^n_i,c_j,2,3)$; see Figure \ref{fig:reductionLD1}. Analogous arguments to the previous ones apply.
\end{proof}
\noindent The gaps left by the above theorem are when ${\mc G}$ has lifetime~1 or when ${\mc G}$ has lifetime 2 and all arcs have elapsed time at least~1. In the first case, the temporal graph reduces to a static graph, so the problem is solvable in polynomial time by Dijkstra's algorithm. As for the second case, if all vertices of $\mc G$ are temporally reachable from the root $r$ one can see that a \LD-\TSS rooted in $r$ contains exactly all the arcs of type $(r,v,1,2)$, for every $v\in V$. 

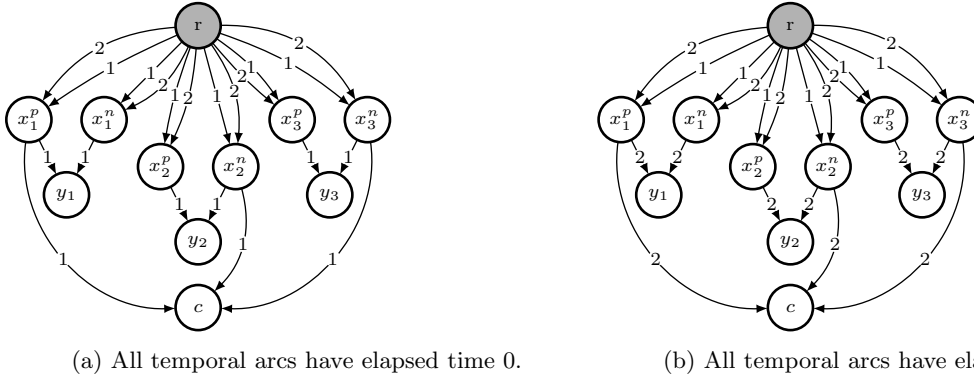
\begin{figure}
    \centering
     \begin{subfigure}[t]{0.47\textwidth}
    \begin{tikzpicture}[scale=0.25]
\SetVertexStyle[FillColor=white]
\SetEdgeStyle[Color=black,LineWidth=0.5pt]
  \Vertex[y=3,x=0,label=r,color=white!70!black]{1}
  \Vertex[y=-2,x=-9,label=$x_1^p$]{2}
  \Vertex[y=-2,x=-5,label=$x_1^n$]{3}
\Vertex[y=-4.5,x=-2,label=$x_2^p$]{5}
  \Vertex[y=-4.5,x=2,label=$x_2^n$]{6}
  \Vertex[y=-2,x=5,label=$x_3^p$]{8}
  \Vertex[y=-2,x=9,label=$x_3^n$]{9}
  
  \Vertex[y=-6,x=-7,label=$y_1$]{4}
\Vertex[y=-8.5,x=0,label=$y_2$]{7}
\Vertex[y=-6,x=7,label=$y_3$]{10}

  \Vertex[y=-12,x=0,label=$c$]{11}
  
  \Edge[Direct,label={1},bend=-5](1)(2)
\Edge[Direct,label={1}](1)(3)
\Edge[Direct,label={1},bend=-5](1)(5)
\Edge[Direct,label={1}](1)(6)
\Edge[Direct,label={1},bend=10](1)(8)
\Edge[Direct,label={1},bend=10](1)(9)

 \Edge[Direct,label={2},bend=-25](1)(2)
 \Edge[Direct,label={2},bend=20](1)(3)
 \Edge[Direct,label={2},bend=10](1)(5)
  \Edge[Direct,label={2},bend=20](1)(6)
   \Edge[Direct,label={2},bend=-5](1)(8)
    \Edge[Direct,label={2},bend=30](1)(9)

  \Edge[Direct,label={1}](2)(4)
  \Edge[Direct,label={1}](3)(4)
   \Edge[Direct,label={1}](5)(7)
  \Edge[Direct,label={1}](6)(7)
   \Edge[Direct,label={1}](8)(10)
  \Edge[Direct,label={1}](9)(10)

 \Edge[Direct,label={1},bend=-50](2)(11)
  \Edge[Direct,label={1},bend=30](6)(11)
   \Edge[Direct,label={1},bend=50](9)(11)
\end{tikzpicture}
\caption{All temporal arcs have elapsed time~0.}\label{fig:reductionLD0}
\end{subfigure}
 \begin{subfigure}[t]{0.47\textwidth}
   \begin{tikzpicture}[scale=0.25]
\SetVertexStyle[FillColor=white]
\SetEdgeStyle[Color=black,LineWidth=0.5pt]
  \Vertex[y=3,x=0,label=r,color=white!70!black]{1}
  \Vertex[y=-2,x=-9,label=$x_1^p$]{2}
  \Vertex[y=-2,x=-5,label=$x_1^n$]{3}
\Vertex[y=-4.5,x=-2,label=$x_2^p$]{5}
  \Vertex[y=-4.5,x=2,label=$x_2^n$]{6}
  \Vertex[y=-2,x=5,label=$x_3^p$]{8}
  \Vertex[y=-2,x=9,label=$x_3^n$]{9}
  
  \Vertex[y=-6,x=-7,label=$y_1$]{4}
\Vertex[y=-8.5,x=0,label=$y_2$]{7}
\Vertex[y=-6,x=7,label=$y_3$]{10}

  \Vertex[y=-12,x=0,label=$c$]{11}
  
  \Edge[Direct,label={1},bend=-5](1)(2)
\Edge[Direct,label={1}](1)(3)
\Edge[Direct,label={1},bend=-5](1)(5)
\Edge[Direct,label={1}](1)(6)
\Edge[Direct,label={1},bend=10](1)(8)
\Edge[Direct,label={1},bend=10](1)(9)

 \Edge[Direct,label={2},bend=-25](1)(2)
 \Edge[Direct,label={2},bend=20](1)(3)
 \Edge[Direct,label={2},bend=10](1)(5)
  \Edge[Direct,label={2},bend=20](1)(6)
   \Edge[Direct,label={2},bend=-5](1)(8)
    \Edge[Direct,label={2},bend=30](1)(9)

  \Edge[Direct,label={2}](2)(4)
  \Edge[Direct,label={2}](3)(4)
   \Edge[Direct,label={2}](5)(7)
  \Edge[Direct,label={2}](6)(7)
   \Edge[Direct,label={2}](8)(10)
  \Edge[Direct,label={2}](9)(10)

 \Edge[Direct,label={2},bend=-50](2)(11)
  \Edge[Direct,label={2},bend=30](6)(11)
   \Edge[Direct,label={2},bend=50](9)(11)
\end{tikzpicture}
\caption{All temporal arcs have elapsed time~1.}\label{fig:reductionLD1}
\end{subfigure}
    \caption{Example of the construction in the proof of Theorem \ref{thm:LD_min_hard}. Clause $c$ is equal to $(x_1\vee \neg x_2\vee\neg x_3)$. The value on top of each arc is the starting time.}
    \label{fig:LD_dsubgraph}
\end{figure}

\begin{theorem}\label{thm:MT_min_hard}
Let $\mc G=(V,A,\tau)$ be a temporal graph, $r\in V$ and $k\in \mathbb N$. Then deciding whether $\mc G$ has an \MT-\TSS with root $r$ and with at most $k$ temporal arcs is \emph{\NP}-complete even if $\tau=2$ and $el(a)= 0$ for every $a\in A$, or if $\tau=4$ and $el(a)= 1$ for every $a\in A$.
\end{theorem}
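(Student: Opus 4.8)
The plan is to show membership in \NP\ exactly as in Theorems~\ref{thm:TSS_hard} and~\ref{thm:LD_min_hard}: for a candidate subgraph $\mc G'=(V,A',\tau')$ one computes $\MT\sub{\mc G}(r,v)$ and $\MT\sub{\mc G'}(r,v)$ for all $v$ in polynomial time (Table~\ref{table:shortest_path}) and checks that $\MT\sub{\mc G'}(r,v)=\MT\sub{\mc G}(r,v)$ for every $v$ and that $|A'|\le k$. Hardness will again follow from a reduction from 3-SAT, but the gadget must be larger than the one used for $\FT$/$\MW$ in Theorem~\ref{thm:FT_tob_hard}. The reason is twofold: \MT\ ignores timestamps, so any direct arc into a vertex realizes distance~$1$ regardless of its label (the ``choose the timestamp of the root arc'' trick used for \LD\ cannot be reused), and in a \TSS\ every literal vertex is necessarily reachable, so a clause cannot be blocked simply by failing to reach one of its literals. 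The gadget must instead make it \emph{cost an extra arc} to ``activate'' a literal, and force each variable to activate exactly one of its two literals.

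Given $\phi$ with variables $x_1,\dots,x_l$ and clauses $c_1,\dots,c_m$, I would build $\mc G=(V,A,2)$ with $V=\{r\}\cup\bigcup_{i=1}^{l}\{x_i^p,x_i^n,m_i^p,m_i^n,y_i\}\cup\{c_1,\dots,c_m\}$, so $|V|=5l+m+1$, and target bound $k=6l+m$. For every $i\in[l]$ and $\alpha\in\{p,n\}$ put in $A$ the arcs $(r,x_i^\alpha,2,2)$, $(r,m_i^\alpha,1,1)$, $(m_i^\alpha,x_i^\alpha,1,1)$ and $(x_i^\alpha,y_i,1,1)$; for every clause $c_j$ and every variable $x_i$ occurring in $c_j$ add $(x_i^p,c_j,1,1)$ if the occurrence is positive and $(x_i^n,c_j,1,1)$ if it is negative. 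All arcs have $el=0$ and $\tau=2$. One then verifies $\MT\sub{\mc G}(r,x_i^\alpha)=\MT\sub{\mc G}(r,m_i^\alpha)=1$ while $\MT\sub{\mc G}(r,y_i)=\MT\sub{\mc G}(r,c_j)=3$: the direct arc $(r,x_i^\alpha,2,2)$ reaches $x_i^\alpha$ only at time~$2$, which is too late to use any arc leaving $x_i^\alpha$ (all starting at time~$1$), so the only temporal walks reaching $y_i$ or $c_j$ have the form $r\to m_i^\alpha\to x_i^\alpha\to(\cdot)$ of length~$3$. For the $\tau=4$, $el(a)=1$ variant, I would use the time‑shifted labels $(r,x_i^\alpha,3,4)$, $(r,m_i^\alpha,1,2)$, $(m_i^\alpha,x_i^\alpha,2,3)$, $(x_i^\alpha,y_i,3,4)$, and clause arcs $(x_i^{p},c_j,3,4)$ or $(x_i^{n},c_j,3,4)$, and the identical analysis applies.

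For the forward direction, from a satisfying assignment I would keep all $2l$ arcs $(r,x_i^\alpha,2,2)$ and all $2l$ arcs $(r,m_i^\alpha,1,1)$; for each $i$ set $\alpha_i=p$ if $x_i$ is true and $\alpha_i=n$ otherwise, and add $(m_i^{\alpha_i},x_i^{\alpha_i},1,1)$ and $(x_i^{\alpha_i},y_i,1,1)$; for each clause $c_j$ pick a literal satisfying it and add the corresponding clause arc (whose tail polarity is exactly $\alpha_i$ because that literal is satisfied). This uses $2l+2l+l+l+m=6l+m$ arcs, and every vertex is reached by a walk realizing its \MT-distance ($y_i$ and $c_j$ by the length-$3$ walk through $m_i^{\alpha_i}$ and $x_i^{\alpha_i}$). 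Conversely, suppose $\mc G'=(V,A',\tau')$ is an \MT-\TSS\ with $|A'|\le 6l+m$. Then all $(r,x_i^\alpha,2,2)$ and all $(r,m_i^\alpha,1,1)$ lie in $A'$, being the only arcs realizing distance~$1$ to their heads ($4l$ arcs). Each $y_i$ needs an incoming arc $(x_i^\alpha,y_i,1,1)$, and realizing $\MT\sub{\mc G}(r,y_i)=3$ forces the length-$2$ prefix $r\to m_i^\alpha\to x_i^\alpha$, hence $(m_i^\alpha,x_i^\alpha,1,1)\in A'$ (``activation'' of $x_i^\alpha$); each $c_j$ needs an incoming clause arc. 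Counting, $4l+(\ge l \text{ arcs into the } y_i)+(\ge l \text{ activations})+(\ge m \text{ clause arcs})\le 6l+m$ forces \emph{exactly one activation per variable}, which defines a consistent truth assignment; and for each $c_j$, realizing $\MT\sub{\mc G}(r,c_j)=3$ through its clause arc $(x_i^\alpha,c_j,1,1)$ again forces $(m_i^\alpha,x_i^\alpha,1,1)\in A'$, so $x_i$ is activated with the polarity of its occurrence in $c_j$, i.e.\ that occurrence satisfies $c_j$.

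The main obstacle — and the reason the gadget is bigger than in Theorem~\ref{thm:FT_tob_hard} — is engineering the budget so that it provably forces exactly one activation per variable even though both literal vertices are always reachable in a \TSS. This is precisely what the combination of the detour vertices $m_i^\alpha$ (which makes activation cost an arc beyond the one needed merely to reach $x_i^\alpha$) and the sink vertices $y_i$ (which forces every variable to be activated at least once) achieves, while the tightness $k=6l+m$ rules out activating both literals of any variable; verifying that these forcing steps are airtight, and that the \MT-distances are as claimed for every vertex in $\mc G$ and in every candidate $\mc G'$, is where the care of the proof will go.
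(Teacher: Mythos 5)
Your proposal is correct and is essentially the paper's own proof: the same 3-SAT gadget (your detour vertices $m_i^{\alpha}$ are the paper's $z_i^{\alpha}$), the same arc labels in both the $\tau=2$, $el=0$ and $\tau=4$, $el=1$ variants, the same budget $k=6l+m$, and the same tight counting argument forcing exactly one activation per variable and one clause arc per clause. No gaps; the forcing steps you flag as needing care are exactly the ones the paper verifies, and your verification of them is sound.
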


\begin{proof}
  Let $\mc G'=(V,A',\tau')$ be a temporal subgraph of $\mc G$. Computing $\MT\sub{\mc G}(r,v)$ and $\MT\sub{{\mc G'}}(r,v)$ for every vertex $v$ can be done in polynomial time (Table \ref{table:shortest_path}), as well as checking if $|A'|\leq k$, so the problem is in \NP.

\noindent To prove hardness we make a reduction from 3-SAT. 
Consider a formula $\phi$ in CNF form on variables $X = \{x_1,\ldots,x_l\}$ and on clauses $C = \{c_1,\ldots,c_m\}$. We construct a temporal graph $\mc G = (V,A,2)$ in the following way. Let $V =\bigcup_{i\in [l]} \{x^p_i, x^n_i, z^p_i, z^n_i,y_i\} \cup C\cup \{r\}$. Notice that $|V|=5l+m+1$.
For each $\alpha\in \{p,n\}$ and $i\in [l]$, add to $A$ the temporal arcs $(r,x^{\alpha}_i,2,2)$, $(r,z^{\alpha}_i,1,1)$, $(z^{\alpha}_i,x^{\alpha}_i,1,1)$ and $(x^{\alpha}_i,y_i,1,1)$. 
Then, for each clause $c_j$ and each variable $x_i$ appearing in $c_j$, add the temporal arc $(x^p_i,c_j,1,1)$ if $x_i$ appears in $c_j$ positively, while add the temporal arc $(x^n_i,c_j,1,1)$ if $x_i$ appears in $c_j$ negatively (see Figure \ref{fig:reductionMT0} as example). Observe that $\MT\sub{ \mc G}(r,x^{\alpha}_i)=1=\MT\sub{ \mc G}(r,z^{\alpha}_i)$ for all $\alpha\in \{p,n\}$, $i\in [l]$, and $\MT\sub{ \mc G}(r,y_i)=3=\MT\sub{ \mc G}(r,c_j)$ for all $i\in [l]$ and $j\in [m]$.
We now prove that $\phi$ is satisfiable if and only if there exists an $\MT$-\TSS of $\mc G$ with at most $6l+m $ temporal arcs.
Suppose first that $\phi$ has a satisfying assignment; we show how to construct an $\MT$-\TSS  $\mc G'=(V,A',2)$ with root $r$ of $\mc G$ with $|A'|\leq 6l+m $.
For each variable $x_i$, if $x_i$ is true then add to $A'$ the temporal arcs 
$$(r,x^p_i,2,2),\ (r,z^p_i,1,1),\ (z^p_i,x^p_i,1,1),\ (x^p_i,y_i,1,1),\ (r,x^n_i,2,2),\ (r,z^n_i,1,1),$$ while if $x_i$ is false then add to $A'$ the temporal arcs $$(r,x^n_i,2,2),\ (r,z^n_i,1,1),\ (z^n_i,x^n_i,1,1),\ (x^n_i,y_i,1,1),\ (r,x^p_i,2,2),\  (r,z^p_i,1,1).$$ 
Now consider a clause $c_j$ and choose one of the variables that validates $c_j$, say $x_{i_j}$. Add to $A'$ the unique temporal arc with head $c_j$ and tail $x^{\alpha_{i_j}}_{i_j}$, $\alpha_{i_j}\in \{p,n \}$. It holds that $|A'|= 6l+m $.
Observe that for all $v\in V$ there exists a temporal $(r,v)$-path in $\mc G'$ realizing $\MT\sub{\mc G}(r,v)$. 

\noindent Suppose now that ${\mc G'} = (V,A',2)$ is an $\MT$-\TSS of $\mc G$ with $|A'|\leq 6l+m $. Since $\mc G'$ is spanning, each vertex different from the root must have at least one incoming arc, so $|A'|\geq 5l+m=|V|-1 $. In particular, for all $i\in [l]$, $\alpha\in \{p,n \}$, the arcs $(r,x^\alpha_{i},2,2) $ and $(r,z^\alpha_{i},1,1) $ must belong to $ A'$ as they are the only ones realizing the distance for such vertices. Then notice that the only way a vertex $y_i$ can be temporally reachable from $r$ is that a least one of the arcs $(z^p_{i},x^p_{i},1,1)$ and $(z^n_{i},x^n_{i},1,1)$ belong to $A'$.
Since $|A'|\leq 6l+m $, it must hold that for every $i\in [l]$, exactly one of the arcs between $(z^p_{i},x^p_{i},1,1)$ and $(z^n_{i},x^n_{i},1,1)$ belongs to $A'$.
If the former case occurs, then set $x_i$ to true, while if the latter case occurs, then set $x_i$ to false. We now argue that this must be a satisfying assignment to $\phi$. For this, consider a clause $c_j$. Each clause $c_j$ must be reached exactly by one arc in $\mc G'$, as otherwise there would not be enough arcs to connect every vertex; let $(x^{\alpha_{i_j}}_{i_j},c_j,1,1)$ be this arc. This also implies that $(z^{\alpha_{i_j}}_{i_j},x^{\alpha_{i_j}}_{i_j},1,1)\in A'$, as otherwise $c_j$ would not be temporally reachable from $r$. 
If $\alpha_{i_j}=p$, then $x_{i_j} $ is set to true and,
by construction, $x_{i_j}$ appears positively in $c_j$; so $x_{i_j}$ satisfies $c_j$.
If $\alpha_{i_j}=n$, then $x_{i_j} $ is set to false and,
by construction, $x_{i_j}$ appears negatively in $c_j$; so $x_{i_j}$ satisfies $c_j$.

In the case where $el(a)=1$ for every arc $a$, the reduction is similar to the previous one. Specifically, for each $i\in [l]$, $\alpha\in \{p,n \}$, the temporal graph is made of the arcs $(r,x^{\alpha}_i,3,4)$, $(r,z^{\alpha}_i,1,2)$, $(z^{\alpha}_i,x^{\alpha}_i,2,3)$ and $(x^{\alpha}_i,y_i,3,4)$. Then for each clause $c_j$, if $x_i$ appears positively in $c_j$ we add the temporal arc $(x^p_i,c_j,3,4)$, while if $x_i$ appears negatively in $c_j$ we add the temporal arc $(x^n_i,c_j,3,4)$. See Figure \ref{fig:reductionMT1} as example. Analogous arguments to the previous ones apply.
\end{proof}
\begin{figure}
    \centering
 \begin{subfigure}[t]{0.45\textwidth}
   \begin{tikzpicture}[scale=0.28]
\SetVertexStyle[FillColor=white]
\SetEdgeStyle[Color=black,LineWidth=0.5pt]
  \Vertex[y=3,x=0.5,label=r,color=white!70!black]{1}
  

 \Vertex[y=-4,x=-8,label=$x_1^p$]{2}
  \Vertex[y=-4,x=-4.5,label=$x_1^n$]{3}
\Vertex[y=-5,x=-1,label=$x_2^p$]{5}
  \Vertex[y=-5,x=2,label=$x_2^n$]{6}
  \Vertex[y=-4,x=6,label=$x_3^p$]{8}
  \Vertex[y=-4,x=9,label=$x_3^n$]{9}

  \Vertex[y=-8,x=-6,label=$y_1$]{4}
\Vertex[y=-8.5,x=0.5,label=$y_2$]{7}
\Vertex[y=-8,x=6,label=$y_3$]{10}

  \Vertex[y=-12,x=0,label=$c$]{11}

\Vertex[y=0,x=-8,label=$z_1^p$]{12}
  \Vertex[y=0.5,x=-4.5,label=$z_1^n$]{13}
\Vertex[y=-1,x=-1,label=$z_2^p$]{14}
  \Vertex[y=-1,x=2,label=$z_2^n$]{15}
  \Vertex[y=0,x=6,label=$z_3^p$]{16}
  \Vertex[y=0,x=9,label=$z_3^n$]{17}

  \Edge[Direct,label={1},bend=-25](1)(12)
\Edge[Direct,label={1}](1)(13)
\Edge[Direct,label={1}](1)(14)
\Edge[Direct,label={1},bend=5](1)(15)
\Edge[Direct,label={1},bend=10](1)(16)
\Edge[Direct,label={1},bend=30](1)(17)

 \Edge[Direct,label={1},bend=0](12)(2)
\Edge[Direct,label={1},bend=0](13)(3)
\Edge[Direct,label={1},bend=0](14)(5)
\Edge[Direct,label={1}](15)(6)
\Edge[Direct,label={1},bend=0](16)(8)
\Edge[Direct,label={1},bend=0](17)(9)

 \Edge[Direct,label={2},bend=-35,distance=0.7](1)(2)
 \Edge[Direct,label={2},bend=-10](1)(3)
 \Edge[Direct,label={2},bend=15,distance=0.7](1)(5)
  \Edge[Direct,label={2},bend=45,distance=0.7](1)(6)
   \Edge[Direct,label={2},bend=20](1)(8)
    \Edge[Direct,label={2},bend=35,distance=0.8](1)(9)

  \Edge[Direct,label={1}](2)(4)
  \Edge[Direct,label={1}](3)(4)
   \Edge[Direct,label={1}](5)(7)
  \Edge[Direct,label={1}](6)(7)
   \Edge[Direct,label={1}](8)(10)
  \Edge[Direct,label={1}](9)(10)

 \Edge[Direct,label={1},bend=-50](2)(11)
  \Edge[Direct,label={1},bend=30](6)(11)
   \Edge[Direct,label={1},bend=50](9)(11)
 
\end{tikzpicture}
\caption{All temporal arcs have elapsed time~0.}\label{fig:reductionMT0}
\end{subfigure}
$ \quad$
 \begin{subfigure}[t]{0.45\textwidth}
   \begin{tikzpicture}[scale=0.28]
\SetVertexStyle[FillColor=white]
\SetEdgeStyle[Color=black,LineWidth=0.5pt]
  \Vertex[y=3,x=0.5,label=r,color=white!70!black]{1}
  

 \Vertex[y=-4,x=-8,label=$x_1^p$]{2}
  \Vertex[y=-4,x=-4.5,label=$x_1^n$]{3}
\Vertex[y=-5,x=-1,label=$x_2^p$]{5}
  \Vertex[y=-5,x=2,label=$x_2^n$]{6}
  \Vertex[y=-4,x=6,label=$x_3^p$]{8}
  \Vertex[y=-4,x=9,label=$x_3^n$]{9}

  \Vertex[y=-8,x=-6,label=$y_1$]{4}
\Vertex[y=-8.5,x=0.5,label=$y_2$]{7}
\Vertex[y=-8,x=6,label=$y_3$]{10}

  \Vertex[y=-12,x=0,label=$c$]{11}

\Vertex[y=0,x=-8,label=$z_1^p$]{12}
  \Vertex[y=0.5,x=-4.5,label=$z_1^n$]{13}
\Vertex[y=-1,x=-1,label=$z_2^p$]{14}
  \Vertex[y=-1,x=2,label=$z_2^n$]{15}
  \Vertex[y=0,x=6,label=$z_3^p$]{16}
  \Vertex[y=0,x=9,label=$z_3^n$]{17}

  \Edge[Direct,label={1},bend=-25](1)(12)
\Edge[Direct,label={1}](1)(13)
\Edge[Direct,label={1}](1)(14)
\Edge[Direct,label={1},bend=5](1)(15)
\Edge[Direct,label={1},bend=10](1)(16)
\Edge[Direct,label={1},bend=30](1)(17)

 \Edge[Direct,label={2},bend=0](12)(2)
\Edge[Direct,label={2},bend=0](13)(3)
\Edge[Direct,label={2},bend=0](14)(5)
\Edge[Direct,label={2}](15)(6)
\Edge[Direct,label={2},bend=0](16)(8)
\Edge[Direct,label={2},bend=0](17)(9)

 \Edge[Direct,label={3},bend=-35,distance=0.7](1)(2)
 \Edge[Direct,label={3},bend=-10](1)(3)
 \Edge[Direct,label={3},bend=15,distance=0.7](1)(5)
  \Edge[Direct,label={3},bend=45,distance=0.7](1)(6)
   \Edge[Direct,label={3},bend=20](1)(8)
    \Edge[Direct,label={3},bend=35,distance=0.8](1)(9)

  \Edge[Direct,label={3}](2)(4)
  \Edge[Direct,label={3}](3)(4)
   \Edge[Direct,label={3}](5)(7)
  \Edge[Direct,label={3}](6)(7)
   \Edge[Direct,label={3}](8)(10)
  \Edge[Direct,label={3}](9)(10)

 \Edge[Direct,label={3},bend=-50](2)(11)
  \Edge[Direct,label={3},bend=30](6)(11)
   \Edge[Direct,label={3},bend=50](9)(11)
 
\end{tikzpicture}
\caption{All temporal arcs have elapsed time~1.}\label{fig:reductionMT1}
\end{subfigure}
    \caption{Example of the construction in the proof of Theorem \ref{thm:MT_min_hard}. Clause $c$ is equal to $(x_1\vee \neg x_2\vee\neg x_3)$. The value on top of each arc is the starting time.}
    \label{fig:MT_dsubgraph}
\end{figure}
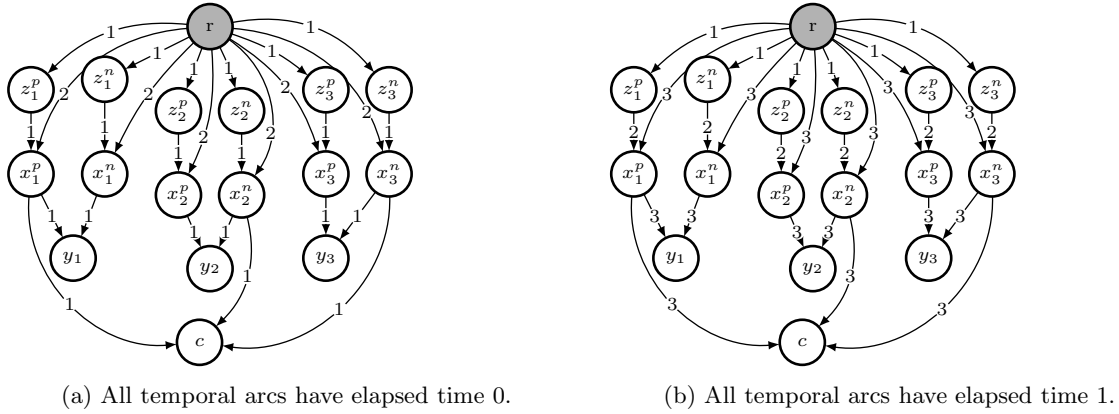
\noindent The gaps left by the above theorem are when $\tau=1$ or when $\tau\in\{2,3\}$ and all arcs have elapsed time at least~1. In the first case, the temporal graph reduces to a static graph, so the problem is solvable in polynomial time by Dijkstra's algorithm. When $\tau=2$ and all arcs have elapsed time at least~1, the minimum \MT-\TSS rooted in $r$ contains exactly all the arcs of type $(r,v,1,2)$, for every $v\in V$, if they exist; otherwise there is no \MT-\TSS of $\mc G$. When $\tau=3$ and all arcs have elapsed time at least~1, if $v$ is temporally reachable from $r$, then either $\MT(r,v)=1$ or $\MT(r,v)=2$. Notice that every vertex such that $\MT(r,v)=2$ has a \MT-prefix-optimal path from the root; hence every vertex temporally reachable from the root has a \MT-prefix-optimal path from $r$. Then, by Theorem \ref{thm:algMTST}, a maximum \MT-\TOB of $\mc G$ is computable in polynomial time. If it is also spanning, then it is in particular a minimum \MT-\TSS; if it is not spanning, then there is no \MT-\TSS of $\mc G$.

Theorems \ref{thm:TSS_hard}, \ref{thm:LD_min_hard}, and \ref{thm:MT_min_hard}, together with Theorem \ref{thm:FT_tob_hard} prove that finding a minimum $\textsc{d}$-\TSS of a temporal graph is an \NP-hard problem for each $\textsc{d}\neq$\EA.

\begin{remark}\label{rem:TISS}
Similarly to what done for \TOBs, we could define a $\textsc{d}$-Temporal In-Spanning Subgraph ($\textsc{d}$-\TISS) with root $r$  of a temporal graph $\mc G=(V,A,\tau)$ as the subgraph where, for all $ v\in V$, there exists a temporal $(v,r)$-walk realizing $ \textsc{d}\sub {\mc G}(r,v)$. If in addition it has the least number possible of temporal arcs, then we call it a \emph{minimum} $\textsc{d}$-\TISS. We then may ask what is the complexity of finding a minimum $\textsc{d}$-\TISS of a temporal graph. 
Thanks to the transformation $\cir$ presented in Definition \ref{def:reverse} and by following the lines of the proof of Proposition \ref{prop:equiv_tobtib}, it is easy to show that, given a subgraph $\mc G'$ of $\mc G$:
\begin{itemize}
       \item $\mc G'$ is a minimum \EA-\TISS of ${\mc G}$ if and only if $\mc G'^{\cir}$ is a minimum \LD-\TSS of $\mathcal{G}^{\circlearrowleft}$;
    \item $\mc G'$ is a minimum \LD-\TISS of ${\mc G}$ if and only if $\mc G'^{\cir}$ is a minimum \EA-\TSS of $\mathcal{G}^{\circlearrowleft}$;
    \item For each $\textsc{d}\!\in\! \{\FT,\MT,\MW,\ST \} $, $\mc G'$ is a minimum $\textsc{d}$-\TISS of ${\mc G}$ if and only if $\mc G'^{\cir}$ is a minimum $\textsc{d}$-\TSS of $\mathcal{G}^{\circlearrowleft}$.
\end{itemize}
This, together with Theorems \ref{thm:FT_tob_hard}, \ref{thm:TSS_hard}, \ref{thm:LD_min_hard}, and \ref{thm:MT_min_hard}, implies that for all $\textsc{d}\neq $ \LD, finding a minimum $\textsc{d}$-\TISS of a temporal graph is an \NP-hard problem, while a minimum \LD-\TISS is computable in polynomial time (Table \ref{tab:results}). 
\end{remark}

\section{Final remarks and conclusions}
We have showed that for \textsc{d} $\in\{$\LD,\MT,\ST$\}$, a spanning \textsc{d}-\TOB does not always exist, but computing a \textsc{d}-\TOB that spans the maximum number of vertices can be done in polynomial-time. Moreover, the overall complexity of the algorithms mainly depends on the complexity of computing the single source distances from the root to all the other vertices. 
In contrast, when \textsc{d} $\in\{$\FT, \MW$\}$, finding a spanning \textsc{d}-\TOB becomes \NP-complete. We then introduced a \textsc{d}-\TSS of a temporal graph as being a temporal subgraph containing a walk that realizes $\textsc{d}(r,v)$ for each vertex $v$, and we investigated the complexity of deciding the existence of a \textsc{d}-\TSS with at most $k$ arcs, for a given $k$. We showed that, for any distance $\textsc{d}\neq \EA$, finding such subgraph is an \NP-complete problem. Same results can be proven for  \textsc{d}-\TISS.
We highlight that all the hardness results of this paper (Theorems \ref{thm:FT_tob_hard}, \ref{thm:TSS_hard}, \ref{thm:LD_min_hard} and \ref{thm:MT_min_hard}) can be modified to meet the condition that the temporal graph in the input must have as underlying structure a digraph\footnote{Sometimes these temporal graphs are called \emph{simple}.} (instead of a multi-digraph), i.e.\ for every pair of vertices $u$ and $v$ we require at most one temporal arc having tail $u$ and head $v$. These modifications are summarized in the following remark.

\begin{remark}\label{rem:digraph}
List of the modifications to be implemented in the reductions of the corresponding theorems to meet the condition that the temporal graph in the input must have as underlying structure a digraph.\\
\emph{Theorem \ref{thm:FT_tob_hard}:} In the case $el(a)=0$ for all $a\in A$, set $V = X\cup \{y^1_1,\dots ,y^1_n\}\cup \{y^2_1,\dots ,y^2_n\} \cup C\cup \{r\}$, and for each variable $x_i$ add the temporal arcs $(r,y^1_i,1,1)$, $(y^1_i,x_i,1,1)$, $(r,y^2_i,2,2)$ and $(y^2_i,x_i,2,2)$. The temporal arcs connecting the variables to the clauses remain the same. 
In the case $el(a)=1$ for all $a\in A$, consider the same vertex set $V$ and add the temporal arcs $(r,y^1_i,1,2)$, $(y^1_i,x_i,2,3)$, $(r,y^2_i,2,3)$ and $(y^2_i,x_i,3,4)$, while augmenting by $1$ the starting and arrival times of the temporal arcs connecting the variables to the clauses.\\ 
\emph{Theorem \ref{thm:MT_min_hard}:} no modifications needed, the temporal graph built in the reduction is already simple.\\
\emph{Theorem \ref{thm:TSS_hard}:} subdivide the arcs connecting the root to a vertex. Specifically, replace each arc $(r,x^{\alpha}_i,3,3)$ by the arcs $(r,y^{\alpha}_i,3,3)$ and $(y^{\alpha}_i,x^{\alpha}_i,3,3)$, while the rest remains the same.\\
\emph{Theorem \ref{thm:LD_min_hard}:} In the case $el(a)=0$ for all $a\in A$, replace each arc $(r,x^{\alpha}_i,2,2)$ by the arcs $(r,y^{\alpha}_i,2,2)$ and $(y^{\alpha}_i,x^{\alpha}_i,2,2)$. In the case $el(a)=1$ for all $a\in A$, replace each arc $(r,x^{\alpha}_i,2,3)$ by the arcs $(r,y^{\alpha}_i,2,3)$ and $(y^{\alpha}_i,x^{\alpha}_i,3,4)$ and each arc $(r,x^{\alpha}_i,1,2)$ by the arcs $(r,z^{\alpha}_i,1,2)$ and $(y^{\alpha}_i,z^{\alpha}_i,2,3)$. All the other arcs have both their starting and arrival times increased by one.
\end{remark}

\noindent Notice that in the cases where the elapsed time are all equal to $1$, the above transformations make the temporal graph have lifetime $\tau=5$. We leave open the question whether the same complexity holds for simple temporal graphs with $\tau=4$.

The hardness results presented in this paper are clearly an issue when it comes to implement solutions for real life networks. In this direction, it would be of interest to study whether some of these networks, like public transport networks, present particular structures in their topology in order to restrict the analysis to a specific class of temporal graphs, for which the results might become tractable.




\bibliographystyle{plain}
\bibliography{bibliography}

\begin{thebibliography}{10}

\bibitem{AKRIDA201946}
E.~C. Akrida, J.~Czyzowicz, L.~Gasieniec, L.~Kuszner, and P.~G. Spirakis.
\newblock Temporal flows in temporal networks.
\newblock {\em J. of Computer and System Sciences}, 103:46--60, 2019.

\bibitem{AKRIDA2017}
E.~C. Akrida, L.~Gasieniec, G.~B. Mertzios, and P.~G. Spirakis.
\newblock The complexity of optimal design of temporally connected graphs.
\newblock {\em Theory Comput Syst}, 61:907–944, 2017.

\bibitem{Amoruso2020}
M.~Amoruso, D.~Anello, V.~Auletta, and D.~Ferraioli.
\newblock Contrasting the spread of misinformation in online social networks.
\newblock {\em J. of Artificial Intelligence Research}, 69:847–879, 2020.

\bibitem{Bentert2020}
M.~Bentert, A.S. Himmel, A.~Nichterlein, and R.~Niedermeier.
\newblock Efficient computation of optimal temporal walks under waiting-time constraints.
\newblock {\em App. Net. Sci.}, 5, 2020.

\bibitem{Brunelli2023}
F.~Brunelli and L.~Viennot.
\newblock Minimum-cost temporal walks under waiting-time constraints in linear time.
\newblock {\em arXiv:2211.12136}, 2023.

\bibitem{Bubboloni2023}
D.~Bubboloni, C.~Catalano, A.~Marino, and A.~Silva.
\newblock On computing optimal temporal branchings.
\newblock In {\em Fundamentals of Computation Theory}, pages 103--117, 2023.

\bibitem{Calamai2022}
M.~Calamai, P.~Crescenzi, and A.~Marino.
\newblock On computing the diameter of (weighted) link streams.
\newblock {\em {ACM} J. Exp. Algorithmics}, 27:4.3:1--4.3:28, 2022.

\bibitem{marino2021}
V.~A. Campos, R.~Lopes, A.~Marino, and A.~Silva.
\newblock Edge-disjoint branchings in temporal graphs.
\newblock {\em Electronic J. of Combinatorics}, 28, 2020.

\bibitem{casteigts2018finding}
A.~Casteigts.
\newblock Finding structure in dynamic networks.
\newblock {\em arXiv:1807.07801}, 2018.

\bibitem{casteigts2023}
A.~Casteigts and T.~Corsini.
\newblock In search of the lost tree: Hardness and relaxation of spanning trees in temporal graphs.
\newblock {\em arXiv:2312.06260}, 2023.

\bibitem{cook1971complexity}
S.A. Cook.
\newblock The complexity of theorem-proving procedures.
\newblock In {\em Proceedings of the third annual ACM symposium on Theory of computing}, pages 151--158, 1971.

\bibitem{Cormen2001}
T.H. Cormen, C.E. Leiserson, and R.L. Rivest.
\newblock {\em Introduction to Algorithms}.
\newblock McGraw–Hill. MIT Press, third ed. edition, 2001.

\bibitem{DELIGKAS2022}
A.~Deligkas and I.~Potapov.
\newblock Optimizing reachability sets in temporal graphs by delaying.
\newblock {\em Information and Computation}, 285:104890, 2022.

\bibitem{dibbelt2018}
J.~Dibbelt, T.~Pajor, B.~Strasser, and D.~Wagner.
\newblock Connection scan algorithm.
\newblock {\em ACM J. Exp. Algorithmics}, 23, 2018.

\bibitem{gabow1986}
H.~N. Gabow, Z.~Galil, T.~Spencer, and R.E. Tarjan.
\newblock Efficient algorithms for finding minimum spanning trees in undirected and directed graphs.
\newblock {\em Combinatorica}, 6:109–122, 1986.

\bibitem{gallier2011}
J.~Gallier.
\newblock {\em Discrete Mathematics}.
\newblock Universitext. Springer, 2011.

\bibitem{gunturi2021}
V.~Gunturi, S.~Shekhar, and A.~Bhattacharya.
\newblock Minimum spanning tree on spatio-temporal networks.
\newblock In {\em Database and Expert Systems Applications}, pages 149--158, Berlin, Heidelberg, 2010. Springer Berlin Heidelberg.

\bibitem{HOLME2012}
P.~Holme and J.~Saramäki.
\newblock Temporal networks.
\newblock {\em Phys. Reports}, 519(3):97--125, 2012.

\bibitem{huang2015}
S.~Huang, A.~W.-C. Fu, and R.~Liu.
\newblock Minimum spanning trees in temporal graphs.
\newblock In {\em ACM SIGMOD Intern. Conference on Management of Data}, pages 419--430, 2015.

\bibitem{KAMIYAMA2015321}
N.~Kamiyama and Y.~Kawase.
\newblock On packing arborescences in temporal networks.
\newblock {\em Information Processing Letters}, 115(2):321--325, 2015.

\bibitem{KEMPE2002}
D.~Kempe, J.~Kleinberg, and A.~Kumar.
\newblock Connectivity and inference problems for temporal networks.
\newblock {\em Journal of Computer and System Sciences}, 64(4):820--842, 2002.

\bibitem{Kuwata2009}
Y.~Kuwata, L.~Blackmore, M.T. Wolf, N.~Fathpour, C.E. Newman, and A.~Elfes.
\newblock Decomposition algorithm for global reachability analysis on a time-varying graph with an application to planetary exploration.
\newblock {\em Intelligent Robot and Syst.}, 2009.
\newblock 3955-3960.

\bibitem{latapy2018}
M.~Latapy, T.~Viard, and C.~Magnien.
\newblock {Stream Graphs and Link Streams for the Modeling of Interactions over Time}.
\newblock {\em {Social Network Analysis}}, 8(1):61:1--61:29, 2018.

\bibitem{GAO2019}
G.~Leitao, Z.~Guangshe, L.~Guoqi, L.~Yuming, H.~Jiangshuai, and D.~Lei.
\newblock Containment control of directed networks with time-varying nonlinear multi-agents using minimum number of leaders.
\newblock {\em Physica A: Statistical Mechanics and its Applications}, 526:120859, 2019.

\bibitem{levin1973universal}
L.A. Levin.
\newblock Universal sequential search problems.
\newblock {\em Problemy peredachi informatsii}, 9(3):115--116, 1973.

\bibitem{li2004}
N.~Li and J.C. Hou.
\newblock Topology control in heterogeneous wireless networks: problems and solutions.
\newblock In {\em IEEE INFOCOM 2004}, volume~1, page 243, 2004.

\bibitem{Marino2023}
A.~Marino and A.~Silva.
\newblock Eulerian walks in temporal graphs.
\newblock {\em Algorithmica}, 85:805–830, 2023.

\bibitem{Mertzios2019}
G.B. Mertzios, O.~Michail, and P.G. Spirakis.
\newblock Temporal network optimization subject to connectivity constraints.
\newblock {\em Algorithmica}, 81:1416–1449, 2019.

\bibitem{Michail2016}
O.~Michail.
\newblock An introduction to temporal graphs: An algorithmic perspective.
\newblock {\em Internet Mathematics}, 12(4), 7 2016.

\bibitem{nicosia2012}
V.~Nicosia, J.~Tang, M.~Musolesi, G.~Russo, C.~Mascolo, and V.~Latora.
\newblock {Components in time-varying graphs}.
\newblock {\em Chaos: An Interdisciplinary Journal of Nonlinear Science}, 22(2), 2012.

\bibitem{Ranshous2015}
S.~Ranshous, S.~Shen, D.~Koutra, S.~Harenberg, C.~Faloutsos, and N.F. Samatova.
\newblock Anomaly detection in dynamic networks: a survey.
\newblock {\em WIREs Computational Statistics}, 7(3):223--247, 2015.

\bibitem{Tang2010}
J.~K. Tang, C.~Mascolo, M.~Musolesi, and V.~Latora.
\newblock Exploiting temporal complex network metrics in mobile malware containment.
\newblock {\em 2011 IEEE International Symposium on a World of Wireless, Mobile and Multimedia Networks}, pages 1--9, 2010.

\bibitem{wu2014}
H.~Wu, J.~Cheng, S.~Huang, Y.~Ke, Y.~Lu, and Y.~Xu.
\newblock Path problems in temporal graphs.
\newblock {\em Proc. VLDB Endow.}, 7(9):721–732, 2014.

\bibitem{wu2016}
H.~Wu, J.~Cheng, Y.~Ke, S.~Huang, Y.~Huang, and H.~Wu.
\newblock Efficient algorithms for temporal path computation.
\newblock {\em Knowledge and Data Eng.}, 28(11):2927--2942, 2016.

\bibitem{yue2020}
P.~Yue, Q.~Cai, W.~Yan, and W.-X. Zhou.
\newblock Information flow networks of chinese stock market sectors.
\newblock {\em IEEE Access}, 8:13066--13077, 2020.

\bibitem{ZSCHOCHE2020}
P.~Zschoche, T.~Fluschnik, H.~Molter, and R.~Niedermeier.
\newblock The complexity of finding small separators in temporal graphs.
\newblock {\em J. of Comp. and Syst. Sci.}, 107:72--92, 2020.

\end{thebibliography}

\end{document}